\documentclass[envcountsame,envcountsect,draft]{llncs}

\usepackage{mathpartir}
\usepackage{amsmath,amssymb,stmaryrd,amsbsy}
\usepackage{upgreek}
\usepackage{graphicx}
\usepackage{rotating}
\usepackage{xspace}
\usepackage[all]{xy}
\usepackage{array}
\usepackage{fancyhdr}
\usepackage{url}

\usepackage{float}
\floatstyle{boxed} 
\restylefloat{figure}

\usepackage{etapsFs}
\usepackage{jbw-thm}

\date{2011-10-14}

\title{Expansion for Universal Quantifiers}

\author{Sergue\"i Lenglet \inst{1} \thanks{The author is supported by the
    Alain Bensoussan Fellowship Programme} \and J. B. Wells \inst{2}}

\institute{University of Wroc\l{}aw \and Heriot-Watt University}

\begin{document}

\pagestyle{fancy}  	 
\lhead{} 	
\chead{} 	
\rhead{} 	
\lfoot{} 
\cfoot{\thepage} 	
\rfoot{} 	
\renewcommand{\headrulewidth}{0pt}

\maketitle
\thispagestyle{fancy}  	 

\begin{abstract}
  \emph{Expansion} is an operation on typings (i.e., pairs of typing
  environments and result types) defined originally in type systems for the
  $\uplambda$-calculus with intersection types in order to obtain principal (i.e.,
  most informative, strongest) typings.  In a type inference scenario, expansion
  allows postponing choices for whether and how to use non-syntax-driven typing
  rules (e.g., intersection introduction) until enough information has been
  gathered to make the right decision. Furthermore, these choices can be
  equivalent to inserting uses of such typing rules at deeply nested positions
  in a typing derivation, without needing to actually inspect or modify (or even
  have) the typing derivation. Expansion has in recent years become simpler due
  to the use of \emph{expansion variables} (e.g., in System E).

  This paper extends expansion and expansion variables to systems with
  $\forall$-quantifiers.  We present \systemfs, an extension of System F with
  expansion, and prove its main properties. This system turns type inference
  into a constraint solving problem; this could be helpful to design a modular
  type inference algorithm for System F types in the future.
\end{abstract}

\section{Introduction}

\subsection{Background and Motivation}

\paragraph*{Polymorphism and principal typings.} Many practical uses of type
systems require \emph{polymorphism}, i.e., the possibility to reuse a generic
piece of code with different types. Type systems most commonly provide
polymorphism through $\forall$-\emph{quantifiers}, like in the Hindley-Milner
(HM) type system \cite{Milner78} and in System F \cite{Reynolds74,Girard72}, but
can also use other methods like \emph{intersection types}
\cite{CoppoDV80}. Systems with $\forall$-quantifiers assign general type schemes
that can be instantiated to more specific types; for example, the identity
function can be typed with $\forall \tvar.(\fun \tvar \tvar)$, and then used
with types $\fun{\mathsf{int}}{\mathsf{int}}$ or
$\fun{\mathsf{real}}{\mathsf{real}}$ when applied respectively to an integer or
a real. Systems with intersection types list the different usage types of a
term; if the identity function is applied exactly twice in a code fragment, once to an integer and
once to a real, then its type will be $(\fun{\mathsf{int}}{\mathsf{int}}) \inter
(\fun{\mathsf{real}}{\mathsf{real}})$.

Type systems with $\forall$-quantifiers are very popular, but they often lack
\emph{principal typings} \cite{Wells02}, i.e., strongest, most informative
typings (a typing is usually a pair of a type environments and a result type).
Wells \cite{Wells02} proved that HM and System F do not have principal
typings. It is important not to confuse this notion with the (weaker) one of
``principal types'' defined for the HM type system in which typable terms admit
a strongest result type for each fixed type environment. Principal typings are
crucial for \emph{compositional} type inference, where types for terms are found
using only the analysis results of the immediate sub-components, which can be
inspected independently and in any order. Compositional type inference helps in
performing separate analysis of program modules, and therefore helps with
separate compilation. Note that the Damas-Milner algorithm \cite{DamasM82} for
HM is not fully compositional: to give a type for a let-binding
$\mathsf{let}~x=e_1~\mathsf{in}~e_2$, the algorithm must infer first a type for
$e_1$, and then use the result to type $e_2$.

\paragraph*{Expansion and expansion variables.} In contrast, type systems with
intersection types usually have principal typings \cite{CoppoDV80}. In such
systems, admissible typings are obtained from a principal one using
\emph{expansion} (in addition to substitution and weakening). We present this
mechanism through an example, taken from \cite{CarlierW05}. Consider the
following $\lambda$-terms:
$$\term_1 = \lamb \expvar {x ~ (\lamb y {y ~ z})} \qquad \term_2 = \lamb g
{\lamb x {g ~ {(g ~ x)}}}$$
One can compute the following principal typings for these terms in the
type system of Coppo, Dezani, and Veneri \cite{CoppoDV80}. 
\begin{eqnarray*}
  \term_1 &:& \langle \envenum {\eassign z a} \vdash \underbrace{(\fun{(\fun
      {(\fun a b)} b)} c)}_{\typ_1} \rightarrow c \rangle \\
  \term_2 &:& \langle \emptyenv \vdash \underbrace{\fun{((\fun e f)
      \inter (\fun d e))}{(\fun d f)}}_{\typ_2} \rangle 
\end{eqnarray*}
Following \cite{CarlierW05}, we write $\term \mathrel : \typing \envvar \typ$
for the assignment of type $\typ$ under type environment $\envvar$ (often written
$\envvar \vdash \term \mathrel : \typ$ in the literature). To type the application $\term_1 ~
\term_2$, we must somehow ``unify'' $\typ_1$ and $\typ_2$. We cannot do this
by simple type substitutions, replacing type variables by types; we have a clash
between type $\fun {(\fun a b)} b$ and type $(\fun e f) \inter (\fun d e)$. We
cannot unify these types by removing the intersection, using
idempotence $\typ \inter \typ = \typ$; we would have to solve the equation $\fun
a b=b$, which does not have a solution in absence of recursive types.

This inference problem can be solved by introducing an intersection in the
typing of $\term_1$, using \emph{expansion}.
$$
\term_1 \mathrel : \langle \envenum {\eassign z {a_1 \inter a_2}} \vdash 
(\fun{(\fun
    {(\fun {a_1}{b_1})}{b_1} \inter \fun
    {(\fun {a_2}{b_2})}{b_2})} c) \rightarrow c \rangle
$$
We can then unify the two types as required by applying the substitution $e :=
\fun{a_1}{b_1} , \assign f {b_1}, \assign d {\fun {a_2}{\fun {a_1}{b_1}}}, b_2
:= \fun {a_1}{b_1}, \assign c {\fun {(\fun {a_2}{\fun{a_1}{b_1}})}{b_1}}$

The expansion operation simulates on typings the use of an intersection
introduction typing rule at a nested position in the typing derivation. The
above expansion on the typing of $\term_1$ transforms the typing derivation on
the left in the figure below into the derivation on the right (we write $\app$
for the application typing rule, $\uplambda$ and $\inter$ for respectively
abstraction and intersection introductions),

\begin{center}
\newcolumntype{V}{>{\centering\arraybackslash} m{4.5cm}}
\newcolumntype{W}{>{\centering\arraybackslash} m{1cm}}
\newcolumntype{X}{>{\centering\arraybackslash} m{6cm}}
\begin{tabular}{VWX}
  $
  \xymatrix@C=0pt@R=2pt{ &\lamb x {} \ar@{-}[d]& \\
    & \app \ar@{-}[dr] \ar@{-}[dl] & \\
    x^{\fun{(\fun {(\fun a b)} b)} c} & & \lamb y {} \ar@{-}[d] \\
    & & \app \ar@{-}[dr] \ar@{-}[dl] \\
    & y^{\fun a b} & & z^a}
  $
  & $\longrightarrow$ &
  $
  \xymatrix@C=0pt@R=2pt{& &\lamb x {} \ar@{-}[d]& \\
    & & \app \ar@{-}[drr] \ar@{-}[dl] & \\
    & x^{\fun \typ c} & & & \inter \ar@{-}[drr] \ar@{-}[dll] \\ 
    & & \lamb y {} \ar@{-}[d] & & & & \lamb y {} \ar@{-}[d] \\
    & & \app \ar@{-}[dr] \ar@{-}[dl] & & & & \app \ar@{-}[dr]
    \ar@{-}[dl] \\
    & y^{\fun {a_1}{b_1}} & & z^{a_1} & 
    & y^{\fun{a_2}{b_2}} & & z^{a_2}}
  $
\end{tabular}
\end{center}
where $\typ = (\fun {(\fun {a_1}{b_1})} {b_1}) \inter (\fun {(\fun {a_2}{b_2})}
{b_2})$.

Earlier definitions of expansion \cite{CoppoDV80,RonchiV84} are quite difficult
to follow and to implement. \emph{Expansion variables} (or E-variables) were
introduced by Kfoury and Wells in System I \cite{KfouryW04} to simplify
expansion application. The construct has then been improved in System E
\cite{CarlierPWK04}. An E-variable $\evar$ is a placeholder for unknown uses of
typing rules such as $\inter$-introduction. For example, the following typing
derivation for $\term_1$
$$
  \xymatrix@C=0pt@R=2pt{ &\lamb x {} \ar@{-}[d]& \\
    & \app \ar@{-}[dr] \ar@{-}[dl] & \\
    x^{\fun{\etyp \evar {(\fun {(\fun a b)} b)}} c} & & \evar \ar@{-}[d] \\
    & & \lamb y {} \ar@{-}[d] \\
    & & \app \ar@{-}[dr] \ar@{-}[dl] \\
    & y^{\fun a b} & & z^a}
$$
generates this typing:
$$
  \term_1 \mathrel : \typing {\envenum {\eassign z {\etyp \evar a}}}{\fun{(\fun{\etyp
      \evar {(\fun
      {(\fun a b)} b)}} c)} c}
$$
Note that the variable $\evar$ is introduced in the result type as well as in
the type environment. One can then perform the previous expansion by replacing
$\evar$ by the \emph{expansion term} $(\assign \tvar {\tvar_1} \sepsubs \assign b
{b_1}) \inter (\assign \tvar {\tvar_2} \sepsubs \assign b {b_2})$, which 
introduces an intersection $\inter$ at the $\evar$ position and applies a
different substitution for each branch of the intersection. We then obtain the
desired typing with intersection, given above.

\paragraph*{Motivation.} The idea behind expansion is fairly general, even if it
has been defined only in systems with intersection types. It allows postponing
the uses of non-syntactic typing rules, i.e., rules that are not driven by the
syntax of terms, such as $\inter$-introduction, but also $\forall$-introduction
and $\forall$-elimination. This is helpful in type inference scenarios:
constructor introductions or eliminations can be delayed until all necessary
information has been gathered. In the above example, we introduce an
intersection in the typing of $\term_1$ only when we have to, when applying
$\term_1$ to $\term_2$. We want to bring this possibility of delaying the choice
of uses of typing rules to type system with $\forall$-quantifiers, to see how
(compositional) type inference could benefit from this property. We present an
extension of System F with an expansion mechanism, called \systemfs. Before
going into the details of its syntax in Section \ref{s:syntax}, we first
informally introduce \systemfs and point out the main differences between its
expansion mechanism and the one of System E.

\subsection{Overview of \systemfs}

\paragraph*{Quantifier introduction.} Assume that we have the following
typings for the terms $\term_1$ and $\term_2$ given above.
\begin{eqnarray*}
  \term_1 &:& \langle \envenum {\eassign z a} \vdash \underbrace{(\fun{(\fun
      {(\fun a b)} b)} c)}_{\typ_1} \rightarrow c \rangle \\
  \term_2 &:& \langle \emptyenv \vdash \underbrace{\fun{(\forall e.{(\fun {(\fun
          d e )} e)})}{\fun {(\fun d {\fun d f})} f}}_{\typ_2} \rangle 
\end{eqnarray*}
Suppose we have forgotten $\term_1$ and $\term_2$ (e.g., we have already
compiled them and discarded the source code), and we want to type the
application $\term_1 ~ \term_2$. We need to ``unify'' $\typ_1$ and $\typ_2$. We
cannot unify $\fun {(\fun a b)} b$ and $\forall e.{(\fun {(\fun d e )} e)}$
using only type substitutions, because of the $\forall$-quantifier. This
$\forall$-quantifier is necessary, because the term $g$ is used twice in
$\term_2$ with different usage types. We can solve this problem by introducing
in $\typ_1$ a $\forall$-quantifier over $b$, the scope of which encompasses
$\fun {(\fun a b)} b$. To this end, we introduce an \emph{expansion variable}
$\subv$ at the required position in the typing of $\term_1$ (we use $\subv$
instead of $\evar$ to avoid confusion with the E-variables of System E).
$$
\term_1 \mathrel : \typing {\envenum {\eassign z a}}{\fun{(\fun{\styp \subv
      {\{\tvar \}} {(\fun {(\fun a b)} b)}} c)} c}
$$
Unlike expansion variables in System E, $\subv$ is not introduced in the type
environment; the application of $\subv$ to the typing is asymmetric. We discuss
the role of the superscript $\{ \tvar \}$ below. A $\forall$-quantifier over $b$
can be introduced at the position we want by replacing $\subv$ by the
\emph{expansion term} $\forall b$. This operation corresponds to the following
transformation on derivation trees
\begin{center}
\newcolumntype{V}{>{\centering\arraybackslash} m{5.5cm}}
\newcolumntype{W}{>{\centering\arraybackslash} m{0.75cm}}
\newcolumntype{X}{>{\centering\arraybackslash} m{5.5cm}}
\begin{tabular}{VWX}
$
  \xymatrix@C=0pt@R=2pt{ &\lamb x {} \ar@{-}[d]& \\
    & \app \ar@{-}[dr] \ar@{-}[dl] & \\
    x^{\fun{\styp \subv {\{\tvar\}} {(\fun {(\fun a b)} b)}} c} & & \subv^{\{
      \tvar\} } \ar@{-}[d] \\
    & & \lamb y {} \ar@{-}[d] \\
    & & \app \ar@{-}[dr] \ar@{-}[dl] \\
    & y^{\fun a b} & & z^a}
$
& $\longrightarrow$ &
$
  \xymatrix@C=0pt@R=2pt{& \lamb x {} \ar@{-}[d]& \\
    & \app \ar@{-}[dr] \ar@{-}[dl] & \\
    x^{\fun{(\forall b. (\fun {(\fun a b)} b))} c}
    & & \forall b. \ar@{-}[d] \\
    & & \lamb y {} \ar@{-}[d] \\
    & & \app \ar@{-}[dr] \ar@{-}[dl] \\
    & y^{\fun a b} & & z^a} 
$
\end{tabular}
\end{center}
and generates the typing
$$\term_1 \mathrel : \langle \envenum {\eassign z a} \vdash (\fun{\forall b. (\fun
    {(\fun a b)} b)} c) \rightarrow c \rangle$$
as wished. We can then unify $\fun{\forall b. (\fun {(\fun a b)} b)} c$
with $\typ_2$, by substituting $d$ for $a$ and $\fun{(\fun d {\fun d f})}
f$ for $c$. The key point is we can get the new typing without
needing to build the typing derivation (or have any memory of $\term_1$).

When we introduce a $\forall$-quantifier, we forbid any quantification over type
variables that are free in the type environment. To take this into account, we
keep the set of free variables of the environment as a parameter of the
E-variable. For example, when we introduce $\subv$ in the typing of $\term_1$,
$\tvar$ is the only free variable occurring in the environment; we remember the
set $\{ \tvar \}$ in $\subv^{\{ \tvar \}}$. This prevents any illegal
quantification from happening; replacing $\subv$ by the expansion $\forall
\tvar$ does not introduce a quantification over $\tvar$ in this case and leaves
the typing judgement unchanged.

\paragraph*{Subtyping.} E-variables can be used to perform subtyping as
well. Consider System~F $\forall$-elimination as a subtyping relation: $\forall
\tvar.\typ_1 \leq \subst {\assign \tvar {\typ_2}}{\typ_1}$. Let $\envvar =
\envenum{\eassign {\mathsf{choose}}{\forall \tvar.{(\fun \tvar {\fun \tvar
        \tvar})}} \sepenv \eassign {\mathsf{id}}{\forall \tvar.{(\fun \tvar
      \tvar)}}}$ and suppose we want to type the application $\term =
\mathsf{choose}~\mathsf{id}$ under $\envvar$ (this example is taken from
\cite{BotlanR09}). We can derive the typing $\typing \envvar {\fun{(\forall
    \tvar.{(\fun \tvar \tvar)})}{(\forall \tvar.{(\fun \tvar \tvar)})}}$ for
$\term$; however if we want to apply $\term$ to a term of type $\fun b b$, we
have to redo the type inference on $\term$ to obtain the needed typing $\typing
\envvar {\fun{(\fun b b)}{(\fun b b)}}$.

To avoid this, we add an E-variable $\subv$ on top of the type of $\mathsf{id}$;
we obtain the following typing derivation (nodes marked with a type represent
uses of subtyping, i.e., in our case, instantiations of $\forall$-quantifiers)
$$
\xymatrix@C=0pt@R=2pt{& \app \ar@{-}[dr] \ar@{-}[dl] \\
  \fun \typ {\fun \typ \typ}
  \ar@{-}[d] & & \subv^\emptyset
  \ar@{-}[d] \\
  \mathsf{choose}^{\forall \tvar.{(\fun \tvar {\fun \tvar \tvar})}} & &
  \mathsf{id}^{\forall \tvar.{(\fun \tvar \tvar)}} 
    }
$$
with $\typ = \styp \subv \emptyset {\forall \tvar.(\fun \tvar \tvar)}$, giving typing 
$$\term \mathrel : \typing \envvar {\fun{(\styp \subv \emptyset {\forall
    \tvar.{(\fun \tvar \tvar)}})}{(\styp \subv \emptyset {\forall \tvar.{(\fun \tvar 
      \tvar)}})}}
$$
If we want to apply $\term$ to a term $\term'$ of type $\fun b b$, we utilize 
expansion to introduce the use of subtyping $\forall \tvar.{(\fun \tvar \tvar)}
\leq \fun b b$ at the $\subv$ position in the typing tree. In the process, the
type $\fun \typ {\fun \typ \typ}$ is updated into $\fun{(\fun b b)}{\fun{(\fun b
    b)}{(\fun b b)}}$. We obtain
$$
\xymatrix@C=0pt@R=2pt{& \app \ar@{-}[dr] \ar@{-}[dl] \\
  \fun{(\fun b b)}{\fun {(\fun b b)}{(\fun b b)}}
  \ar@{-}[d] & & \fun b b \ar@{-}[d] \\
  \mathsf{choose}^{\forall \tvar.{(\fun \tvar {\fun \tvar \tvar})}} & &
  \mathsf{id}^{\forall \tvar.{(\fun \tvar \tvar)}} 
    }
$$
with typing $\term \mathrel : \typing \envvar {\fun{(\fun b b)}{(\fun b b)}}$,
and we can then type $\term ~\term'$. In fact, the expansion mechanism for
subtyping introduction does not depend on the definition of $\leq$, and
therefore we keep \systemfs parametric in its subtyping relation.

\subsection{Summary of contributions}

We define \systemfs and present its principal properties. Improvements over
previous work are as follows:
\begin{enumerate}
\item \systemfs is the first type system with an expansion mechanism for
  $\forall$-quantifiers, where we can delay $\forall$-introduction and uses of
  subtyping with expansion.
\item \systemfs extends the notion of expansion; we introduce a new expansion
  mechanism with its corresponding (asymmetric) E-variables, which differ
  greatly from the ones of System E \cite{CarlierPWK04}.
\item We prove that we can generate all \systemfs judgements from a
  \emph{initial skeleton}, an incomplete typing derivation with constraints that
  need to be solved. This property is a (weaker) form of principality (Theorem
  \ref{t:init-skel}).
\item \systemfs is parametric in its subtyping relation; by using different
  subtyping relations (such as System F type application or Mitchell's relation
  \cite{Mitchell88}), one can change the typing power of \systemfs without
  modifying the typing rules or judgements.
\item \systemfs turns type inference into a type constraint solving
  problem. We believe it can be helpful to reason about modular type inference,
  even if we do not provide a constraint solving algorithm yet. 
\end{enumerate}
The proofs are available in the appendices.

\section{Syntax}
\label{s:syntax}

\begin{figure}[t]
\begin{minipage}[t]{0.26\textwidth}
$$\begin{array}{rllll}
  \expvar & \in & \setexpvar  & \hspace{-2em} ::= & \ev i \\
  \tvar, b & \in & \settvar & \hspace{-2em} ::=& \tv i \\
  \subv & \in & \setsubv & \hspace{-2em} ::= & \sv i \\
  \settv & \in & \finiteset \settvar
\end{array}$$
\end{minipage}
\begin{minipage}[t]{0.73\textwidth}
$$\begin{array}{rllll}
  \term & \in & \setterm & ::= & \expvar \limit \lamb \expvar \term \limit
  \term_1 \app \term_2 \\
  \typ & \in & \settype & ::= & \tvar \limit \fun{\typ_1}{\typ_2} \limit \uquant
  \tvar  \typ \limit \styp \subv \settv \typ \\
  \svar & \in & \setsubs & ::=& \subs {\assign \tvar \typ} \svar \limit \subs{\assign \subv
    \Ivar} \svar \limit \idsubs \\
  \Ivar & \in & \setasymexp &::= & \idi \limit \iquant \tvar \Ivar \limit \iapp
  \subv \settv \Ivar \limit \isub \Ivar \typ \\
  \cons & \in & \setcons &::= & \consomega \limit \typ_1 \consleq
  \typ_2 \limit \cons_1 \consinter \cons_2 \limit \consquant \tvar \cons \limit \conss \subv \settv \typ \cons \\
  \envvar & \in & \settypenv & ::= & \emptyenv \limit \envvar \sepenv \eassign
  \expvar \typ \\
  \skel & \in & \setskel &::= & \skelvar \expvar \envvar \limit \skelabs \expvar
  \skel \limit \skel_1 \skelapp \skel_2 \limit \skelquant \tvar  \skel \limit \skels \subv \settv \skel
  \limit \sktsub \skel \typ
\end{array}
$$
\end{minipage}
\caption{Syntax grammars and metavariable conventions}
\label{f:syntax}
\end{figure}

Fig. \ref{f:syntax} defines the grammars and metavariable conventions of the
entities used in this paper. Let $i$, $j$, $m$, $n$ range over natural
numbers. Given a set $X$, we write $\finiteset X$ for the set of finite subsets
of $X$. We distinguish between the metavariables $\expvar$, $\tvar$, $\subv$,
and the concrete variables $\ev i$, $\tv i$, $\sv i$. The (non-standard) symbol
$\app$ used for application helps in reading skeletons, and we keep it for terms
for consistency. We explain the role of constraints ($\cons$) and skeletons
($\skel$) in Section~\ref{s:typing-rules}, and the syntax of expansion terms
($\Ivar$) and substitutions ($\svar$) in Section \ref{s:subs-exp-application}.

\paragraph*{Precedence.} To reduce parenthesis usage, we define precedence for
operators and operations defined later (such as substitution and expansion
applications $\subst \svar \typ$ and $\Inst \Ivar \settv \typ$) in the following
order, from highest to lowest: $\styp \subv \settv \typ$, $\uquant \tvar 
\typ$, $\subst \svar \typ$, $\Inst \Ivar \settv \typ$,
$\fun{\typ_1}{\typ_2}$. For example, $\fun{\subst \svar {\typ_1}}{\styp \subv
  \settv {\typ_2}}=\fun{(\subst \svar {\typ_1})}{(\styp \subv \settv {\typ_2})}$
and $\fun{\uquant \tvar  \tvar}{\uquant \tvar  \tvar}=\fun{(\uquant
  \tvar  \tvar)}{(\uquant \tvar  \tvar)}$. Furthermore, the function
type constructor is right-associative, so that
$\fun{\typ_1}{\fun{\typ_2}{\typ_3}}=\fun{\typ_1}{(\fun{\typ_2}{\typ_3})}$, and
the application is left-associative, so that $\term_1 \app \term_2 \app \term_3
= (\term_1 \app \term_2) \app \term_3$.

\paragraph*{Equalities and $\alpha$-conversion.} We allow $\alpha$-conversion
of bound variables in types (where $\uquant \tvar \typ$ binds $\tvar$),
skeletons (where $\skelabs \expvar \skel$ binds $\expvar$ and $\skelquant \tvar
\skel$ binds $\tvar$), and constraints (where $\consquant \tvar \cons$ binds
$\tvar$). Note that $\tvar$ is not bound in the expansion term $\iquant \tvar
\Ivar$, and therefore it cannot be $\alpha$-converted. 

We equate types up to reordering of adjacent $\forall$-quantifiers (so $\uquant
{\tvar_1}{\uquant {\tvar_2}{_2}\typ} = \uquant {\tvar_2}{\uquant {\tvar_1}
  \typ}$), and suppression of dummy quantifiers (if $\tvar$ is not free in
$\typ$, then $\uquant \tvar \typ = \typ$). We also enforce the following
equalities on constraints
$$
\begin{array}{rllrllrll}
  \consquant \tvar {(\cons_1 \consinter \cons_2)} &=& (\consquant \tvar
  {\cons_1}) \consinter (\consquant \tvar {\cons_2}) & \quad
  \cons \consinter \cons &=& \cons & \quad \cons \consinter
  \consomega &=& \cons \\
  \conss \subv \settv \typ {(\cons_1 \consinter \cons_2)} &=& (\conss \subv
  \settv \typ {\cons_1}) \consinter (\conss \subv \settv \typ {\cons_2}) &
  \quad \cons_1 \consinter \cons_2 &=& \cons_2 \consinter \cons_1 \\
  \cons_1 \consinter (\cons_2 \consinter \cons_3) &=& (\cons_1 \consinter
  \cons_2) \consinter \cons_3 & \consquant \tvar
  \cons &=& \multicolumn{4}{l}{\cons \mbox { if $\tvar$ is not free in $\cons$}}
\end{array}
$$

\paragraph*{Auxiliary notations and functions.} Let $\fv \term$ be the set of
free variables of $\term$, defined in the usual way. The free type variables of
a type, an expansion, and a substitution are defined as follows.
$$
\begin{array}{llllll}
  \ftv \tvar & = & \{ \tvar \} & \quad \ftv \idi &=& \emptyset \\
  \ftv {\fun{\typ_1}{\typ_2}} & = & \ftv{\typ_1} \cup \ftv{\typ_2} & \quad
  \ftv{\isub \Ivar \typ} &=& \ftv \Ivar \cup \ftv \typ \\
  \ftv {\uquant \tvar  \typ} & = & \ftv \typ \setminus \{ \tvar \} & \quad \ftv
  {\iquant \tvar \Ivar} &=& \ftv \Ivar \cup \{ a \} \\
  \ftv {\styp \subv \settv \typ} & = & \ftv \typ \cup \settv & \quad \ftv {\iapp
    \subv \settv \Ivar} &=& \ftv \Ivar \cup \settv \\
  \multicolumn{6}{c}{
    \begin{array}{lll}
      \ftv \idsubs & =& \emptyset \\
  \ftv{\assign \tvar \typ \sepsubs \svar} &=&\{ \tvar \} \cup \ftv \typ \cup \ftv \svar \\
  \ftv{\assign \subv \Ivar \sepsubs \svar} &=&\ftv \Ivar \cup \ftv \svar
  \end{array}
  }
\end{array}
$$

\section{Typing rules}
\label{s:typing-rules}

\begin{figure}[t]
\begin{mathpar}
  \inferrule
  {}
  {\jtyp {\skelvar \expvar \envvar} \expvar {\envvar}{\apply \envvar \expvar}
    \consomega}~~\TRvar
  \and
  \inferrule
  {\jtyp \skel \term {\eextend \envvar {\eassign \expvar {\typ_1}}} {\typ_2} \cons}
  {\jtyp {\skelabs \expvar \skel}{\lamb \expvar \term} \envvar
    {\fun{\typ_1}{\typ_2}} \cons}~~\TRabs
  \and
  \inferrule
  {\jtyp {\skel_1}{\term_1}{\envvar}{\fun{\typ_1}{\typ_2}}{\cons_1} \\ 
    \jtyp {\skel_2}{\term_2}{\envvar}{\typ_1}{\cons_2} }
  {\jtyp {\skel_1 \skelapp \skel_2}{\term_1 \app \term_2}{\envvar}{\typ_2}{(\cons_1
      \consinter \cons_2)}}~~\TRapp
  \and
  \inferrule
  {\jtyp \skel \term \envvar \typ \cons \\ \tvar \notin \ftv \envvar}
  {\jtyp {\skelquant \tvar  \skel} \term \envvar {\uquant \tvar 
      \typ}{\consquant \tvar \cons}}~~\TRforall
  \and
  \inferrule
  {\jtyp \skel \term \envvar \typ \cons \\ \ftv \envvar \subseteq \settv}
  {\jtyp {\skels \subv \settv \skel} \term \envvar {\styp \subv \settv
      \typ}{\conss \subv \settv \typ \cons}}~~\TRsintro
  \and
  \inferrule
  {\jtyp \skel \term \envvar {\typ_1} \cons}
  {\jtyp {\sktsub \skel {\typ_2}} \term {\envvar}{\typ_2}{(\cons \consinter
      (\typ_1 \consleq \typ_2)})}~~\TRsub
\end{mathpar}
\caption{Typing rules}
\label{f:typing-rules}
\end{figure}

A type environment $\envvar$ (defined in Fig. \ref{f:syntax}) is a list of
assignments which maps term variables to types. When writing a non-empty
environment, we allow omitting the leading symbols ``$\emptyenv$,''.  A type environment is
\emph{well-formed} iff it does not mention twice the same term
variable. Henceforth, we consider only well-formed type environments. For
$\envvar = \envenum {\eassign{\expvar_1}{\typ_1} \sepenv \ldots \sepenv
  \eassign{\expvar_n}{\typ_n}}$, we define
  $\envvar(\expvar_i)=\typ_i$ for $i\in\{1\ldots n\}$,
  $\ftv \envvar = \bigcup_{i \in \{1
  \ldots n\}} \ftv{\typ_i}$, and $\support \envvar = \{\expvar_1 \ldots
\expvar_n \}$. 

The typing rules of \systemfs (Fig. \ref{f:typing-rules}) derive judgements of
the form $\jtyp \skel \term \envvar \typ \cons$, where constraints that need to
be solved (by type inference) are accumulated in $\cons$. A constraint of the
form $\typ_1 \consleq \typ_2$ is called \emph{atomic}. By including constraints
in judgements, we can use the same rules for type checking and type
inference. If the constraint is \emph{solved} w.r.t. some subtyping relation,
then the judgement acts as a regular typing judgement, assigning \emph{typing}
$\typing \envvar \typ$ to the untyped term $\term$. 

A skeleton $\skel$ is just a \emph{proof term}, a compact piece of syntax which
represents a complete typing derivation. A skeleton $\skel$ is \emph{valid} iff
there exist $\term$, $\envvar$, $\typ$, and $\cons$ such that $\jtyp \skel \term
\envvar \typ \cons$. Henceforth, we consider only valid skeletons. All
components of a judgement $\jtyp \skel \term \envvar \typ \cons$ are uniquely
determined by $\skel$, therefore we can define functions $\mathsf{rtype}$ and
$\mathsf{tenv}$ such that $\rtype \skel = \typ$ and $\tenv \skel =
\envvar$. Skeletons replace typing derivation trees in formal statements. For
example, $\skelabs \expvar {\sktsub{(\skelvar \expvar {\envenum{\eassign \expvar
        {\uquant \tvar \tvar}}})}{\fun{(\uquant \tvar \tvar)} b} \skelapp
  \skelvar \expvar {\envenum{\eassign \expvar {\uquant \tvar \tvar}}}}$
represents the following derivation.
\begin{mathpar}
  \inferrule{
    \inferrule*{\jtypnoskel \expvar {\envenum{\eassign \expvar {\uquant \tvar 
            \tvar}}}{\uquant \tvar  \tvar}/\consomega}
    {\jtypnoskel {\expvar}{\envenum{\eassign \expvar {\uquant \tvar 
            \tvar}}} {\fun{(\uquant \tvar 
          \tvar)} b}/(\uquant \tvar \tvar \consleq \fun{(\uquant \tvar \tvar)}
      b)}
    \\
    \inferrule{}
    {\jtypnoskel \expvar {\envenum{\eassign \expvar {\uquant \tvar 
            \tvar}}}{\uquant \tvar  \tvar}/\consomega}
  }
  {\inferrule 
    {\jtypnoskel {\expvar \app \expvar}{\envenum{\eassign \expvar {\uquant \tvar 
            \tvar}}} b/(\uquant \tvar \tvar \consleq \fun{(\uquant \tvar \tvar)}
      b)}
    {\jtypnoskel {\lamb \expvar {\expvar \app \expvar}} \emptyenv {\fun{(\uquant \tvar 
          \tvar)} b}/(\uquant \tvar \tvar \consleq \fun{(\uquant \tvar \tvar)} b)}}
\end{mathpar}
In examples, we sometimes omit skeletons and constraints when they are not
relevant, writing $\term \mathrel : \typing \envvar \typ$ iff there exists
$\skel$, $\cons$ such that $\jtyp \skel \term \envvar \typ \cons$.

\begin{remark}
  A variable skeleton $\skelvar \expvar \envvar$ remembers a type environment
  $\envvar$ and not simply the type of $\expvar$ to be able to type a variable
  $\expvar$ in a term $\lamb \expvar \term$ such that $\expvar \notin \fv
  \term$. For example, we have $\jtyp {\skelabs \expvar {\skelvar y {\eassign
        \expvar a \sepenv \eassign y b}}}{\lamb \expvar y}{\eassign y b}{\fun a
    b} \consomega$.

  We could have used $\uplambda$-terms with only type annotations on
  bindings, like many other systems, but our skeletons are
  also useful because they uniquely represent entire typing derivations (judgement trees).
  We also prefer our skeletons because
  a goal for future work is a system containing both
  System E and \systemfs (cf.\ Section \ref{s:conclusion}), and
  our format of skeleton is better suited for the intersection
  introduction typing rule of System E, as discussed in \cite{WellsH02}.
\end{remark}

Rules \TRvar, \TRabs, and \TRapp are classic. The subtyping rule \TRsub
generates a new atomic constraint, the meaning of which depends on the chosen
subtyping relation (cf. solvedness definition in Section
\ref{s:solvedness}).  Rule \TRforall introduces a $\forall$-quantifier over
$\tvar$, provided that $\tvar$ is not free in $\envvar$. Note that $\tvar$ may
occur free in $\cons$; we use an existential quantifier $\consquant \tvar \cons$
to bind it, as solvedness requires $\cons$ to be solved for some $\tvar$
(cf. Section \ref{s:solvedness}), and not for all possible instantiations of
$\tvar$, as a $\forall$-binder would suggest.

Rule \TRsintro introduces an expansion variable $\subv$ to mark a position in
the derivation tree where a $\forall$-quantifier can be added or where subtyping
can be used. Because a quantification over a free variable of $\envvar$ is not
allowed (rule \TRforall), the E-variable remembers an over-approximation
$\settv$ of $\ftv \envvar$, which is used by the expansion mechanism to prevent
any illegal $\forall$-introduction from happening. The type $\typ$ mentioned in
$\conss \subv \settv \typ \cons$ can be used during expansion to generate an
atomic constraint $\typ \consleq \typ'$ if needed. We explain the expansion
mechanism in detail in the next section.

\begin{remark}
  The rule \TRvar may also introduce E-variables, as for example in $\jtyp{\skelvar
    \expvar {\eassign \expvar {\styp \subv \emptyset \tvar}}} \expvar {\eassign
    \expvar {\styp \subv \emptyset \tvar}}{\styp \subv \emptyset \tvar}
  \consomega$. In this case, performing expansion at the position of $\subv$ does not
  correspond to a use of rules \TRforall or \TRsub, and the set $\settv$ of type
  variables remembered by $\subv$ can be any set. Indeed we can derive
  $\jtyp{\skelvar \expvar {\eassign \expvar {\styp \subv \settv \tvar}}} \expvar
  {\eassign \expvar {\styp \subv \settv \tvar}}{\styp \subv \settv \tvar}
  \consomega$ for any $\settv$.
\end{remark}

\begin{remark}
  In rule \TRsintro, we can remember a set bigger than $\ftv \envvar$ for subject
  reduction to hold. For example, consider the following judgement
  $$\jtyp \skel {(\lamb \expvar y) \app
    \lamb \expvar \expvar}{\eassign y b}{\styp \subv {\{\tvar, b\}} b}
  {\conss \subv {\{\tvar, b\}} b \consomega}$$ 
  with $\skel = (\skelabs \expvar \skels \subv {\{\tvar, b\}}
  {\skelvar y {\eassign \expvar \fun \tvar \tvar \sepenv \eassign y b}}) \app
  \skelabs \expvar {\skelvar \expvar{\eassign \expvar \tvar \sepenv \eassign y
      b}}$. The term $(\lamb \expvar y) \app
  \lamb \expvar \expvar$ reduces to $y$, and to derive 
  $$\jtyp {\skels \subv {\{\tvar, b\}}{\skelvar y {\eassign y b}}} y {\eassign y
    b}{\styp \subv {\{\tvar, b\}} b}{\conss \subv {\{\tvar, b\}} b \consomega},$$ 
  we have to be able to mention $\tvar$ even if it does not appear in $\eassign y b$.
\end{remark}

\begin{figure}[t]
$$
\begin{array}{llllll}
  \Inst \idi \settv \tsk &=& \tsk & \quad \Instc \idi \settv \typ \cons &=& \cons \\
  \Inst {\iapp \subv {\settv'} \Ivar} \settv \tsk & = & \styp \subv {\settv \cup
    \settv'}{(\Inst \Ivar {\settv'} \tsk)} & \quad \Instc {\iapp \subv
    {\settv'} \Ivar} \settv \typ \cons & = & \conss \subv {\settv \cup 
    \settv'}{\Inst \Ivar \settv \typ}{(\Instc \Ivar \settv \typ \cons)} \\
  \Inst {\iquant \tvar \Ivar} \settv \tsk &=& 
  \left\{
    \begin{array}{ll}
      \uquant \tvar  {\Inst \Ivar \settv \tsk} & \mbox{if } \tvar \notin \settv \\
      \Inst \Ivar \settv \tsk & \mbox{otherwise}
    \end{array}
  \right. & \quad \Instc {\iquant \tvar \Ivar} \settv \typ \cons &=& 
  \left\{
    \begin{array}{ll}
      \consquant \tvar {\Instc \Ivar \settv \typ \cons}  & \mbox{if } \tvar \notin \settv \\
      \Instc \Ivar \settv \typ \cons & \mbox{otherwise}
    \end{array}
  \right.\\
  \Inst {\isub  \Ivar {\typ_2}} \settv {\typ_1} &=& \typ_2 & \quad \Instc {\isub
    \Ivar {\typ_2}} \settv {\typ_1} \cons &=& (\Instc \Ivar \settv 
  {\typ_1} \cons) \consinter ((\Inst \Ivar \settv {\typ_1}) \consleq \typ_2) \\
  \Inst {\isub  \Ivar {\typ_2}} \settv \skel &=& \sktsub {(\Inst \Ivar \settv \skel)}{\typ_2}
\end{array}
$$
\caption{Expansion application}
\label{f:exp}
\end{figure}

\section{Substitution and expansion}
\label{s:subs-exp-application}

\subsection{Expansion application}
\label{s:exp-applcation}

The syntax of expansion terms is given in Fig. \ref{f:syntax}. Let $\tsk$ range
over types and skeletons.  Fig. \ref{f:exp} defines the application of expansion
to types, skeletons, and constraints. When applied to a type or a skeleton, the
expansion mechanism relies on a set of type variables $\settv$, used in 
introductions of E-variable and $\forall$-quantifier; when applied to a
constraint, it requires an extra parameter (a type) to generate an appropriate
atomic constraint if needed.  Each construct of expansion terms corresponds to the application
of a non-syntactic typing rule, except for the null expansion $\idi$, which
leaves unchanged the entities it is applied to.

E-variable and $\forall$-quantifier expansions behave the same on types,
skeletons, and constraints. Applied with parameter $\settv$, the expansions
$\iapp \subv {\settv'} \Ivar$ and $\iquant \tvar \Ivar$ first execute $\Ivar$
and then introduce an E-variable $\subv$ (with set $\settv \cup \settv'$ of
variables that cannot be quantified) and a quantifier over $\tvar$ (iff
$\tvar \notin \settv$), respectively. When applied to all parts
of a judgement $\jtyp \skel
\term \envvar \typ \cons$, we must have $\ftv \envvar \subseteq \settv$ for
these operations to be sound w.r.t. rules \TRsintro and \TRforall (cf. Lemma
\ref{l:asym-exp-typings}).

The expansion $\isub \Ivar {\typ_2}$ first applies $\Ivar$ and then performs
subtyping with $\typ_2$, as we can see in the skeleton case. When applied to a
type, only the subtyping step matters, and we simply obtain $\typ_2$. Finally,
the constraint case $\cons$ requires an extra parameter $\typ_1$ to generate a
new atomic constraint. In practice, $\typ_1$ will be the result type of the
judgement $\jtyp \skel \term \envvar {\typ_1} \cons$ from which $\cons$ comes.
When $\isub \Ivar {\typ_2}$ is applied to the above judgement, $\Ivar$ is
applied first, in particular to the type $\typ_1$. To take this into account,
the generated constraint is $(\Inst \Ivar \settv {\typ_1}) \consleq \typ_2$ (and
not simply $\typ_1 \consleq \typ_2$).

Expansion is sound w.r.t. to the type system of \systemfs.
\begin{lemma}
  \label{l:asym-exp-typings}
  If $\jtyp \skel \term \envvar \typ \cons$ and $\ftv \envvar \subseteq \settv$,
  then $\jtyp{\Inst \Ivar \settv \skel} \term {\envvar}{\Inst \Ivar \settv
    \typ}{\Instc \Ivar \settv \typ \cons}$.
\end{lemma}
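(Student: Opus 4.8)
The plan is to prove the statement by structural induction on the expansion term $\Ivar$, since the three operations $\Inst \Ivar \settv \skel$, $\Inst \Ivar \settv \typ$, and $\Instc \Ivar \settv \typ \cons$ of Fig.~\ref{f:exp} all recurse on the head constructor of $\Ivar$ while carrying $\skel$, $\typ$, and $\cons$ along. Because the set parameter changes in the recursive calls, I would state the induction hypothesis with the set (and with $\typ$, $\cons$) universally quantified: for every proper sub-expansion $\Ivar'$ and every set $\settv$ with $\ftv \envvar \subseteq \settv$, the conclusion holds. The invariant that makes the argument go through is that none of the four constructors ever touches $\term$ or $\envvar$, mirroring the fact that rules \TRforall, \TRsintro, and \TRsub all leave the term and environment unchanged; hence the side hypothesis $\ftv \envvar \subseteq \settv$ always concerns the same fixed $\envvar$.

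The straightforward cases are $\idi$ and $\isub$. For $\Ivar = \idi$ every operation is the identity and the goal is literally the hypothesis. For $\Ivar = \isub \Ivar' {\typ_2}$ I apply the induction hypothesis to $\Ivar'$ to obtain a derivation of $\jtyp {\Inst \Ivar' \settv \skel} \term \envvar {\Inst \Ivar' \settv \typ}{\Instc \Ivar' \settv \typ \cons}$ and then close with one use of \TRsub against $\typ_2$; comparing the conclusion of \TRsub with Fig.~\ref{f:exp} shows that the resulting skeleton $\sktsub{(\Inst \Ivar' \settv \skel)}{\typ_2}$, result type $\typ_2$, and constraint $(\Instc \Ivar' \settv \typ \cons) \consinter ((\Inst \Ivar' \settv \typ) \consleq \typ_2)$ are exactly the expansions of $\isub \Ivar' {\typ_2}$. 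This case carries no side condition.

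For $\Ivar = \iquant \tvar \Ivar'$ I split on whether $\tvar \in \settv$. If $\tvar \in \settv$, all three operations drop the quantifier and the goal reduces to the induction hypothesis verbatim. If $\tvar \notin \settv$, I apply the induction hypothesis and close with \TRforall, whose side condition $\tvar \notin \ftv \envvar$ follows from $\tvar \notin \settv$ together with $\ftv \envvar \subseteq \settv$; the $\uquant \tvar \cdot$ and $\consquant \tvar \cdot$ wrappers produced by the rule match the defining equations. The remaining case $\Ivar = \iapp \subv {\settv'} \Ivar'$ is handled analogously: apply the induction hypothesis to $\Ivar'$ and close with \TRsintro, introducing $\subv$ with set $\settv \cup \settv'$, whose side condition $\ftv \envvar \subseteq \settv \cup \settv'$ is immediate from $\ftv \envvar \subseteq \settv$.

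The main obstacle is exactly the variable-set bookkeeping in this last case (and in any \TRforall nested inside $\Ivar'$). To match the defining equation the recursive instantiation of $\Ivar'$ is carried out with the set $\settv'$ recorded in the expansion term, so for the universally quantified induction hypothesis to apply I must establish $\ftv \envvar \subseteq \settv'$, which does not follow from $\ftv \envvar \subseteq \settv$ alone; equivalently, a nested $\forall$-introduction inside $\Ivar'$ decides whether to quantify over $\tvar$ by testing $\tvar \notin \settv'$, whereas soundness via \TRforall demands $\tvar \notin \ftv \envvar$. Reconciling these requires that the set threaded into the recursion still over-approximates $\ftv \envvar$ — either by reading the recursion set as $\settv \cup \settv'$ (which contains $\settv \supseteq \ftv \envvar$) or by carrying a well-formedness assumption ensuring $\ftv \envvar \subseteq \settv'$. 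Once this invariant is secured, every other obligation is a mechanical comparison against the equations of Fig.~\ref{f:exp}.
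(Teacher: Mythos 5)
Your proposal is correct and takes essentially the same route as the paper's proof: induction on $\Ivar$, with each case closed by one application of \TRsub, \TRsintro, or \TRforall, and the side conditions $\ftv \envvar \subseteq \settv \cup \settv'$ and $\tvar \notin \ftv \envvar$ discharged from $\ftv \envvar \subseteq \settv$ exactly as you describe. The obstacle you flag in the $\iapp \subv {\settv'}{\Ivar'}$ case is an inconsistency in Fig.~\ref{f:exp} rather than a gap in your argument: the constraint clause of that figure and the paper's own proof both recurse on $\Ivar'$ with the outer set $\settv$ (not with $\settv'$), which is the first of your two proposed resolutions and makes the universally quantified induction hypothesis apply directly.
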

Expansion operates only at the top-level of the typing judgement in Lemma
\ref{l:asym-exp-typings}; in order to expand at a deeply nested position, we
have to replace an E-variable $\subv$ by an expansion $\Ivar$, as
explained in the next section.

\subsection{Substitution application}
\label{s:subs-application}

\begin{figure}[t]
$$
\begin{array}{llllll}
  \multicolumn{3}{l}{\mbox{\textbf{Metavariables}}} 
  & \quad  \subst \svar {\skelvar \expvar \envvar} & = & \skelvar \expvar {\subst \svar
    \envvar} \\
  \vv ::= \tvar \limit \subv & & 
    & \quad \subst \svar {\skelabs \expvar \skel} &=& \skelabs \expvar {\subst \svar
    \skel}\\
  \tels ::= \typ \limit \Ivar & & & \quad \subst \svar {(\skel_1 \skelapp \skel_2)}
  &=& (\subst \svar {\skel_1}) \skelapp (\subst \svar {\skel_2}) \\
  \multicolumn{3}{l}{\mbox{\textbf{Substitution application}}} & 
  \quad    \subst \svar
  {(\skels \subv \settv \skel)} &=& \Inst {\subst \svar \subv}{\ftv{\subst \svar
      \settv}}{\subst \svar \skel}\\
  \subst \idsubs \tvar & = &\tvar & \quad  \subst \svar {\skelquant \tvar
    \skel} &=& \skelquant {\tvar}{}{\subst \svar \skel} \mbox{ if } \tvar \notin \ftv\svar  \\
  \subst \idsubs \subv & = & \iapp \subv \emptyset \idi & \quad \subst \svar
  {(\sktsub \skel \typ)} &=& \sktsub {\subst \svar \skel}{\subst \svar \typ}\\
  \subst {\subs {\assign \vv \tels} \svar} \vv & = & \tels \\
  \subst {\subs {\assign \vv \tels} \svar} {\vv'} & = & \subst \svar {\vv'} \mbox{ if } \vv
  \neq \vv' & \quad \subst \svar {(\typ_1 \consleq \typ_2)} &=&  \subst \svar {\typ_1} \consleq
  \subst \svar {\typ_2} \\ 
  & & & \quad  \subst \svar \consomega &=& \consomega \\
  \subst \svar {(\styp \subv \settv \typ)} &=& \Inst {\subst \svar
    \subv}{\ftv{\subst \svar \settv}}{\subst
    \svar \typ}  & \quad \subst \svar {(\conss \subv \settv \typ \cons)} &=& \Instc {\subst \svar
    \subv}{\ftv{\subst \svar \settv}}{\subst \svar \typ}{\subst \svar \cons} \\
  \subst \svar {\uquant \tvar  \typ} &=& \uquant \tvar  {\subst \svar
    \typ} \mbox{ if } \tvar \notin \ftv \svar & \quad \subst \svar {\consquant
    \tvar \cons} &=& \consquant \tvar  {\subst \svar
    \cons} \mbox{ if } \tvar \notin \ftv \svar \\
  \subst \svar {(\fun{\typ_1}{\typ_2})} & =& \fun{\subst \svar {\typ_1}}{\subst
    \svar{\typ_2}} & \quad \subst \svar {(\cons_1 \consinter \cons_2)} &=& (\subst \svar {\cons_1})
  \consinter (\subst \svar {\cons_2})
\end{array}
$$
\caption{Substitution application}
\label{f:subs}
\end{figure}

Substitutions (defined in Fig. \ref{f:syntax}) are lists of assignments that map
type variables to types $(\assign \tvar \typ)$ and E-variables to
expansions $(\assign \subv \Ivar)$, ended by the symbol $\idsubs$. Application of
substitutions to type variable sets $\settv$ and type environments $\envvar$ is
pointwise. Given a finite set of types $\{ \typ_1 \ldots \typ_n \}$, we define
$\ftv {\{ \typ_1 \ldots \typ_n \}}$ as $\bigcup_{i \in \{1 \ldots n\}}
\ftv{\typ_i}$. Fig. \ref{f:subs} defines application of substitutions to
variables, types, skeletons, and constraints.

A substitution $\svar$ generates a type $\typ$ (resp. an expansion $\Ivar$) when
applied to a type variable $\tvar$ (resp. to an E-variable $\subv$). A
substitution may contain several assignments for the same variable, as in $\svar
= (\assign \tvar {\typ_1} \sepsubs \assign \tvar {\typ_2} \sepsubs \idsubs)$; in
this case, only the first one is considered. We choose this design for
simplicity; an alternate solution would be to syntactically prevent repetitions
in the substitution definition, but the definition would then become more
complex for no obvious gain.

The application of substitutions to types $\styp \subv \settv \typ$ is the most
important case.
$$\subst \svar {(\styp \subv \settv \typ)} = \Inst {\subst \svar
  \subv}{\ftv{\subst \svar \settv}}{\subst \svar \typ}$$ The substitution
$\svar$ is first applied to $\subv$, which gives us an expansion $\Ivar=\subst
\svar \subv$, which is then applied to the type $\subst \svar \typ$. We remember
that $\settv$ is (an over-approximation of) the set of free type variables that
cannot be quantified over, because they appear in the type environment
at the time the variable $\subv$ is introduced. If $\svar$ replaces a variable
$\tvar \in \settv$ by a type $\typ'$, then $\typ'$ now appears in the type
environment, and its free variables cannot be quantified over. This explains why
we have to apply the expansion $\Inst {\subst \svar \subv}{\ftv{\subst \svar
    \settv}}{\subst \svar \typ}$ with the set $\ftv {\subst \svar \settv}$ and
not simply with the set $\settv$. The application of $\svar$ to skeletons
$\skels \subv \settv \skel$ and to constraints $\conss \subv \settv \typ \cons$
follows the same pattern.

\begin{example}
  Let $\term = \lamb \expvar {\expvar \app y}$. We have
  $$\jtypnoskel {\term}{\eassign y \tvar}
  {\styp \subv {\{\tvar \}} {(\fun{(\fun \tvar b)} b)}}$$
  Applying $\svar_1 = (\assign \tvar {\fun{\tvar_1}{\tvar_2}} \sepsubs \idsubs)$ 
  to this typing gives us 
  $$
    \jtypnoskel{\term} {\eassign y {\fun
        {\tvar_1}{\tvar_2}}}
  {\styp \subv {\{\tvar_1, \tvar_2 \}} {(\fun{(\fun
        {(\fun{\tvar_1}{\tvar_2})}b)} b)}}
  $$
  Then applying $\svar_2 = (\assign \subv {\iquant b \idi} \sepsubs \idsubs)$ gives us
  $$
  \jtypnoskel{\term}{\eassign y
      {\fun {\tvar_1}{\tvar_2}}} {\uquant b  {(\fun{(\fun
        {(\fun{\tvar_1}{\tvar_2})}b)} b)}}
  $$
  Note that the substitution $(\assign \subv {\iquant {\tvar'} \idi} \sepsubs
  \idsubs)$ would have left the last judgement unchanged if $\tvar' \in \{
  \tvar_1, \tvar_2\}$, and would have introduced a dummy quantifier if $\tvar'
  \notin \{ b, \tvar_1, \tvar_2 \}$. We can achieve the same effect as doing
  $\svar_1$ before $\svar_2$ by applying the substitution $\svar = (\assign
  \tvar {\fun{\tvar_1}{\tvar_2}} \sepsubs \assign \subv {\iquant b \idi}
  \sepsubs \idsubs)$ to the initial judgement.
\end{example}

\begin{example}
  Let $\typ = \uquant \tvar  {(\fun \tvar \tvar)}$. We have
  $$
  \jtyp
  {\skelabs \expvar {\skels \subv \emptyset {(\sktsub {(\skelvar
        \expvar {\eassign \expvar \typ})}{\fun{\typ}{\typ}} \app {\skelvar \expvar
           {\eassign \expvar \typ}})}}}
  {\lamb \expvar {\expvar \app \expvar}}{\emptyenv}
  {\fun \typ {\styp \subv \emptyset \typ}}
  {\conss \subv \emptyset \typ {(\typ \consleq \fun \typ \typ)}}
  $$
  Applying substitution $\svar = (\assign \subv {\isub \idi {\fun b b}} \sepsubs \idsubs)$ gives us
  $$\jtyp
  {\skelabs \expvar {\sktsub {(\sktsub {(\skelvar
        \expvar {\eassign \expvar \typ})}{\fun \typ \typ} \app {\skelvar \expvar
        {\eassign \expvar \typ}})}{\fun b b}}}
  {\lamb \expvar {\expvar \app \expvar}}{\emptyenv}
  {\fun{\typ}{\fun b b}}
  \cons
  $$
  where $\cons = (\typ \consleq \fun b b) \consinter (\typ \consleq \fun \typ \typ)$.
  Subtyping has been introduced at a nested position (under the
  $\lambda$), generating the expected constraint $\typ \consleq \fun b b$.
\end{example}

Substituting a variable $\subv$ by an expansion $\Ivar$ makes $\subv$
disappear. As a result, one can use the null expansion $\idi$ to delete an
E-variable $\subv$ from a type $\styp \subv \settv \typ$. If $\svar = (\assign
\subv \idi \sepsubs \idsubs)$, then $\subst \svar {(\styp \subv \settv \typ)}
=\Inst \idi \settv {\subst \svar \typ}=\subst \svar \typ$ (the occurrences of
$\subv$ in $\typ$ are also removed).
An expansion $\Ivar$ can be applied at the location of a variable $\subv$
without making $\subv$ disappear using the
substitution $\svar=(\assign \subv {\iapp \subv \emptyset \Ivar} \sepsubs
\idsubs)$. Indeed we have $ \subst \svar {(\styp \subv \settv \typ)} =\Inst
{\iapp \subv \emptyset \Ivar} \settv {\subst{\svar}{\typ}} =\styp \subv \settv
{\Inst \Ivar \settv {\subst{\svar}{\typ}}}$. The substitution $\idsubs$ is the
\emph{identity} substitution; it leaves variables, types, skeletons, and
constraints unchanged. For example, for E-variables, we have $\subst \idsubs
{(\styp \subv \settv \typ)} = \Inst {\iapp \subv \emptyset \idi} \settv {\subst
  \idsubs \typ} = \styp \subv \settv {\subst \idsubs \typ}$. The remaining cases
of substitution application are straightforward descending cases. The resulting
operation is sound w.r.t. \systemfs type system.

\begin{theorem}
  \label{t:subs-admissible}
  If $\jtyp \skel \term \envvar \typ \cons$ then $\jtyp{\subst \svar \skel}
  \term {\subst \svar \envvar}{\subst \svar \typ}{\subst \svar \cons}$.
\end{theorem}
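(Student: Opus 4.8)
The plan is to argue by induction on the structure of the skeleton $\skel$ — equivalently, on the derivation of $\jtyp \skel \term \envvar \typ \cons$, since by assumption all the other components of the judgement are determined by $\skel$ — with a case analysis on the last typing rule used. The substitution $\subst\svar\cdot$ is defined compositionally on every syntactic category (types, environments, skeletons, constraints), so in each case the goal is to push $\subst\svar$ through the conclusion of the rule, apply the induction hypothesis to the (substituted) premises, and then re-apply the same rule. The crucial observation is that the defining equations in Fig.~\ref{f:subs} were chosen precisely so that $\subst\svar$ commutes with the type and constraint constructors, and so that $\subst\svar$ acting on the E-variable constructs $\styp\subv\settv\typ$, $\skels\subv\settv\skel$, $\conss\subv\settv\typ\cons$ is realised through the expansion application of Fig.~\ref{f:exp}.

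The syntax-driven cases are routine. For \TRvar, we use $\subst\svar{(\skelvar\expvar\envvar)}=\skelvar\expvar{\subst\svar\envvar}$, the pointwise action on environments giving $\apply{\subst\svar\envvar}\expvar=\subst\svar{(\apply\envvar\expvar)}$, and $\subst\svar\consomega=\consomega$, so the conclusion is again an instance of \TRvar. For \TRabs and \TRapp we use $\subst\svar{(\fun{\typ_1}{\typ_2})}=\fun{\subst\svar{\typ_1}}{\subst\svar{\typ_2}}$, the fact that substitution extends an environment assignment homomorphically, and $\subst\svar{(\cons_1\consinter\cons_2)}=\subst\svar{\cons_1}\consinter\subst\svar{\cons_2}$; the induction hypothesis applied to the premises then yields the premises of the substituted rule instance. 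The \TRsub case is similar, using $\subst\svar{(\sktsub\skel{\typ_2})}=\sktsub{\subst\svar\skel}{\subst\svar{\typ_2}}$ and $\subst\svar{(\typ_1\consleq\typ_2)}=\subst\svar{\typ_1}\consleq\subst\svar{\typ_2}$, so that the newly generated atomic constraint is exactly the substituted one.

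The two genuinely interesting cases are \TRsintro and \TRforall. For \TRsintro the conclusion type is $\styp\subv\settv\typ$, and by definition $\subst\svar{(\skels\subv\settv\skel)}=\Inst{\subst\svar\subv}{\ftv{\subst\svar\settv}}{\subst\svar\skel}$ (and analogously for the type and the constraint). The induction hypothesis gives $\jtyp{\subst\svar\skel}\term{\subst\svar\envvar}{\subst\svar\typ}{\subst\svar\cons}$, and the result follows by applying Lemma~\ref{l:asym-exp-typings} to this judgement with expansion term $\subst\svar\subv$ and set $\ftv{\subst\svar\settv}$. The side condition of that lemma, $\ftv{\subst\svar\envvar}\subseteq\ftv{\subst\svar\settv}$, is obtained from the rule's hypothesis $\ftv\envvar\subseteq\settv$ via an auxiliary monotonicity fact: whenever $\ftv\typ\subseteq\settv'$ one has $\ftv{\subst\svar\typ}\subseteq\ftv{\subst\svar{\settv'}}$ (proved by a routine induction on $\typ$, with $\alpha$-renaming in the $\forall$ case), which lifts pointwise to environments. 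The \TRforall case uses the same auxiliary fact: after $\alpha$-converting the bound $\tvar$ to be fresh for $\svar$ (legal since $\tvar\notin\ftv\envvar$ by the rule's side condition, so the renaming touches only $\skel$, $\typ$, $\cons$), we have $\subst\svar{(\uquant\tvar\typ)}=\uquant\tvar{\subst\svar\typ}$ and $\subst\svar{(\consquant\tvar\cons)}=\consquant\tvar{\subst\svar\cons}$; the induction hypothesis and the preserved side condition $\tvar\notin\ftv{\subst\svar\envvar}$ (which holds because $\tvar\neq\tvar'$ for every $\tvar'\in\ftv\envvar$ and $\tvar\notin\ftv\svar$) let us re-apply \TRforall.

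I expect the main obstacle to be the \TRsintro case, because there the admissibility of substitution is not a mere commutation of constructors but genuinely relies on the soundness of expansion (Lemma~\ref{l:asym-exp-typings}): the definition of $\subst\svar$ on $\styp\subv\settv\typ$ turns a substitution for $\subv$ into an expansion applied at top level, and one must check that its set parameter $\ftv{\subst\svar\settv}$ is indeed an over-approximation of $\ftv{\subst\svar\envvar}$, so that no illegal quantification is created. The secondary delicate point is the bookkeeping of freshness and $\alpha$-conversion in \TRforall (and inside the auxiliary monotonicity lemma), where the side conditions $\tvar\notin\ftv\svar$ attached to the substitution equations for $\uquant\tvar\typ$, $\consquant\tvar\cons$, and $\skelquant\tvar\skel$ must be discharged uniformly.
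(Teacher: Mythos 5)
Your proposal is correct and follows essentially the same route as the paper: induction on $\skel$ with the syntax-driven cases handled by commutation of $\subst\svar$ with the constructors, the \TRforall case discharged by $\alpha$-converting the bound variable away from $\ftv\svar$, and the \TRsintro case reduced to the soundness of expansion (Lemma~\ref{l:asym-exp-typings}) applied with expansion term $\subst\svar\subv$ and set $\ftv{\subst\svar\settv}$. The only difference is that you make explicit the monotonicity fact $\ftv\envvar\subseteq\settv\Rightarrow\ftv{\subst\svar\envvar}\subseteq\ftv{\subst\svar\settv}$, which the paper asserts without separate proof.
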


\section{Initial Skeletons}
\label{s:principality}

\begin{figure}[t]
\begin{mathpar}
  \inferrule{}
  {\jinit \xtoa \expvar {\skels \subv {\ftv \xtoa}{\skelvar \expvar \xtoa}}}
  \and
  \inferrule
  {\jinit{\xtoa \sepxtoa \eassign \expvar \tvar} \term \skel \\ \subv \notin
    \allvar \skel \\ \settv = \ftv{\tenv{\skelabs \expvar \skel}}}
  {\jinit \xtoa  {\lamb \expvar \term}{\skels \subv \settv {(\lamb \expvar
        \skel)}}}
  \and
  \inferrule
  {\jinit \xtoa {\term_1}{\skel_1} \\ \jinit \xtoa {\term_2}{\skel_2} \\ \skel =
    \sktsub {\skel_1}{\fun{\rtype{\skel_2}} \tvar} \skelapp \skel_2 \\
    \settv = \ftv{\tenv \skel} \\\\
    (\allvar{\skel_1} \cap \allvar{\skel_2}) \setminus \ftv \xtoa = \emptyset \\
    \{ \tvar, \subv \} \cap (\allvar {\skel_1} \cup \allvar{\skel_2}) = \emptyset }
  {\jinit \xtoa  {\term_1 \app \term_2}{\skels \subv \settv \skel}}
  \and
  \inferrule
  {\jinit \xtoa \term \skel \\ \support \xtoa = \fv \term}
  {\jinitsk \term \skel}
\end{mathpar}
\caption{Initial skeletons of a term}
\label{f:init-skel}
\end{figure}

In this section, we prove that we can generate all \systemfs judgements for a
term $\term$ from an initial skeleton built from $\term$.

We first show that we can obtain \emph{relevant} skeletons; a skeleton $\skel$
such that $\jtyp \skel \term \envvar \typ \cons$ is relevant if $\fv \term =
\support \envvar$. In words, the type environment of a relevant skeleton does
not mention more term variables than necessary. A \emph{variable environment}
$\xtoa$ is a type environment which assigns type variables to expression
variables and such that for all $\expvar$, $y$ such that $\expvar \neq y$, we
have $\apply \xtoa \expvar \neq \apply \xtoa y$. We write $\allvar \skel$ for
the set of free type and E-variables occurring in $\skel$. Fig. \ref{f:init-skel}
defines a judgement $\jinitsk \term \skel$, which means that $\skel$ is an
\emph{initial skeleton} for $\term$. The main ideas behind this construct are as
follows: first, we type each variable in $\fv \term$ with a distinct type
variable (using the environment $\xtoa$ mentioned in the auxiliary judgement $\jinit \xtoa
\term \skel$). Then we introduce a (fresh) E-variable at every possible position
in the skeleton. Finally, we use subtyping to ensure that a term in a function
position in an application has an arrow type. Two initial skeletons for the same
term are equivalent up to renaming of their variables, as stated in the lemma
below (where we call an expansion of the form $\iapp \subv
\settv \idi$ an \emph{E-expansion}).

\begin{lemma}
  Let $\skel_1$, $\skel_2$ such that $\jinitsk \term {\skel_1}$ and $\jinitsk
  \term {\skel_2}$. There exists a substitution $\svar$ which maps type
  variables to type variables and E-variables to E-expansions such that $\skel_1
  = \subst \svar {\skel_2}$.
\end{lemma}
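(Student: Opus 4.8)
The plan is to prove a strengthened statement about the auxiliary judgement $\jinit \xtoa \term \skel$ and then specialize it. Concretely, I would show: whenever $\jinit{\xtoa_1}\term{\skel_1}$ and $\jinit{\xtoa_2}\term{\skel_2}$ with $\support{\xtoa_1}=\support{\xtoa_2}$, there is a substitution $\svar$ mapping type variables to type variables and E-variables to E-expansions such that (i) $\skel_1=\subst\svar{\skel_2}$ and (ii) $\subst\svar{\apply{\xtoa_2}\expvar}=\apply{\xtoa_1}\expvar$ for every $\expvar\in\support{\xtoa_2}$. The extra invariant (ii) is what makes the induction compositional: it pins down how $\svar$ must act on the type variables shared between sibling subderivations. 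The lemma then follows by instantiating with $\support{\xtoa_1}=\support{\xtoa_2}=\fv\term$, which both initial-skeleton derivations satisfy by the last rule of Fig.~\ref{f:init-skel}. I would run the induction on $\term$ (equivalently on the derivation of $\jinit{\xtoa_1}\term{\skel_1}$), which is legitimate because the rules are syntax-directed: the shape of an initial skeleton is forced by $\term$, and the only freedom lies in the names of the leaf type variables recorded in $\xtoa$, the fresh type variables $\tvar$ inserted by subtyping at application nodes, and the fresh E-variables introduced at every node.

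In the variable case both skeletons have the form $\skels{\subv_i}{\ftv{\xtoa_i}}{\skelvar\expvar{\xtoa_i}}$, and I would take $\svar$ to send each $\apply{\xtoa_2}{y}\mapsto\apply{\xtoa_1}{y}$ and $\subv_2\mapsto\iapp{\subv_1}\emptyset\idi$; a direct computation with the substitution clauses of Fig.~\ref{f:subs} and the expansion clauses of Fig.~\ref{f:exp} shows $\subst\svar{\skel_2}=\skel_1$, using in particular that $\ftv{\subst\svar{\ftv{\xtoa_2}}}=\ftv{\xtoa_1}$ and that an E-expansion merely renames $\subv$ while leaving the recomputed annotation alone. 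The abstraction case is handled by applying the induction hypothesis to the body under the extended variable environments $\xtoa_i\sepxtoa\eassign\expvar{\tvar_i}$ (which still have equal support), then extending the resulting substitution with the top renaming $\subv_2\mapsto\iapp{\subv_1}\emptyset\idi$; the freshness side-condition $\subv\notin\allvar\skel$ of the rule guarantees this extension does not disturb the body, and the annotation matches because $\settv_i=\ftv{\tenv{\skelabs\expvar{\skel_i}}}=\ftv{\xtoa_i}$, so that invariant (ii) from the body restricts to (ii) for the conclusion.

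The application case is where the real work lies and is the step I expect to be the main obstacle. Here both subterms are typed under the \emph{same} variable environment $\xtoa_i$, so the induction hypothesis yields two substitutions $\svar_a$ (for the function branch) and $\svar_b$ (for the argument branch), each satisfying invariant (ii), hence both acting identically on $\ftv{\xtoa_2}$. The plan is to glue them into one $\svar$ by letting $\svar$ agree with $\svar_a$ on the variables of $\skel_{2,1}$ and with $\svar_b$ on those of $\skel_{2,2}$; this is well defined precisely because the rule's disjointness hypothesis $(\allvar{\skel_1}\cap\allvar{\skel_2})\setminus\ftv\xtoa=\emptyset$ confines all disagreement to the shared variables of $\xtoa_2$, on which the two substitutions coincide by (ii). I would then adjoin the renamings $\tvar_2\mapsto\tvar_1$ and $\subv_2\mapsto\iapp{\subv_1}\emptyset\idi$ for the fresh type variable and E-variable of the node, which is legitimate by the other side-condition $\{\tvar,\subv\}\cap(\allvar{\skel_1}\cup\allvar{\skel_2})=\emptyset$. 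The one nonroutine point in checking $\subst\svar{\skel_2}=\skel_1$ is the subtyping annotation $\fun{\rtype{\skel_{2,2}}}{\tvar_2}$: I need $\subst\svar{\rtype{\skel_{2,2}}}=\rtype{\subst\svar{\skel_{2,2}}}=\rtype{\skel_{1,2}}$, i.e.\ that $\rtype$ commutes with substitution. This is exactly what Theorem~\ref{t:subs-admissible} gives, since the result type of $\subst\svar\skel$ is $\subst\svar{\rtype\skel}$, so the annotation is transported correctly; the outermost E-intro and its annotation are then verified as in the previous cases, completing the induction.
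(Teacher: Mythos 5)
The paper never actually proves this lemma: it is asserted in Section~\ref{s:principality} and no argument for it appears in the appendices, so there is no official proof to compare yours against. Judged on its own, your proof is correct and is the argument one would expect. The two ideas that carry it --- strengthening the statement to the auxiliary judgement $\jinit \xtoa \term \skel$ with the invariant that $\svar$ sends $\apply{\xtoa_2}\expvar$ to $\apply{\xtoa_1}\expvar$ (well defined because variable environments are injective), and using the disjointness side conditions of the application rule to glue the two branch substitutions --- are exactly what is needed; your handling of the annotations (every node carries $\settv_i=\ftv{\xtoa_i}$, which invariant (ii) transports correctly) and of the coercion type $\fun{\rtype{\skel_{2,2}}}{\tvar_2}$ via Theorem~\ref{t:subs-admissible} is also right. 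Two points deserve to be made explicit if you write this out in full. First, substitutions here are lists in which the first assignment wins, so realizing the ``glued'' substitution by concatenating $\svar_a$ and $\svar_b$ only behaves as claimed if each substitution produced by the induction hypothesis has support contained in the variables of its own sub-skeleton of the skeleton being substituted into; you should add a clause like $\support\svar \subseteq \allvar{\skel_2}$ to the induction invariant, exactly as the paper does in the analogous Lemma~\ref{l:generate-fs-from-init}. Second, invoking Theorem~\ref{t:subs-admissible} to obtain $\subst\svar{\rtype{\skel_{2,2}}}=\rtype{\subst\svar{\skel_{2,2}}}$ presupposes that initial skeletons are valid (i.e.\ derive some judgement); this is true --- the inserted subtyping node exists precisely to force an arrow type at applications --- but it needs its own short induction before $\rtype$ may be applied to them.
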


\begin{example}
  \label{ex:init-1}
  Let $\term = \lamb \expvar {\expvar \app \expvar}$. Then
  $$\skel = 
  \skels {\sv 3}{ \emptyset }{
    \skelabs \expvar {
      \skels {\sv 2}{ \{ \tv 0 \} }{(
        \sktsub {(\skels {\sv 0}{ \{\tv0 \} }{
            \skelvar \expvar {\eassign \expvar {\tv 0}}}
          )}{(\fun {\styp {\sv 1}{ \{\tv 0\} }{\tv 0}}{\tv 1})}
        \app
        \skels {\sv 1}{ \{\tv 0 \} }{
          \skelvar \expvar {\eassign \expvar {\tv 0}}}
        )}
    }
  }$$
  is an initial skeleton for $\term$ and we have
  $$\jtyp \skel \term {\emptyenv}{\styp
    {\sv 3} \emptyset {(\fun {\tv 0}{\styp {\sv 2}{ \{\tv 0 \}}{\tv
          1}})}}{\cons}$$
  with $\cons = \conss {\sv 3} \emptyset {\fun {\tv 0}{\styp {\sv 2}{ \{\tv 0 \}}{\tv
        1}}}{\conss {\sv 2}{\{ \tv 0 \}}{\tv 1}{((\styp {\sv 0}{ \{ \tv 0 \}
      }{\tv 0} \consleq \fun {\styp {\sv 1}{ \{\tv 0\} }{\tv 0}}{\tv 1})
      \consinter \conss {\sv 0}{\{ \tv 0 \} }{\tv 0} \consomega \consinter
      \conss {\sv 1}{\{ \tv 0 \} }{\tv 0} \consomega )}}$.  
  Roughly, the variables ($\sv i$) can be used to introduce
  $\forall$-quantifiers or subtyping at their respective positions. 
  For example, let $\typ
  = \uquant \tvar  {(\fun \tvar \tvar)}$ and $\svar=(\assign {\tv 0} \typ \sepsubs
  \assign {\tv 1} \typ \sepsubs \assign {\sv
    0} \idi \sepsubs \assign {\sv 1} \idi \sepsubs \assign {\sv 2}{\isub \idi
    {\fun b b}} \sepsubs \assign {\sv 3}{\iquant b \idi} \sepsubs
  \idsubs)$. Applying $\svar$ to the above typing judgement, we obtain
  $$\jtyp {\skelquant b  {\skelabs \expvar {\sktsub {(
          \sktsub {(\skelvar \expvar {\eassign \expvar \typ})}{\fun \typ \typ}
          \skelapp \skelvar \expvar {\eassign \expvar \typ})}{\fun b b}}}} \term
  {\emptyenv}{\uquant b  {(\fun {\typ}{\fun b b})}}{\subst \svar \cons}$$ 
  with 
  $
  \subst \svar \cons = \consquant b {((\typ \consleq \fun \typ \typ) \consinter
    (\typ \consleq \fun b b))}
  $.
  \qed
\end{example}

In the following, we use a predicate $\rawrefl$ to check that a constraint is
built from atomic constraints of the form $\typ \consleq \typ$. The formal
definition is
\begin{mathpar}
  \refl \consomega 
  \and
  \refl{\typ \consleq \typ}
  \and
  \inferrule{\refl \cons}
  {\refl {\consquant \tvar \cons}}
  \and
  \inferrule{\refl \cons}
  {\refl {\conss \subv \settv \typ \cons}}
  \and 
  \inferrule{\refl {\cons_1} \\ \refl{\cons_2}}
  {\refl {\cons_1 \consinter \cons_2}}
\end{mathpar}
A reflexive constraint is always solved w.r.t. a reflexive subtyping relation
(see solvedness definition in the next section). From any initial skeleton of
$\term$, we can obtain all relevant skeletons for $\term$.
\begin{lemma}
  \label{l:init-skel}
  Let $\jinitsk \term \skel$. Let $\skel'$ relevant such that $\jtyp {\skel'}
  \term \envvar \typ \cons$. There exists $\svar$ such that $\jtyp{\subst \svar
    \skel} \term {\envvar}{\typ}{(\cons \consinter \cons')}$ with $\cons'$
  reflexive.
\end{lemma}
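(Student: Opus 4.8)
The plan is to prove a statement generalised to the auxiliary judgement of Fig.~\ref{f:init-skel}: if $\jinit \xtoa \term \skel$ and $\jtyp{\skel'}\term\envvar\typ\cons$ with $\support\envvar = \support\xtoa$, then there is a substitution $\svar$ with $\jtyp{\subst\svar\skel}\term\envvar\typ{(\cons\consinter\cons')}$ and $\cons'$ reflexive. The lemma is the instance coming from the last rule of Fig.~\ref{f:init-skel}, where $\support\xtoa = \fv\term = \support\envvar$ by relevance of $\skel'$. The generalisation is essential: in the application case the two premises of \TRapp share the full environment $\envvar$, so the sub-skeletons are typically \emph{not} relevant, yet they still satisfy $\support\envvar = \support\xtoa$, which is all the recursion needs.

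I would argue by induction on the derivation of $\jtyp{\skel'}\term\envvar\typ\cons$, i.e.\ on its last typing rule, with $\skel$ determined (up to renaming) by $\jinit\xtoa\term\skel$. The key invariant is that every initial skeleton begins with an E-variable introduction $\skels{\subv}{\settv}{\ldots}$ whose recorded set $\settv$ is exactly $\ftv{\tenv{\cdot}}$; hence once $\svar$ turns the initial environment into $\envvar$, the ambient set seen by this outermost E-variable is $\ftv{\subst\svar\settv} = \ftv\envvar$. This outer E-variable is the handle that absorbs the non-syntactic rules sitting at the root of $\skel'$. For the three non-syntactic cases I recurse on the immediate sub-skeleton $\skel'_0$ (same $\term$, same $\skel$, same $\support\envvar = \support\xtoa$), obtain $\svar_0$ with $\svar_0(\subv) = \Ivar_0$, and take $\svar = \svar_0$ except at $\subv$: for \TRforall over $\tvar$ set $\svar(\subv) = \iquant\tvar\Ivar_0$ (the layer fires because $\tvar\notin\ftv\envvar$ is precisely the rule's side condition and the ambient set is $\ftv\envvar$); for \TRsub to $\typ$ set $\svar(\subv) = \isub{\Ivar_0}{\typ}$; for \TRsintro with set $\settv'\supseteq\ftv\envvar$ set $\svar(\subv) = \iapp\subv{\settv'}\Ivar_0$, so the recorded set becomes $\ftv\envvar\cup\settv' = \settv'$. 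In each case Fig.~\ref{f:exp} rebuilds exactly one layer of $\typ$ and reproduces the constraint generated by that rule, while the surplus reflexive part is only wrapped by $\consquant\tvar\cdot$ or $\conss\subv{\settv'}{\typ}\cdot$, which preserve $\rawrefl$ by its closure rules.

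The base case \TRvar maps each distinct type variable $\apply\xtoa y$ to $\apply\envvar y$ and sets $\svar(\subv) = \idi$, recovering $\skel' = \skelvar\expvar\envvar$ with $\cons' = \consomega$. The syntactic cases set $\svar(\subv) = \idi$ (there is no root non-syntactic rule to absorb) and recurse on the immediate subterms: \TRabs on the body under $\xtoa\sepxtoa\eassign\expvar\tvar$, noting that extending both environments by $\expvar$ preserves $\support\envvar = \support\xtoa$; \TRapp on $\term_1$ (demanded to carry the arrow type $\fun{\sigma_1}{\sigma_2}$ of the function premise) and on $\term_2$. In the application case I combine the two resulting substitutions into one $\svar$; this is sound because the initial-skeleton rule guarantees $(\allvar{\skel_1}\cap\allvar{\skel_2})\setminus\ftv\xtoa = \emptyset$, and on the shared variables $\ftv\xtoa$ both substitutions must agree, since they both realise $\envvar$. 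I additionally send the fresh variable $\tvar$ of the built-in coercion to $\sigma_2$, so that the initial skeleton's node $\sktsub{\skel_1}{\fun{\rtype{\skel_2}}{\tvar}}$ becomes $\sktsub{\cdot}{\fun{\sigma_1}{\sigma_2}}$ applied to a skeleton that already has type $\fun{\sigma_1}{\sigma_2}$; this yields exactly the reflexive atomic constraint $\fun{\sigma_1}{\sigma_2}\consleq\fun{\sigma_1}{\sigma_2}$, the only genuinely new reflexive constraint contributed at an application.

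Throughout, well-typedness of $\subst\svar\skel$, together with the fact that its result type, environment and constraint are the expected $\subst\svar{\rtype\skel}$, $\subst\svar{\tenv\skel}$ and $\subst\svar\cons$, follows from Theorem~\ref{t:subs-admissible}, and the layer-by-layer rebuilding of $\typ$ and $\cons$ is read off Figs.~\ref{f:exp} and~\ref{f:subs}. I expect the main obstacle to be the bookkeeping of E-variable sets: one must check that the ambient set remains equal to $\ftv\envvar$ as expansions are applied, so that the $\iquant$ layers really introduce their quantifiers — rather than being silently discarded when the quantified variable has crept into the ambient set — and so that the $\iapp$ layers record exactly the set demanded by $\skel'$. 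This is precisely the invariant that aligns the side conditions of \TRforall and \TRsintro with the guards in the definition of $\Inst{\cdot}{\cdot}{\cdot}$, and it is where the argument is most delicate.
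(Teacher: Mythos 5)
Your proposal follows essentially the same route as the paper's proof (Lemma~\ref{l:generate-fs-from-init} together with Theorem~\ref{t:generate-fs-from-init}): generalise to the auxiliary judgement under the hypothesis $\support\envvar=\support\xtoa$, induct on the typing derivation of $\skel'$, absorb each non-syntactic root rule into the outermost E-variable of the initial skeleton by remapping its image to $\iquant\tvar\Ivar$, $\isub\Ivar\typ$ or $\iapp\subv{\cdot}\Ivar$, erase it with $\idi$ in the syntactic cases, and account for the reflexive constraint $\fun{\typ_1}{\typ_2}\consleq\fun{\typ_1}{\typ_2}$ left at each application node. Your choice of recorded set in the \TRsintro{} case and your explicit justification for merging the two substitutions at applications are minor, equivalent variants of what the paper does.
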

Note that in the above lemma, we do not have $\subst \svar \skel = \skel'$, and
we obtain an approximation of $\cons$. By construction, an initial skeleton
$\skel$ uses subtyping at each application node to generate an atomic
constraint. Applying $\svar$ turns these constraints into reflexive ones, but
it cannot completely remove them. Therefore, $\subst \svar \skel$ is similar to
$\skel'$ up to these uses of (reflexive) subtyping at application nodes.

To generate all possible typing derivations, we add a weakening rule to be able
to extend a type environment.
\begin{mathpar}
  \inferrule{\jtyp \skel \term {\envvar_1} \typ \cons \\ \support{\envvar_1} \cap
    \support{\envvar_2} = \emptyset}
  {\jtyp {\skelweak \skel {\envvar_2}} \term {\envvar_1 \sepenv \envvar_2} \typ
    \cons}
\end{mathpar}
\begin{theorem}
  \label{t:init-skel}
  Let $\jinitsk \term \skel$. If $\jtyp {\skel'} \term \envvar \typ \cons$, then
  there exists $\svar$, $\envvar'$ such that $\jtyp{\skelweak {(\subst \svar
      \skel)}{\envvar'}} \term {\envvar}{\typ} {(\cons \consinter \cons')}$,
  with $\cons'$ reflexive.
\end{theorem}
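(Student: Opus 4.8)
The plan is to reduce the statement to Lemma~\ref{l:init-skel}, whose hypothesis requires a \emph{relevant} skeleton, and then to recover the extra bindings with the weakening rule. The difficulty is that the given $\skel'$ need not be relevant: the environment $\envvar$ may assign types to term variables outside $\fv{\term}$ (a variable skeleton $\skelvar{\expvar}{\envvar}$ records an arbitrary well-formed environment), so $\support{\envvar}$ can strictly contain $\fv{\term}$.

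First I would record the easy invariant $\fv{\term}\subseteq\support{\envvar}$, which holds for every derivable $\jtyp{\skel'}{\term}{\envvar}{\typ}{\cons}$ by a routine induction on $\skel'$ (the variable case uses that $\apply{\envvar}{\expvar}$ is defined, and the $\TRabs$ case that the bound variable is removed from $\fv{\term}$). Then the crux is a \textbf{strengthening} sublemma: whenever $\envvar_0$ is the sub-environment of $\envvar$ keeping exactly the bindings whose subject lies in a chosen set $S$ with $\fv{\term}\subseteq S\subseteq\support{\envvar}$, there is a skeleton $\skel''$ with $\jtyp{\skel''}{\term}{\envvar_0}{\typ}{\cons}$ --- the \emph{same} result type and the \emph{same} constraint. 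I would prove this by induction on $\skel'$: in the variable case, replace the recorded environment by $\envvar_0$; in $\TRabs$, extend $S$ with the bound variable before recursing; in $\TRapp$, restrict both premises to the \emph{common} environment $\envvar_0$ determined by $S$ (this is exactly why the sublemma restricts to an arbitrary $S\supseteq\fv{\term}$ rather than to the two subterms' free-variable sets separately: it keeps the shared-environment requirement of $\TRapp$ satisfied). The remaining rules are transparent, and the only side conditions to recheck are $\tvar\notin\ftv{\envvar}$ of $\TRforall$ and $\ftv{\envvar}\subseteq\settv$ of $\TRsintro$; both are preserved because restriction can only shrink $\ftv{\envvar}$, and --- crucially --- the recorded set $\settv$, the produced type $\styp{\subv}{\settv}{\typ}$, and every generated atomic constraint are left untouched, so $\typ$ and $\cons$ survive restriction verbatim.

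Instantiating the sublemma with $S=\fv{\term}$ yields a \emph{relevant} skeleton $\skel''$ with $\jtyp{\skel''}{\term}{\envvar_0}{\typ}{\cons}$, where $\support{\envvar_0}=\fv{\term}$, together with the complementary bindings $\envvar'$ (those with subject outside $\fv{\term}$), satisfying $\support{\envvar_0}\cap\support{\envvar'}=\emptyset$ and $\envvar=\envvar_0\sepenv\envvar'$ up to the order of distinct bindings, which is immaterial to typing since the rules inspect $\envvar$ only through lookup, $\ftv{\envvar}$, and $\support{\envvar}$. Applying Lemma~\ref{l:init-skel} to the initial skeleton $\skel$ and the relevant $\skel''$ produces a substitution $\svar$ with $\jtyp{\subst{\svar}{\skel}}{\term}{\envvar_0}{\typ}{(\cons\consinter\cons')}$ and $\cons'$ reflexive. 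Since the supports are disjoint, the weakening rule then gives $\jtyp{\skelweak{(\subst{\svar}{\skel})}{\envvar'}}{\term}{\envvar_0\sepenv\envvar'}{\typ}{(\cons\consinter\cons')}$, that is $\jtyp{\skelweak{(\subst{\svar}{\skel})}{\envvar'}}{\term}{\envvar}{\typ}{(\cons\consinter\cons')}$, which is the desired conclusion with this $\svar$ and $\envvar'$.

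The main obstacle is the strengthening sublemma, and within it the application case: one must thin both subderivations against a single environment while keeping the result type and the accumulated constraint \emph{exactly} equal (not merely up to reflexive approximation), since any slack introduced here would have to be absorbed later and would not match the statement's precise form $\cons\consinter\cons'$. Maintaining the common environment $\envvar_0$ throughout, and verifying that the environment-dependent side conditions of $\TRforall$ and $\TRsintro$ are monotone under restriction, are the points that need the most care.
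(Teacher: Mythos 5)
Your proposal is correct and follows the route the paper intends: reduce to the relevant case handled by Lemma~\ref{l:init-skel} and then recover the discarded bindings with the weakening rule. The paper's appendix in fact only proves the relevant case (Lemma~\ref{l:generate-fs-from-init} and Theorem~\ref{t:generate-fs-from-init}) and leaves the passage to arbitrary environments implicit; your strengthening sublemma --- restricting the environment to any $S$ with $\fv{\term}\subseteq S\subseteq\support{\envvar}$ while keeping the result type and constraint exactly unchanged, and checking that the side conditions of \TRforall and \TRsintro are monotone under restriction --- is precisely the missing glue, and your treatment of the shared environment in the \TRapp case is the right way to make it go through.
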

We emphasize that initial skeletons are quite different from principal typings:
initial skeletons are not typing derivations, because they contain unsolved
constraints, and all terms, even non typable ones, have an initial skeleton. To
obtain a principal typing from the initial skeleton, we need to solve the
constraints in a principal manner; we conjecture that it is not possible, i.e.,
\systemfs does not have principal typings, for the same reason as for System F
\cite{Wells02}.

Nevertheless, we think that initial skeletons can be useful for modular type
inference. First, note that we do not have to remember the skeleton itself or
the term; the typing and constraint contain all the information we
need. Besides, constraint solving can be divided into solution preserving
steps, which produce an equivalent constraint, and solution reducing steps,
where some information is lost. It is always possible to safely perform solution
preserving steps, and one can periodically check if it is possible to apply
solution reducing steps to find at least one solved typing. The best
intermediate representation might be a typing on which all known solution
preserving steps have been performed, together with (at least) one solution
reducing step of that typing's constraint. We do not know in practice how many
steps will be solution preserving versus solution reducing.

An example use of \systemfs is to look for a subsystem of System F in which to
do compositional type inference. \systemfs is a good framework in which to
perform such a search, by considering various different restrictions of
\systemfs until one is found with the right properties.  Because all possible
System F derivations can be obtained from \systemfs initial skeletons, we know
in advance that the framework has the right amount of power.  Such subsystems
could also be characterized by a constraint solving algorithm. Instead of
searching for a subsystem by varying the typing rules, we could vary the constraint
solving algorithm, and when a nice algorithm is found, we could try to
find a corresponding restriction directly stated on the typing rules.

\section{Solvedness and Subject Reduction}

\subsection{Solvedness and System F}
\label{s:solvedness}

A constraint $\cons$ is solved w.r.t. a subtyping relation $\leq$ if its atomic
constraints are solved w.r.t. $\leq$. Formally, we define the predicate
$\rawsolved$, as follows.
\begin{mathpar}
  \solved \consomega \leq
  \and
  \inferrule{\typ_1 \leq \typ_2}
  {\solved {\typ_1 \consleq \typ_2} \leq}
  \and
  \inferrule{\solved {\cons_1} \leq \\ \solved{\cons_2} \leq }
  {\solved {\cons_1 \consinter \cons_2} \leq}
  \and
  \inferrule{\solved \cons \leq}
  {\solved {\consquant \tvar \cons} \leq}
  \and
  \inferrule{\solved \cons \leq}
  {\solved {\conss \subv \settv \typ \cons} \leq}
\end{mathpar}
A skeleton is solved if its constraint is solved. Solved skeletons correspond
to typing derivations in the traditional sense.

We can express System F in \systemfs by using the following relation $\leqf$.
\begin{mathpar}
  \inferrule
  {}
  {\uquant \tvar  {\typ_1} \leqf \subst {\assign \tvar
      {\typ_2} \sepsubs \idsubs}{\typ_1}}~~\SRelim
\end{mathpar}
Because of the equality involving dummy quantifiers, the relation $\leq$ is
reflexive; indeed for $\tvar \notin \ftv \typ$, we have $\typ = \uquant \tvar
\typ \leqf \typ$. Clearly, \systemfs equipped with $\leqf$ extends System
F. Conversely, it is easy to see that a term typable in \systemfs is typable in
F once we erase all the E-variables.

\begin{proposition}
  A term is typable in System F iff it is typable in \systemfs with $\leqf$. 
\end{proposition}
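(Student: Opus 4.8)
The plan is to prove the two implications separately. Recall that a skeleton is \emph{solved} exactly when its constraint is solved, so ``typable in \systemfs with $\leqf$'' means there exist $\skel$, $\envvar$, $\typ$, $\cons$ with $\jtyp \skel \term \envvar \typ \cons$ and $\solved \cons \leqf$; a System F judgement is written $\envvar \vdash_{\mathsf F} \term : \typ$. Both directions proceed by induction, translating one derivation into the other rule by rule while keeping the underlying term $\term$ fixed (only the environment and result type are transported).

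For the forward direction (System F $\Rightarrow$ \systemfs) I would induct on the F derivation and send each F rule to its \systemfs counterpart: the variable, abstraction and application rules go to \TRvar, \TRabs, \TRapp; $\forall$-introduction goes to \TRforall (the side condition $\tvar \notin \ftv \envvar$ is literally the same); and $\forall$-elimination, which replaces $\uquant \tvar {\typ_1}$ by $\subst {\assign \tvar {\typ_2} \sepsubs \idsubs}{\typ_1}$, goes to \TRsub. Since F types carry no E-variables, \TRsintro is never needed. The only thing to check is that the accumulated constraint stays solved w.r.t.\ $\leqf$: the base constraint $\consomega$ is solved, \TRapp and \TRforall preserve solvedness through the rules for $\consinter$ and $\consquant \tvar \cons$, and the atomic constraint $\uquant \tvar {\typ_1} \consleq \subst {\assign \tvar {\typ_2} \sepsubs \idsubs}{\typ_1}$ produced by each \TRsub is solved precisely because it is an instance of $\SRelim$. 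Hence the resulting skeleton is solved, witnessing typability in \systemfs.

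For the backward direction (\systemfs $\Rightarrow$ System F) I would define an erasure $\lceil \cdot \rceil$ on types and environments that deletes E-variable annotations ($\lceil \styp \subv \settv \typ \rceil = \lceil \typ \rceil$) and is homomorphic on all other constructs, then prove by induction on the derivation that $\jtyp \skel \term \envvar \typ \cons$ with $\solved \cons \leqf$ implies $\lceil \envvar \rceil \vdash_{\mathsf F} \term : \lceil \typ \rceil$. The rules \TRvar, \TRabs, \TRapp map directly to the F variable, abstraction and application rules. For \TRforall I use that erasure only removes variables, so $\ftv {\lceil \envvar \rceil} \subseteq \ftv \envvar$ and the side condition $\tvar \notin \ftv {\lceil \envvar \rceil}$ follows from $\tvar \notin \ftv \envvar$. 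The rule \TRsintro becomes a no-op after erasure, since $\lceil \styp \subv \settv \typ \rceil = \lceil \typ \rceil$ leaves both type and environment unchanged, so I just reuse the induction hypothesis. The interesting case is \TRsub: solvedness forces $\typ_1 \leqf \typ_2$, i.e.\ (by $\SRelim$, using reflexivity via dummy quantifiers when $\typ_1 = \typ_2$) $\typ_1 = \uquant \tvar \rho$ and $\typ_2 = \subst{\assign \tvar \sigma \sepsubs \idsubs}{\rho}$; the induction hypothesis gives $\lceil \envvar \rceil \vdash_{\mathsf F} \term : \uquant \tvar {\lceil \rho \rceil}$, and a single F $\forall$-elimination with $\lceil \sigma \rceil$ then yields $\lceil \typ_2 \rceil$.

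The main obstacle is the commutation of erasure with \systemfs substitution needed to close the \TRsub case, namely $\lceil \subst{\assign \tvar \sigma \sepsubs \idsubs}{\rho} \rceil = \subst{\assign \tvar {\lceil \sigma \rceil} \sepsubs \idsubs}{\lceil \rho \rceil}$, where the right-hand side is an ordinary System F type substitution. This is not quite immediate: by Fig.~\ref{f:subs}, substituting into an E-variable type $\styp \subv {\settv'} {\rho'}$ keeps the variable $\subv$ and rewrites its annotation to $\ftv{\subst{\assign \tvar \sigma \sepsubs \idsubs}{\settv'}}$; the point is that erasure discards this annotation entirely, so the rewritten set is irrelevant and the identity collapses to the structural induction hypothesis. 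Establishing this lemma, together with the easy monotonicity $\ftv {\lceil \typ \rceil} \subseteq \ftv \typ$ used in the \TRforall case, is the only genuinely technical ingredient; the remaining bookkeeping is routine.
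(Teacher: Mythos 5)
Your proof is correct and follows essentially the same route the paper itself takes (and leaves as an informal remark just before the proposition): System F embeds rule-by-rule into \systemfs, with $\forall$-elimination becoming a \TRsub step whose atomic constraint is solved by \SRelim, and conversely erasing E-variables from a solved \systemfs derivation yields a System F derivation. The commutation of erasure with substitution that you isolate for the \TRsub case is the only technical point, and your treatment of it is right.
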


\subsection{Subject Reduction}

We now present the subject reduction result of \systemfs with $\leqf$
w.r.t. call-by-value semantics. Let $\vvar$ range over values, i.e. $\vvar ::=
\expvar \limit \lamb \expvar \term$. We write $\subst{\assign \expvar
  {\term_1}}{\term_2}$ for the usual capture-avoiding substitution of terms. We
define small-step call-by-value evaluation $\term \cbv \term'$ as the smallest
relation on terms verifying the following rules:
\begin{mathpar}
  \inferrule
  {}
  {(\lamb \expvar \term) \app \vvar \cbv \subst {\assign \expvar \vvar} \term}
  \and
  \inferrule
  {\term_1 \cbv \term'_1}
  {\term_1 \app \term_2 \cbv \term'_1 \app \term_2}
  \and
  \inferrule
  {\term \cbv \term'}
  {\vvar \app \term \cbv \vvar \app \term'}
\end{mathpar}
\begin{theorem}
  \label{t:subject-reduction}
  If $\jtyp \skel \term \envvar \typ \cons$, $\solved \cons \leqf$, and $\term
  \cbv \term'$, then there exists $\skel'$, $\cons'$ such that $\jtyp
  {\skel'}{\term'} {\envvar} \typ {\cons'}$ and $\solved {\cons'} \leqf$.
\end{theorem}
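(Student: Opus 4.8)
The plan is to proceed by induction on the derivation of $\term \cbv \term'$, reducing the $\beta$-step to a cut lemma and handling the two congruence rules by inversion plus the induction hypothesis. The essential difficulty is that the syntactic shape of $\term$ does not determine the top of its skeleton $\skel$: the non-syntactic rules \TRforall, \TRsintro, and \TRsub may be interleaved above the term-driven rules \TRvar, \TRabs, \TRapp. So first I would establish a \emph{generation lemma} that strips these non-syntactic rules from the top of a valid skeleton. Concretely, if $\jtyp \skel {\term_1 \app \term_2} \envvar \typ \cons$ then $\skel$ decomposes as a stack of \TRforall/\TRsintro/\TRsub steps sitting on top of a core $\skel_1 \skelapp \skel_2$, where $\skel_1$ types $\term_1$ at some arrow $\fun{\typ_1}{\typ_2}$ and $\skel_2$ types $\term_2$ at $\typ_1$. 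Since solvedness is defined compositionally over $\consinter$, $\consquant \tvar {\cdot}$, and $\conss \subv \settv \typ {\cdot}$, the constraints of $\skel_1$, $\skel_2$, and of the core are themselves solved whenever $\cons$ is.

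For the $\beta$-case $(\lamb \expvar \term_0) \app \vvar \cbv \subst{\assign \expvar \vvar}{\term_0}$, the decomposition gives $\skel_1$ typing $\lamb \expvar \term_0$ at $\fun{\typ_1}{\typ_2}$ (solved) and $\skel_2$ typing $\vvar$ at $\typ_1$ (solved). The key observation is that under $\leqf$ the head arrow $\fun{\typ_1}{\typ_2}$ can only arise above the \TRabs node through $\forall$-introductions each later removed by a $\forall$-eliminating subtyping step: an E-variable wrapping $\styp \subv \settv {\cdot}$ can never be undone by $\leqf$ (whose only non-reflexive instances instantiate a $\forall$), so no surviving \TRsintro may lie below the arrow, and the solved intermediate subtyping constraints force each \TRsub step to be a genuine $\forall$-elimination or reflexive. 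The net effect is therefore a type substitution: there is a type-variable substitution $\svar$, with domain avoiding $\ftv \envvar$, such that $\fun{\typ_1}{\typ_2} = \subst \svar {(\fun{\typ_a}{\typ_b})}$, where $\skelabs \expvar {\skel_0}$ directly yields $\fun{\typ_a}{\typ_b}$ and $\skel_0$ types $\term_0$ under $\eextend \envvar {\eassign \expvar {\typ_a}}$. Applying Theorem \ref{t:subs-admissible} with $\svar$ (so that $\subst \svar \envvar = \envvar$) produces a solved typing of $\term_0$ under $\eextend \envvar {\eassign \expvar {\typ_1}}$ at $\typ_2$; this is the abstraction generation lemma, and together with $\skel_2$ it supplies the two ingredients for a cut.

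I would then prove a \emph{term substitution (cut) lemma}: if $\jtyp {\skel_0} {\term_0} {\eextend \envvar {\eassign \expvar {\typ_1}}} {\typ_2} {\cons_0}$ and $\jtyp {\skel_v} \vvar \envvar {\typ_1} {\cons_v}$ are both solved, then $\subst{\assign \expvar \vvar}{\term_0}$ admits a solved typing under $\envvar$ at $\typ_2$. This goes by induction on $\skel_0$: at a leaf $\skelvar \expvar {\envvar'}$ I replace the occurrence by $\skel_v$, using the added weakening rule $\skelweak {\skel_v} {\envvar''}$ to extend $\envvar$ to the locally enlarged environment $\envvar'$ (which keeps the type, the constraint, and hence solvedness); leaves $\skelvar y {\envvar'}$ with $y \neq \expvar$ are untouched; and every other rule is reconstructed after applying the induction hypothesis to its premises, its side conditions surviving because neither $\envvar$ nor the result types change. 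Feeding the two typings obtained above into this lemma, and then re-applying verbatim the stripped stack of \TRforall/\TRsintro/\TRsub steps (whose side conditions still hold since $\envvar$ is unchanged, and whose generated subtyping constraints are identical and hence still solved), reconstructs a solved typing of $\term'$ at the original $\typ$.

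The two congruence cases are then routine: I strip the top non-syntactic rules, extract from $\skel_1 \skelapp \skel_2$ a solved typing of the reducing subterm at a fixed type, apply the induction hypothesis to obtain a solved typing of its reduct at the \emph{same} type (the theorem preserves result types), and rebuild the application together with the stripped stack unchanged. I expect the main obstacle to be the cut lemma, and specifically the freshness bookkeeping for type variables: a \TRforall node inside $\skel_0$ carries the side condition $\tvar \notin \ftv {\envvar'}$, which can be threatened once the free type variables of $\vvar$ enter the environment through weakening. The resolution is to $\alpha$-convert the $\forall$-bound variables of $\skel_0$ to be fresh for $\vvar$ before substituting, exactly the standard renaming needed when substituting a polymorphic value under a type abstraction in System F; making this interact cleanly with the sets $\settv$ recorded by \TRsintro is the one point demanding real care.
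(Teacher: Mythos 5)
Your overall architecture (generation lemma for the function position of a redex, plus a term-substitution cut lemma, plus reconstruction of the surrounding non-syntactic rules) matches the paper's, and your cut lemma is essentially the paper's Lemma~\ref{l:redex}, down to the same induction and the same use of weakening at variable leaves. Where you genuinely diverge is in how the abstraction generation lemma is obtained. The paper first moves to an auxiliary system \systemfsneq in which the type equalities (reordering of adjacent quantifiers, dummy-quantifier suppression) are reified as explicit equality-subtyping rules with their own subtyping skeletons (Lemma~\ref{l:corr-type-systems}), then defines a skeleton transformation $\trsymb$ that normalizes the stack of \TRforall/\TRsub/\TRsintro nodes sitting above a $\uplambda$-abstraction, proving termination via a bespoke size measure (Lemma~\ref{l:sz-decr}) and correctness via Lemma~\ref{l:tr-abs}. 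You instead argue directly in \systemfs that, under $\leqf$-solvedness, no surviving \TRsintro can lie below the final arrow and each \TRsub step is a $\forall$-elimination or reflexive, so the whole stack collapses to a single type substitution $\svar$ fixing $\envvar$, which you discharge with Theorem~\ref{t:subs-admissible}. Your route is shorter and reuses an already-proved theorem instead of building new machinery; what the paper's detour buys is precisely a rigorous handling of the point you compress into ``the net effect is therefore a type substitution'': because judgements are read modulo the quantifier equalities, the ``stack'' above \TRabs is only determined up to those equalities, and showing that every such reading composes to a substitution with domain disjoint from $\ftv\envvar$ amounts to re-establishing the normal form that $\trsymb$ and Lemma~\ref{l:tr-abs} provide. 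So your proof is correct in outline, but that collapse step (together with the implicit fact that $\leqf$-solvedness is preserved by substitution) is where the real work hides, and fleshing it out would likely reproduce something close to the paper's equality-subtyping analysis. Your closing remark about $\alpha$-renaming the $\forall$-bound variables of $\skel_0$ against $\ftv{\typ_1}$ correctly identifies the remaining delicate point in the cut lemma.
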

We prove Theorem \ref{t:subject-reduction} by defining a transformation on
$\skel$ so that skeletons in a function position of an application, such as
$\skel_1$ in $\skel_1 \skelapp \skel_2$, are turned into $\uplambda$-abstraction
skeletons. A substitution lemma then allows us to simulate $\upbeta$-reduction
by replacing the occurrences of a variable skeleton $\skelvar \expvar \envvar$
in a skeleton $\skelabs \expvar {\skel'_1}$ by $\skel_2$. This proof technique
depends on the subtyping relation being used. We conjecture it can be adapted to
various relations (such as Mitchell's \cite{Mitchell88}), but nevertheless we
look for a more generic proof technique (less dependant on the subtyping
relation). We prove subject reduction only for call-by-value evaluation for
simplicity; we conjecture that subject reduction also holds for call-by-need and
call-by-name semantics, and for reduction in arbitrary contexts.

\section{Related Work}
\label{s:related}

\subsection{Expansion}

A full survey on expansion and expansion variables can be found in
\cite{CarlierW05}; we only discuss here the main differences between \systemfs
and System E, the type system with expansion most closely related to our
work. System E E-variables are introduced on top of skeletons, type
environments, result types, and constraints, while \systemfs E-variables are not
inserted on top of type environments (rule \TRsintro). \systemfs expansion
mechanism deals with subtyping, while System E expansion does not. In System E,
an E-variable $\evar$ defines a namespace. In type $\typ_1 = \fun \tvar {\etyp
  \evar \tvar}$, the variable $\tvar$ outside $\evar$ is not connected to the
one in the scope of $\evar$; applying substitution $(\assign \tvar {\typ_2}
\sepsubs \idsubs)$ to $\typ_1$ gives $\fun {\typ_2}{\etyp \evar \tvar}$. This is
due to the fact that substitutions are a special case of System E expansions
(see \cite{CarlierW05} for further details). It also makes composition of
expansions and substitutions easier. In \systemfs, substitutions cannot be
considered as expansions, because they are applied to the whole typing judgement
(Theorem \ref{t:subs-admissible}), whereas the asymmetric expansions of
\systemfs are not applied to the type environments (Lemma
\ref{l:asym-exp-typings}). As a result, it would be unsound for \systemfs
E-variables to create namespaces. It is difficult to have a symmetric expansion
in \systemfs, because subtyping does not operate uniformly on typings (it is
usually contravariant on the environment and covariant on the result type). It
is possible to design \systemfs with two kinds of E-variables (one, symmetric,
to handle substitutions and $\forall$-introduction, and one, asymmetric, for
subtyping), but it would make the system much more complex for no clear profit.

\subsection{Type Inference in System F}

Type inference in System F is undecidable \cite{Wells99}; however many different
approaches have been conducted to circumvent this issue, by stratifying System F
using a notion of rank, or by using type annotations to constrain type inference
possibilities.

\paragraph*{Giannini and Ronchi's type constraints.} In \cite{GianniniR91},
Giannini and Ronchi Della Rocca consider a syntax-directed version of System
F. The authors define a notion of \emph{typing scheme} $\sigma$, with a syntax
similar to the one of System F types, except that quantifiers $\forall u.\typ$
contain placeholders $u$ (called \emph{sequence variables}), that can be
replaced by a (possibly empty) set of type variables to give a System F
type. For each term $\term$, they also define a \emph{principal typing scheme}
$\apply \Uppi \term = \langle D, \sigma, G, F \rangle$, where $D$ is an
environment that maps term variables to typing schemes, and $G$ and $F$ are
constraints on the typing schemes occurring in $\sigma$ or $D$ that need to be
satisfied. The set $G$ contains subtyping constraints $\sigma_1 \leqf \sigma_2$,
and $F$ prevents certain quantifications from happening by restricting the
possible values for the sequence variables $u$.

The principal typing scheme $\apply \Uppi \term$ is similar to our initial
skeletons; if $\apply \Uppi \term = \langle D, \sigma, G, F \rangle$ and $\jtyp
\skel \term \envvar \typ \cons$ (with $\skel$ an initial skeleton for $\term$),
then $D$ corresponds to $\envvar$, $\sigma$ to $\typ$, $G$ to $\cons$, and $F$
acts as the sets $\settv$ that appear in E-variables $\styp \subv
\settv \typ$. Any System F typing $\typing \envvar \typ$ of $\term$ can be
obtained from $D$, $\sigma$ by applying a substitution (from type variables to
types and sequence variables to set of type variables) which satisfies
constraints $G$ and $F$. This result corresponds to Theorem \ref{t:init-skel} in
our system.

\systemfs and the system of \cite{GianniniR91} differ mainly in their
implementation. In particular, we have a mechanism to postpone subtyping (i.e.,
$\forall$-elimination), which does not have an equivalent in the system of
Giannini and Ronchi. It seems that they do not need such mechanism, but to
compensate for it, they have to generate more constraints when building their
principal typing scheme $\apply \Uppi \term$. We also believe that our system is
easier to understand and easier to extend with other type constructors. Finally,
Giannini and Ronchi define a notion of rank over system F types (distinct from
Leivant's rank based on the presence of polymorphism on the left of function
types \cite{Leivant83}), and provide for all $n$ an inference algorithm for each
restriction of their system to types of rank lower than $n$. We conjecture that
this algorithm can be adapted to \systemfs.

\paragraph*{\mlf and its variants.} \mlf \cite{BotlanR03,BotlanR09} is a
conservative extension of ML at least as expressive as System F with principal
types, i.e., result types whose instances (w.r.t. the \mlf type instance
relation $\leqmlf$) are exactly all possible result types for a term. The type
system also enjoys decidable type inference (with a simple criterion on where
type annotations are needed), and stability w.r.t. some program transformations,
such as for example $\upbeta$-reduction and $\upeta$-expansion. 

\mlf types contain \emph{flexible quantifiers} $\forall(\tvar \geqmlf
\sigma)\sigma'$, which roughly represent sets of System F types of the form
$\subst {\assign \tvar \typ}{\typ'}$, where $\typ$ and $\typ'$ are instances of
the \emph{type schemes} $\sigma$, $\sigma'$. For example, $\forall(\tvar \geqmlf
\forall b (\fun b b))(\fun \tvar \tvar)$ represents the set $\{ \fun \typ \typ
\sthat \forall b (\fun b b) \leqmlf \typ \}$. With flexible quantifiers, terms
that do not have a principal type in System F (w.r.t. the System F type instance
relation) have a principal type in \mlf. Decidable type inference is obtained in
\mlf by requiring type annotations on function parameters that are used two or
more times with different type instances, so that the type inference algorithm
never has to guess true polymorphism. \emph{Rigid bindings} are used in \mlf
types and typing rules to distinguish between inferred and annotated types. They
are not necessary for decidable type inference, and can be removed at the cost
of additional type annotations, as in HML \cite{Leijen09}.

\paragraph*{Boxed polymorphism.} Boxed polymorphism \cite{LauferO94,Remy94}
hides polymorphic types into boxes, considered as regular simple types. Several type
systems follow this principle, such as PolyML \cite{GarrigueR99}, boxy types
\cite{VytiniotisWJ06}, and FPH \cite{VytiniotisWJ08}. We discuss only the most
recent system, FPH. FPH is a type system based on System F, where boxes are used
to mark where $\forall$-quantifiers have to be instantiated with polymorphic
types. Provided that type annotations are given at these boxed positions, FPH
type inference computes System F types (without any box) for terms. The system
aims for simplicity for the programmer: only System F types are exposed, and
writing type annotations does not require to think in term of boxes. Roughly,
type annotations are necessary for $\lambda$-abstractions and let-bindings with
\emph{rich} types (i.e., types with quantifiers under arrow types). However, FPH
is more restrictive than \mlf; more annotations are needed in general, and FPH
terms admit principal types only for ``box-free'' types, not in general.\\

\mlf, FPH, and \systemfs all aim for a modular type inference for System F
types. It is difficult to compare our work to these two systems, because we do
not propose a type inference algorithm for \systemfs yet. In particular,
assuming we follow their approach, we do not know how many annotations would be
necessary to make \systemfs type inference decidable. However, we can make the
following observations. First, \mlf and FPH only infer result types, while our
objective is to also infer complete typing, in order to have a fully
compositional type inference algorithm. \mlf has principal types (w.r.t. to
their instance relation), while \systemfs have initial skeletons, and FPH has
principal types only for box-free types (where $\forall$-quantified variables
cannot be instantiated with polymorphic types). \mlf types more terms than System
F, while FPH and \systemfs type the same terms as System F. Finally, FPH and
\systemfs are direct extensions of System F, and the constructions specific to
these systems (the boxes and E-variables) can be kept away from the programmer
most of the time (except in type error reports). On the other hand, \mlf types
and type instance relation $\leqmlf$ can be hard to understand, even in its
simpler version HML.

To illustrate the differences between the three type systems, we consider the
following example (taken from \cite{BotlanR09,VytiniotisWJ08}). Let $\envvar =
\envenum{\eassign {\mathsf{choose}}{\uquant \tvar  {(\fun \tvar {\fun \tvar
        \tvar} )}} \sepsubs \eassign{\mathsf{id}}{\uquant \tvar  {(\fun
      \tvar \tvar)}}}$ and $\term = \mathsf{choose} \app \mathsf{id}$. We can
derive the following typing judgement for $\term$:
$$
\begin{array}{rl}
  \mbox{F$_{\mathsf s}$}:& \typing \envvar {\styp {\subv_2} \emptyset {(\fun {(\styp {\subv_1}
        \emptyset {\uquant \tvar  {(\fun \tvar \tvar)}})}{(\styp {\subv_1}
        \emptyset {\uquant \tvar  {(\fun \tvar \tvar)}})})}}\\
  \mbox{\mlf}:& \typing \envvar {\forall(\tvar \geqmlf \forall b (\fun b
  b))(\fun \tvar \tvar)} \\
  \mbox{FPH}:& \typing \envvar {\forall b (\fun {(\fun b b)}{(\fun b b)})} \\
  & \typing \envvar {\fun {\fbox{$\forall b (\fun b b)$}}{\fbox{$\forall
        b (\fun b b)$}}}
\end{array}
$$
FPH can infer two result types for $\term$, depending on the presence or absence
of type annotations. These two incomparable types can be obtained from the
(principal) \mlf type (ignoring the boxes), and also from the \systemfs type, by
applying the substitution $(\assign {\subv_2}{\iquant b \idi} \sepsubs \assign
{\subv_1}{\isub \idi {\fun b b}} \sepsubs \idsubs)$ for the first one, and by
simply erasing the E-variables for the second one.

Both \systemfs E-variables and \mlf flexible bindings factor several System~F
types and typing derivations that are incomparable in System F, as shown with
the $\mathsf{choose} \app \mathsf{id}$ example. However, flexible bindings are
more expressive and allow to type terms that are not typable in System
F. Consider the example (taken from \cite{BotlanR09}) $\mathsf{let~} x =
(\mathsf{choose} \app \mathsf{id}) \mathsf{~in~let~} z = x \app f \mathsf{~in~}
x \app g$, where $\eassign f {\fun{\uquant \tvar  {(\fun \tvar
      \tvar)}}{\uquant \tvar  {(\fun \tvar \tvar)}}} \sepenv \eassign g
{\fun {(\fun b b)}{(\fun b b)}}$. The \mlf type for $\mathsf{choose} \app
\mathsf{id}$ given above can be instantiated into the incomparable types of $f$
and $g$. The term cannot be typed in System F nor in \systemfs. Adding quantification
over E-variables would allow \systemfs to type this term; we could type
$\mathsf{choose} \app \mathsf{id}$ with $\forall \subv.{(\fun {(\styp \subv
    \emptyset {\uquant \tvar  {(\fun \tvar \tvar)}})}{(\styp \subv \emptyset
    {\uquant \tvar  {(\fun \tvar \tvar)}})})}$ and instantiate $\subv$ with
different expansions to obtain the types of $f$ and $g$. Adding
quantification over
E-variables should not raise any issue; we conjecture that it would allow
\systemfs to type as many terms as \mlf. It would be interesting to see if there
exists an encoding of \mlf types into \systemfs types extended with
quantified E-variables, and
conversely. We leave this topic to future work.

\section{Conclusion and Future Work}
\label{s:conclusion}

\systemfs is an extension of System F with expansion, an operation originally
defined in systems with intersection types. Expansion allows postponing the
introduction of $\forall$-quantifiers and subtyping uses at an arbitrary nested
position in a typing derivation. For any term $\term$, we can generate an
initial skeleton, from which we can obtain any \systemfs judgement for
$\term$. We now give some ideas of follow-up on this work.

\paragraph*{Type inference algorithm.} To obtain decidable type inference in
\systemfs, a first possibility is to use type annotations, as in \mlf or
FPH. The question is then to know how many annotations are necessary
compared to these two systems. Another idea is to study the link between
constraints solving and semi-unification. Given a constraint $\typ_1 \leq
\typ_2$, the semi-unification problem consists in finding $\svar_1$, $\svar_2$
so that $\subst {\svar_2}{\subst {\svar_1}{\typ_1}} =
\subst{\svar_1}{\typ_2}$. Vasconcellos et al.~\cite{VasconcellosFC03} used
semi-unification to design and implement a type inference semi-algorithm for
polymorphic recursion in Haskell. The authors claim that the algorithm
terminates most of the time in practice. Maybe similar results can be obtained
for \systemfs as well. As discussed at the end of Section \ref{s:principality},
\systemfs can also be used to look for a subsystem of System F allowing for
compositional type inference.

\paragraph*{Mixing $\forall$-quantifiers and intersection types.} A long-term
goal is combining System E and System F into one system (called System EF), with
both $\forall$-quantifiers and intersection types. With such a system, one could
type a term with only intersection types, only System F types, or any
combination of the two constructs, depending on the user's needs. Previous
systems featuring both constructs (e.g. \cite{MargariaZ95,BakelBF99}) do not
use expansion variables; the main difficulty in mixing System E and
\systemfs is to make precise the interactions between the symmetric and
asymmetric expansions. Maybe it is possible to define a more general expansion
mechanism which supersedes the existing ones, and combine the two kinds of
expansion variables into a single construct.  A goal would be for System EF to
have principal typings.

Because System E types all strongly normalizing terms,
$\forall$-quantified types would only be used when required by the
user when performing type inference in
System EF. To this end, we could imagine various kinds of type annotations to
mark positions within terms where System F types are required. These annotations
could be complete types, such as $\lamb {\expvar^{\uquant \tvar  {(\fun
      \tvar \tvar)}}} \term$, or just type templates, such as $\lamb
{\expvar^{\fun {(\forall \tvar.*)} *}} \term$, meaning that the inferred type
for $\expvar$ should be an arrow type, and the type of the argument should be a
System F type. One could imagine different kinds of annotations at various
positions in the term; we would like to see under which conditions (on both the
annotations language and the positions in the term) the inference for such a
system becomes decidable. The inference algorithm would then use intersection
types by default, except for the marked positions where $\forall$-quantified
types are requested.

\bibliography{biblio}
\bibliographystyle{abbrv}

\newpage
\appendix

\section{Soundness of substitutions}

\begin{lemma}
  \label{l:iinst-preserves-typings}
  If $\jtyp \skel \term \envvar \typ \cons$ and $\ftv \envvar \subseteq \settv$
  then we have $\jtyp{\Inst \Ivar \settv \skel} \term {\envvar}{\Inst \Ivar \settv
    \typ}{\Instc \Ivar \settv \typ \cons}$.
\end{lemma}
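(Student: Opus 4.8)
The plan is to prove the statement by structural induction on the expansion term $\Ivar$, following exactly the case split used in the definition of expansion application in Fig.~\ref{f:exp}. In each case the outermost constructor of $\Ivar$ corresponds to one non-syntactic typing rule, and the induction hypothesis supplies a derivation to which that rule applies. The guiding principle is that, since all components of a judgement are uniquely determined by its skeleton, it suffices in each case to apply the induction hypothesis, build the skeleton dictated by the corresponding typing rule, check that this skeleton coincides with the one defined by the figure for $\Inst \Ivar \settv \skel$, and then read off the uniquely-determined result type and constraint, verifying that they are $\Inst \Ivar \settv \typ$ and $\Instc \Ivar \settv \typ \cons$ as prescribed by the type and constraint clauses of Fig.~\ref{f:exp}. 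Note that every clause leaves the type environment untouched, so the environment stays $\envvar$ throughout.

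The base case $\Ivar = \idi$ is immediate, since $\Inst \idi \settv \skel = \skel$, $\Inst \idi \settv \typ = \typ$, and $\Instc \idi \settv \typ \cons = \cons$, so the goal coincides with the hypothesis. For $\Ivar = \iquant \tvar {\Ivar'}$ I split on whether $\tvar \in \settv$. When $\tvar \in \settv$ the quantifier is discarded on all three entities, and the induction hypothesis applied to $\Ivar'$ (with the same $\settv$) yields the goal directly. When $\tvar \notin \settv$, the induction hypothesis gives a derivation of $\jtyp{\Inst {\Ivar'} \settv \skel} \term \envvar {\Inst {\Ivar'} \settv \typ}{\Instc {\Ivar'} \settv \typ \cons}$, and I apply rule \TRforall, whose side condition $\tvar \notin \ftv \envvar$ follows from $\ftv \envvar \subseteq \settv$ together with $\tvar \notin \settv$. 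The resulting judgement has result type $\uquant \tvar {\Inst {\Ivar'} \settv \typ}$ and constraint $\consquant \tvar {\Instc {\Ivar'} \settv \typ \cons}$, exactly the expansion results in this subcase.

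The case $\Ivar = \isub {\Ivar'} {\typ_2}$ is handled by rule \TRsub: the induction hypothesis on $\Ivar'$ gives a derivation with result type $\Inst {\Ivar'} \settv \typ$ and constraint $\Instc {\Ivar'} \settv \typ \cons$, and applying \TRsub with target $\typ_2$ produces the skeleton $\sktsub{(\Inst {\Ivar'} \settv \skel)}{\typ_2}$, the result type $\typ_2$, and the constraint $(\Instc {\Ivar'} \settv \typ \cons) \consinter ((\Inst {\Ivar'} \settv \typ) \consleq \typ_2)$; here it is crucial that the left side of the generated atomic constraint is the \emph{expanded} type $\Inst {\Ivar'} \settv \typ$ rather than $\typ$, which is precisely what the constraint clause of Fig.~\ref{f:exp} records. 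The remaining case $\Ivar = \iapp \subv {\settv'} {\Ivar'}$ is handled by rule \TRsintro: after applying the induction hypothesis to $\Ivar'$ with a forbidden set still containing $\ftv \envvar$, I wrap the derivation in the E-variable $\subv$ with set $\settv \cup \settv'$, and \TRsintro's side condition $\ftv \envvar \subseteq \settv \cup \settv'$ holds because $\ftv \envvar \subseteq \settv$.

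I expect this last case to be the only delicate point, and it is entirely a matter of set bookkeeping: the induction hypothesis demands that the forbidden set passed to the recursive application contain $\ftv \envvar$, while the \TRsintro side condition demands $\ftv \envvar \subseteq \settv \cup \settv'$, and both are discharged from the single hypothesis $\ftv \envvar \subseteq \settv$. One must also confirm that the type decorating the introduced E-variable in the skeleton/type clause and the type recorded alongside the generated constraint in the constraint clause use matching forbidden sets, so that the judgement assembled by \TRsintro is well formed and the constraint clause of Fig.~\ref{f:exp} is reproduced exactly. Everything else is routine unfolding of the definitions.
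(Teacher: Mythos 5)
Your proof is correct and follows essentially the same route as the paper's: structural induction on $\Ivar$, with the null case immediate, the $\forall$-case split on $\tvar \in \settv$ and discharged via \TRforall using $\ftv \envvar \subseteq \settv$, the subtyping case via \TRsub, and the E-variable case via \TRsintro with the inclusion $\ftv \envvar \subseteq \settv \subseteq \settv \cup \settv'$. The points you flag as delicate (the expanded type on the left of the generated atomic constraint, and the set bookkeeping in the \TRsintro case) are exactly the ones the paper's proof handles.
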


\begin{proof}
  By induction on $\Ivar$.

  If $\Ivar = \idi$, then the result is easy. 

  If $\Ivar = \isub {\Ivar'}{\typ_2}$, then by induction we have $\jtyp{\Inst
    {\Ivar'} \settv \skel} \term {\envvar}{\Inst {\Ivar'} \settv \typ}{\Instc
    {\Ivar'} \settv \typ \cons}$. By rule \TRsub, we have $\jtyp{\sktsub {(\Inst
      {\Ivar'} \settv \skel)}{\typ_2}} \term {\envvar}{\typ_2}{(\Instc {\Ivar'}
    \settv \typ \cons \consinter ((\Inst {\Ivar'} \settv \typ) \consleq
    \typ_2))}$, i.e., $\jtyp{\Inst {\Ivar} \settv \skel} \term {\envvar}{\Inst
    {\Ivar} \settv \typ}{\Instc {\Ivar} \settv \typ \cons}$, as required.

  If $\Ivar = \iapp \subv {\settv'}{\Ivar'}$, then $\jtyp{\Inst {\Ivar'} \settv
    \skel} \term {\envvar}{\Inst {\Ivar'} \settv \typ}{\Instc {\Ivar'} \settv
    \typ \cons}$ holds by induction. Because $\ftv \envvar \subseteq \settv$, we
  have $\ftv \envvar \subseteq \settv \cup \settv'$, so by rule \TRsintro, we
  obtain
  $$\jtyp{\skels \subv
    {\settv \cup \settv'}{(\Inst {\Ivar'} \settv \skel)}} \term {\envvar}{\styp \subv
    {\settv \cup \settv'}{(\Inst {\Ivar'} \settv \typ)}}{\conss \subv
    {\settv \cup \settv'}{\Inst {\Ivar'} \settv \typ}{(\Instc {\Ivar'}
    \settv \typ \cons)}},$$ i.e., $\jtyp{\Inst {\Ivar} \settv \skel} \term {\envvar}{\Inst
    {\Ivar} \settv \typ}{\Instc {\Ivar} \settv \typ \cons}$, as required.

  If $\Ivar = \iquant \tvar {\Ivar'}$, then by induction $\jtyp{\Inst {\Ivar'}
    \settv \skel} \term {\envvar}{\Inst {\Ivar'} \settv \typ}{\Instc {\Ivar'}
    \settv \typ \cons}$ holds. If $\tvar \in \settv$, then we have the required
  result. If $\tvar \notin \settv$, then because $\ftv \envvar \subseteq
  \settv$, we have $\tvar \notin \ftv \envvar$. Hence, we have $\jtyp{\skelquant
    \tvar {\Inst {\Ivar'} \settv \skel}} \term {\envvar}{\uquant \tvar{\Inst
      {\Ivar'} \settv \typ}}{\consquant \tvar {\Instc {\Ivar'} \settv \typ
      \cons}}$, by rule \TRforall, i.e., $\jtyp{\Inst {\Ivar} \settv \skel}
  \term {\envvar}{\Inst {\Ivar} \settv \typ}{\Instc {\Ivar} \settv \typ \cons}$,
  as required.

\end{proof}

\begin{theorem}
  \label{l:exp-preserves-typings}
  If $\jtyp \skel \term \envvar \typ \cons$ then $\jtyp{\subst \svar \skel}
  \term {\subst \svar \envvar}{\subst \svar \typ}{\subst \svar \cons}$.
\end{theorem}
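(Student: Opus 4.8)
The plan is to argue by induction on the structure of the skeleton $\skel$ (equivalently, on the derivation of $\jtyp\skel\term\envvar\typ\cons$), pushing $\svar$ through each typing rule and rebuilding the conclusion with the same rule. For the three syntax-directed rules this is routine. In the \TRvar case, $\subst\svar{(\skelvar\expvar\envvar)} = \skelvar\expvar{\subst\svar\envvar}$ and $\subst\svar{(\apply\envvar\expvar)} = \apply{(\subst\svar\envvar)}{\expvar}$, since substitution on environments is pointwise, so \TRvar reapplies directly. In the \TRabs and \TRapp cases the induction hypotheses combine after using that $\svar$ commutes with environment extension, with $\fun{\cdot}{\cdot}$, and with $\consinter$. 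The \TRsub case is equally direct, using $\subst\svar{(\sktsub\skel{\typ_2})} = \sktsub{\subst\svar\skel}{\subst\svar{\typ_2}}$ and $\subst\svar{(\cons\consinter(\typ_1\consleq\typ_2))} = \subst\svar\cons\consinter(\subst\svar{\typ_1}\consleq\subst\svar{\typ_2})$ and reapplying \TRsub; note this rule carries no side condition, a fact that matters below.

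The \TRforall case needs a little care with binders. Since $\subst\svar{(\skelquant\tvar\skel)}$ is only defined when $\tvar\notin\ftv\svar$, I would first $\alpha$-rename the bound variable $\tvar$ to be fresh for $\svar$. Applying the induction hypothesis then yields the substituted premise, and I must re-establish the side condition $\tvar\notin\ftv{\subst\svar\envvar}$: this holds because $\tvar$ occurs in neither $\envvar$ nor the range of $\svar$, so it cannot appear in $\subst\svar\envvar$. Reapplying \TRforall and matching with $\subst\svar{(\uquant\tvar\typ)} = \uquant\tvar{\subst\svar\typ}$ and $\subst\svar{(\consquant\tvar\cons)} = \consquant\tvar{\subst\svar\cons}$ closes the case.

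The heart of the proof is \TRsintro, whose conclusion carries $\styp\subv\settv\typ$, $\skels\subv\settv\skel$, and $\conss\subv\settv\typ\cons$. By definition, substitution turns each of these into an expansion application driven by the expansion $\Ivar = \subst\svar\subv$ and the set $\ftv{\subst\svar\settv}$, e.g. $\subst\svar{(\styp\subv\settv\typ)} = \Inst{\subst\svar\subv}{\ftv{\subst\svar\settv}}{\subst\svar\typ}$. The plan is to apply the induction hypothesis to the premise, obtaining $\jtyp{\subst\svar\skel}\term{\subst\svar\envvar}{\subst\svar\typ}{\subst\svar\cons}$, and then to feed this judgement to the expansion soundness lemma (Lemma \ref{l:iinst-preserves-typings}) with expansion $\Ivar = \subst\svar\subv$ and parameter set $\ftv{\subst\svar\settv}$; the three substituted components then fall out exactly as the images that lemma produces.

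The main obstacle is discharging the side condition of Lemma \ref{l:iinst-preserves-typings}, namely $\ftv{\subst\svar\envvar} \subseteq \ftv{\subst\svar\settv}$. The premise supplies $\ftv\envvar \subseteq \settv$, and the natural supporting fact is a monotonicity lemma: applying $\svar$ to a type whose free variables lie in $\settv$ produces a type whose free variables lie in $\ftv{\subst\svar\settv}$, extended pointwise to environments. This is exactly why the set recorded on the E-variable must be updated to $\ftv{\subst\svar\settv}$ rather than kept as $\settv$. The delicate point is that this bookkeeping interacts with the subtyping constructor $\isub{\Ivar'}{\typ_2}$ of expansions and with the non-$\alpha$-convertible quantifier variable of $\iquant\tvar\Ivar$: one must check that the type variables $\svar$ introduces through a subtyping expansion do not clash with variables the same expansion subsequently tries to quantify over, so that every residual use of \TRforall and \TRsintro remains sound. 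Keeping this free-variable accounting consistent, so that all side conditions survive the substitution, is the crux of the argument.
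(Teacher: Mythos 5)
Your proposal matches the paper's proof essentially step for step: induction on $\skel$, routine reapplication of the rules in the syntax-directed and \TRsub cases, $\alpha$-renaming of the bound variable in the \TRforall case to get $\tvar \notin \ftv\svar$ and hence $\tvar \notin \ftv{\subst\svar\envvar}$, and in the \TRsintro case the induction hypothesis followed by Lemma \ref{l:iinst-preserves-typings} applied with $\Ivar = \subst\svar\subv$ and parameter set $\ftv{\subst\svar\settv}$, the side condition being discharged by the monotonicity observation that $\ftv\envvar \subseteq \settv$ implies $\ftv{\subst\svar\envvar} \subseteq \ftv{\subst\svar\settv}$. The residual worries you raise about $\isub{\Ivar'}{\typ_2}$ and the non-$\alpha$-convertible $\iquant\tvar\Ivar$ are already absorbed into the cited expansion-soundness lemma (whose $\iquant\tvar{\Ivar'}$ case only quantifies when $\tvar \notin \settv$), so they pose no additional obligation here.
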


\begin{proof}
  We proceed by induction on $\skel$.
  
  Suppose $\skel = \skelvar \expvar \envvar$; we have $\jtyp{\skelvar \expvar
    {\subst \svar \envvar}} \expvar {\subst \svar \envvar}{\apply {(\subst \svar
      \envvar)} \expvar} \consomega$ by rule \TRvar, and $\apply {(\subst \svar
      \envvar)} \expvar = \subst \svar {(\apply \envvar \expvar)}$hence the result
  holds.

  If $\skel = \skelabs \expvar {\skel'}$, then we have $\jtyp{\skelabs \expvar
    {\skel'}}{\lamb \expvar \term} \envvar {\fun{\typ_1}{\typ_2}} \cons$ with
  $\jtyp{\skel'}{\term} {\eextend \envvar {\eassign \expvar {\typ_1}}} {\typ_2}
  \cons$. We have $\jtyp{\subst \svar {\skel'}}{\term} {\eextend {\subst \svar
      \envvar} {\eassign \expvar {\subst \svar {\typ_1}}}} {\subst \svar
    {\typ_2}}{\subst \svar \cons}$ by induction, consequently we have
  $\jtyp{\skelabs \expvar {\subst \svar {\skel'}}}{\lamb \expvar \term}{\subst
    \svar \envvar}{\fun{\subst \svar {\typ_1}}{\subst \svar {\typ_2}}}{\subst
    \svar \cons}$ by rule \TRabs, i.e. $\jtyp{\subst \svar {\skelabs \expvar
      {\skel'}}}{\lamb \expvar \term}{\subst \svar \envvar}{\subst \svar
    {(\fun{\typ_1}{\typ_2})}}{\subst \svar \cons}$, as required.

  \begin{sloppypar}
  If $\skel = \skel_1 \skelapp \skel_2$, then we have $\jtyp {\skel_1 \skelapp
    \skel_2}{\term_1 \app \term_2}{\envvar}{\typ_2}{(\cons_1 \consinter
    \cons_2)}$ with $\jtyp
  {\skel_1}{\term_1}{\envvar}{\fun{\typ_1}{\typ_2}}{\cons_1}$ and $\jtyp
  {\skel_2}{\term_2}{\envvar}{\typ_1}{\cons_2}$. By induction, we have $\jtyp
  {\subst \svar {\skel_1}}{\term_1}{\subst \svar \envvar}{\subst \svar
    {(\fun{\typ_1}{\typ_2}})}{\subst \svar {\cons_1}}$ and $\jtyp {\subst \svar
    {\skel_2}}{\term_2}{\subst \svar \envvar}{\subst \svar {\typ_1}}{\subst
    \svar {\cons_2}}$. Since we have $\subst \svar {(\fun{\typ_1}{\typ_2})} =
  \fun{\subst \svar {\typ_1}}{\subst \svar {\typ_2}}$, we have $\jtyp {\subst
    \svar {\skel_1} \skelapp \subst \svar {\skel_2}}{\term_1 \app
    \term_2}{\subst \svar {\envvar}}{\subst \svar {\typ_2}}{\subst \svar
    {\cons_1} \consinter \subst \svar {\cons_2}}$ by rule \TRapp, i.e.  $\jtyp
  {\subst \svar {(\skel_1 \skelapp \skel_2)}}{\term_1 \app \term_2}{\subst \svar
    \envvar}{\subst \svar {\typ_2}}{\subst \svar {(\cons_1 \consinter
      \cons_2)}}$, as required.
  \end{sloppypar}

  If $\skel = \skelquant \tvar {\skel'}$, then we have $\jtyp {\skelquant
    \tvar {\skel'}} \term {\envvar}{\uquant \tvar \typ}{\cons}$ with
  $\jtyp {\skel'} \term \envvar \typ \cons$ and $\tvar \notin \ftv \envvar$. By
  $\alpha$-conversion, we can assume that $\tvar \notin \ftv\svar$.  By
  induction, we have $\jtyp {\subst \svar {\skel'}} \term {\subst \svar
    \envvar}{\subst \svar \typ}{\subst \svar \cons}$. Since we have $\tvar
  \notin \ftv \envvar$ and $\tvar \notin \ftv \svar$, we have $\tvar \notin \ftv
  {\subst \svar \envvar}$.  By rule \TRforall, we have $\jtyp{\skelquant
    {\tvar} {\subst \svar {\skel'}}} \term {\subst \svar \envvar}{\uquant
    \tvar {\subst \svar \typ}}{\subst \svar \cons}$, i.e. $\jtyp{\subst
    \svar {\skelquant \tvar {\skel'}}} \term {\subst \svar \envvar}{\subst
    \svar {\uquant \tvar \typ}}{\subst \svar \cons}$, hence the result
  holds.

  If $\skel = \sktsub {{\skel'}}{\typ_2}$, then we have $\jtyp {\skel'} \term
  \envvar {\typ_1}{\cons'}$ with $\cons = \cons' \consinter \typ_1 \consleq
  \typ_2$. By induction, we have $\jtyp {\subst \svar {\skel'}} \term {\subst
    \svar \envvar}{\subst \svar {\typ_1}}{\subst \svar {\cons'}}$.  Applying rule
  \TRsub, we obtain $\jtyp {\sktsub{\subst \svar {{\skel'}}}{\subst \svar
      {\typ_2}}} \term {\subst \svar \envvar}{\subst \svar {\typ_2}}{(\subst
    \svar {\cons'} \consinter \subst \svar {\typ_1} \consleq \subst \svar
    {\typ_2})}$, i.e. $\jtyp {\subst \svar {\sktsub{{\skel'}}{\typ_2}}} \term
  {\subst \svar \envvar}{\subst \svar {\typ_2}}{\subst \svar {(\cons' \consinter
      \typ_1 \consleq \typ_2)}}$, as required.

  If $\skel = \skels \subv \settv {\skel'}$, then we have $\jtyp {\skels \subv
    \settv {\skel'}} \term \envvar {\styp \subv \settv \typ}{\conss \subv
    \settv \typ {\cons'}}$ with $\jtyp {\skel'} \term \envvar \typ
  {\cons'}$ and $\ftv \envvar \subseteq \settv$. By induction, the judgement $\jtyp
  {\subst \svar {\skel'}} \term {\subst \svar \envvar}{\subst \svar \typ}{\subst
    \svar {\cons'}}$ holds. Because $\ftv \envvar \subseteq \settv$, we have $\ftv
  {\subst \svar \envvar} \subseteq \ftv{\subst \svar \settv}$, therefore by
  Lemma \ref{l:iinst-preserves-typings} we have $$\jtyp {\Inst {\subst \svar
      \subv}{\ftv{\subst \svar \settv}}{\subst \svar {\skel'}}} \term {\subst
    \svar \envvar}{\Inst {\subst \svar \subv}{\ftv {\subst \svar \settv}}{\subst
      \svar \typ}}{\Instc {\subst \svar
      \subv}{\ftv {\subst \svar \settv}}{\subst \svar \typ}{\subst \svar \cons}},$$ hence we
  have $\jtyp {\subst \svar {(\skels \subv \settv {\skel'})}} \term {\subst
    \svar \envvar}{\subst \svar {(\styp \subv \settv \typ)}}{\subst \svar
    \cons}$, as required.
  
\end{proof}


\section{Subject reduction}
\label{a:subject-reduction}

We prove subject reduction for the System F subtyping. We define an equivalent
type system where we turn equalities on types into explicit subtyping rules; we
then prove subject reduction in the equivalent type system.

\subsection{An equivalent type system \systemfsneq}

Typing judgements of \systemfsneq, written $\jneq \skel \term \envvar \typ$, are
derived according to rules given in Figure \ref{f:TR-fsneq}. The typing
rules are the same as in the original type system, except that we add an
environment subtyping rule (restricted to equality subtyping), and we do not
mention constraints anymore: in subtyping rules, we consider only solved
constraints, and the subtyping proofs are specified by subtyping skeletons
$\sksub$. Subtyping skeletons are mentionned in the skeletons $\skneqtsub \skel
\sksub$ and $\skneqenvsub \skel {\eassign y \sksub}$.

\begin{figure}[t]
\begin{mathpar}
  \inferrule
  {}
  {\jneq {\skelvar \expvar \envvar} \expvar {\envvar}
    {\apply \envvar \expvar}}~~\TRvar
  \and
  \inferrule
  {\jneq \skel \term {\eextend \envvar {\eassign \expvar {\typ_1}}} {\typ_2}}
  {\jneq {\skelabs \expvar \skel}{\lamb \expvar \term} \envvar
    {\fun{\typ_1}{\typ_2}}}~~\TRabs
  \and
  \inferrule
  {\jneq {\skel_1}{\term_1}{\envvar}{\fun{\typ_1}{\typ_2}} \\ 
    \jneq {\skel_2}{\term_2}{\envvar}{\typ_1} }
  {\jneq {\skel_1 \skelapp \skel_2}{\term_1 \app \term_2}{\envvar}{\typ_2}}~~\TRapp
  \and
  \inferrule
  {\jneq \skel \term \envvar \typ \\ \tvar \notin \ftv \envvar}
  {\jneq {\skelquant \tvar \skel} \term {\envvar}{\uquant \tvar 
      \typ}}~~\TRforall
  \and
  \inferrule
  {\jneq \skel \term \envvar {\typ_1} \\ \jsub \sksub {\typ_1} \vleq {\typ_2} }
  {\jneq {\skneqtsub \skel \sksub} \term {\envvar}{\typ_2}}~~\TRsolvedsub
  \and
  \inferrule
  {\jneq \skel \term {\eextend \envvar {\eassign y {\typ_1}}} \typ \\ \jsub
    \sksub {\typ_2} \leqeq {\typ_1}}
  {\jneq {\skneqenvsub \skel {\eassign y \sksub}} \term {\eextend \envvar
      {\eassign y {\typ_2}}} \typ}~~\TRenvsub
  \and
  \inferrule
  {\jneq \skel \term \envvar \typ \\ \ftv \envvar \subseteq \settv}
  {\jneq {\skels \subv \settv \skel} \term \envvar{\styp \subv \settv \typ}}~~\TRsintro
\end{mathpar}
\caption{Typing rules of \systemfsneq}
\label{f:TR-fsneq}
\end{figure}

The subtyping rules and the corresponding subtyping skeletons are given in
Fig. \ref{f:SR-fsneq}; we define two subtyping judgements $\jsub \sksub {\typ_1}
\leq {\typ_2}$ (for the regular subtyping relation, from the original type
system) and $\jsub \sksub {\typ_1} \leqeq {\typ_2}$ (which deals with equalities
on types). We let $\vleq$ ranges over the two subtyping relations. 

\begin{figure}[t]
\textbf{Regular subtyping}
\begin{mathpar}
  \inferrule
  {}
  {\jsub{\sksinst {\uquant \tvar {\typ_1}}{\typ_2}}\uquant \tvar
    {\typ_1} \leq \subst {\assign \tvar {\typ_2} \sepsubs \idsubs}{\typ_1}}
\end{mathpar}
\textbf{Equality subtyping}
\begin{mathpar}
  \inferrule
  {}
  {\jsub{\sksquantcomm{\typ}}{\uquant
      {\tvar_1}{\uquant{\tvar_2} \typ}}
    \leqeq {\uquant {\tvar_2}{\uquant {\tvar_1} \typ}}}
  \and
  \inferrule
  {\tvar \notin \ftv \typ}
  {\jsub{\sksdummyin{\typ}}{\typ
    \leqeq {\uquant \tvar \typ}}}
  \and
  \inferrule
  {\tvar \notin \ftv \typ}
  {\jsub{\sksdummyelim{\typ}}{\uquant
      \tvar \typ \leqeq \typ}}
  \and
  \inferrule
  {\jsub {\sksub_1}{\typ_2} \leqeq {\typ_1} \\ \jsub {\sksub_2}{\typ_3} \leqeq {\typ_4}}
  {\jsub {\sksfun{\sksub_1}{\sksub_2}}{\fun{\typ_1}{\typ_3}} \leqeq
    {\fun{\typ_2}{\typ_4}}}
  \and
  \inferrule
  {\jsub {\sksub}{\typ_1} \leqeq {\typ_2}}
  {\jsub {\skssubv \subv \settv \sksub}{\styp \subv \settv {\typ_1}} \leqeq {\styp \subv
     \settv  {\typ_2}}} 
  \and
  \inferrule
  {\jsub {\sksub}{\typ_1} \vleq {\typ_2}}
  {\jsub {\sksquant \tvar \sksub}{\uquant \tvar {\typ_1}} \leqeq {\uquant \tvar {\typ_2}}}
\end{mathpar}
\caption{Subtyping rules of \systemfsneq}
\label{f:SR-fsneq}
\end{figure}

\begin{lemma}
  If $\jsub \sksub {\typ_1} \leqeq {\typ_2}$ then $\jsub {\sksub'}{\typ_2} \leqeq {\typ_1}$.
\end{lemma}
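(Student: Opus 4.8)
The plan is to prove the statement by induction on the derivation of $\jsub \sksub {\typ_1} \leqeq {\typ_2}$, constructing the reversed subtyping skeleton $\sksub'$ by a case analysis on the last rule applied. Each rule that concludes an equality judgement is either self-dual or comes paired with a dual rule, so the construction of $\sksub'$ is entirely syntax-directed: in every case I read off a witness for $\typ_2 \leqeq \typ_1$ from the reversed sub-derivations provided by the induction hypothesis, and assemble the corresponding subtyping skeleton.

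The congruence-free rules are immediate. The quantifier-commutation rule is self-dual: reversing $\jsub{\sksquantcomm{\typ}}{\uquant{\tvar_1}{\uquant{\tvar_2}\typ}} \leqeq {\uquant{\tvar_2}{\uquant{\tvar_1}\typ}}$ is just a second instance of the same rule with $\tvar_1$ and $\tvar_2$ exchanged. The dummy-quantifier rules form a dual pair under the common side condition $\tvar \notin \ftv \typ$: the reverse of $\jsub{\sksdummyin{\typ}}{\typ} \leqeq {\uquant \tvar \typ}$ is the dummy-elimination instance $\jsub{\sksdummyelim{\typ}}{\uquant \tvar \typ} \leqeq {\typ}$, and symmetrically. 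For the E-variable rule the induction hypothesis turns $\typ_1 \leqeq \typ_2$ into $\typ_2 \leqeq \typ_1$, and re-applying the rule with skeleton $\skssubv \subv \settv {\sksub'}$ finishes the case. The function rule is the one place where orientation matters, but its built-in contravariance on the domain makes everything line up: from the premises $\typ_2 \leqeq \typ_1$ and $\typ_3 \leqeq \typ_4$ the induction hypothesis yields $\typ_1 \leqeq \typ_2$ and $\typ_4 \leqeq \typ_3$, which are exactly the two premises the rule needs to conclude $\fun{\typ_2}{\typ_4} \leqeq \fun{\typ_1}{\typ_3}$, so the reversed skeleton is again $\sksfun{\sksub_1'}{\sksub_2'}$ built from the reversals of the original sub-skeletons.

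The step I expect to be the main obstacle is the quantifier-congruence rule, whose premise is stated with the generic relation $\vleq$ rather than with $\leqeq$. When that premise is itself an equality judgement $\typ_1 \leqeq \typ_2$, the induction hypothesis applies directly and re-applying the rule with $\sksquant \tvar {\sksub'}$ gives $\uquant \tvar \typ_2 \leqeq \uquant \tvar \typ_1$, as wanted. The genuine difficulty is the interaction with the \emph{regular} relation $\leq$: since $\leq$ is the non-symmetric instantiation relation ($\forall$-elimination, witnessed by $\sksinst{}{}$), a premise $\typ_1 \leq \typ_2$ cannot in general be reversed to $\typ_2 \leqeq \typ_1$, so this sub-case resists a blind appeal to the induction hypothesis. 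Handling it requires pinning down exactly which instances of the congruence rule can legitimately contribute a $\leqeq$-conclusion — for example, showing that any instantiating premise that survives underneath the quantifier collapses to a reversible \emph{dummy} instantiation (which is then reversed via the dummy-in/dummy-elim pair), whereas a non-trivial instantiation cannot be the last step of a derivation of an equality judgement. Settling this case distinction is the crux of the argument; once it is in place, all remaining cases are the routine dualisations described above.
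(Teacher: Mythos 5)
Your overall strategy --- induction on the derivation of $\jsub \sksub {\typ_1} \leqeq {\typ_2}$, dualising each rule --- is exactly the paper's (its entire proof is the single line ``by induction on the derivation''), and your routine cases are all handled correctly: the commutation rule is self-dual, the two dummy-quantifier rules reverse into each other, the E-variable congruence follows from the induction hypothesis, and you rightly observe that the built-in contravariance in the domain premise of the arrow rule is precisely what makes that case close. You have also correctly located the one genuinely delicate point, namely the quantifier-congruence rule whose premise is stated with the generic relation $\vleq$ rather than with $\leqeq$.

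However, you leave that case open, and the repair you sketch is not available with the rules as written: nothing in Fig.~\ref{f:SR-fsneq} forces an instantiating premise under $\sksquant \tvar \sksub$ to ``collapse to a dummy instantiation''. Concretely, from the premise $\jsub {\sksinst {\uquant \tvar \tvar}{c}}{\uquant \tvar \tvar} \leq {c}$ the congruence rule yields $\jsub {\sksquant b {\sksinst {\uquant \tvar \tvar}{c}}}{\uquant b {(\uquant \tvar \tvar)}} \leqeq {\uquant b c}$ for a fresh $b$, and after suppressing the dummy quantifiers this is a derivation of $\uquant \tvar \tvar \leqeq c$; its reverse $c \leqeq \uquant \tvar \tvar$ is underivable, since no equality rule introduces a non-dummy quantifier on the right of a bare type variable. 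So the lemma fails on this subcase as literally stated, and the paper's one-line proof silently skips it as well. The clean resolution is to read the congruence rule as relation-preserving --- a $\leqeq$ conclusion requires a $\leqeq$ premise, the $\leq$ instance being a separate rule concluding $\leq$ --- after which your induction hypothesis applies directly, $\sksquant \tvar {\sksub'}$ witnesses the reverse, and no analysis of collapsing instantiations is needed. You should make that restriction explicit rather than leave the case as an acknowledged obstacle.
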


\begin{proof}
  By induction on the derivation of $\jsub \sksub {\typ_1} \leqeq {\typ_2}$

\end{proof}

\begin{lemma}
  We have $\typ_1 \leq \typ_2$ in \systemfs iff there exists $\sksub$ such that $\jsub \sksub
  {\typ_1} \vleq {\typ_2}$ in \systemfsneq. 
\end{lemma}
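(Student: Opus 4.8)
The plan is to prove the two implications separately, both by rule induction, exploiting the fact that the equality axioms of $\leqeq$ are exactly the type identifications imposed on \systemfs types.

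For the \emph{if} direction, I would assume $\jsub \sksub {\typ_1} \vleq {\typ_2}$ and argue by induction on this derivation that $\typ_1 \leqf \typ_2$ holds in \systemfs. The regular rule is literally \SRelim and yields the result at once. The three equality axioms (quantifier commutation and dummy-quantifier introduction/elimination) relate types that \systemfs identifies, so there $\typ_1 = \typ_2$ and $\typ_1 \leqf \typ_2$ follows from reflexivity of $\leqf$. For the arrow and E-variable congruence rules the premises are pure $\leqeq$ judgements, so by the induction hypothesis they are \systemfs equalities and the conclusion is again a \systemfs equality; for the $\forall$-congruence rule the premise may itself be an instantiation, in which case I would reorder the two quantifiers so that the instantiated one becomes outermost and then apply \SRelim once (if the premise is instead an equality, the conclusion is again an equality, handled as before).

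For the \emph{only if} direction, I would use that $\typ_1 \leqf \typ_2$ means, modulo the \systemfs equalities, that $\typ_1 = \uquant \tvar \typ$ and $\typ_2 = \subst {\assign \tvar {\typ'} \sepsubs \idsubs}{\typ}$ for some $\tvar$, $\typ$, $\typ'$. First I would establish an auxiliary lemma that every \systemfs type equality is realised by $\leqeq$ (by induction on the generating equalities, using the congruence rules and the already-proved symmetry of $\leqeq$). Then $\typ_1 \leqeq \uquant \tvar \typ$ and $\subst {\assign \tvar {\typ'} \sepsubs \idsubs}{\typ} \leqeq \typ_2$ both hold, while $\uquant \tvar \typ \leq \subst {\assign \tvar {\typ'} \sepsubs \idsubs}{\typ}$ is the regular rule; chaining these three steps produces the required $\sksub$ for $\typ_1 \vleq \typ_2$. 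This chaining relies on a transitivity lemma for $\vleq$ (for a single $\leq$ step flanked by two $\leqeq$ steps), which I would prove alongside reflexivity of $\leqeq$.

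The hard part will be reconciling the two relations at the congruence rules, and especially at the $\forall$-congruence rule. Because $\leqf$ is a single top-level instantiation and does not descend under arrows, a $\leqeq$-congruence step can be matched only when it ultimately corresponds, after quantifier reordering, either to a \systemfs equality or to a single top-level use of \SRelim. I therefore expect the delicate work to be twofold: (i) showing that $\leqeq$ faithfully models \systemfs type equality, i.e.\ that it is reflexive and transitive in addition to symmetric, so that the arrow and E-variable cases go through; and (ii) discharging the capture and free-variable side conditions when an inner quantifier must be moved to the outside to expose it to \SRelim. These bookkeeping steps are where the argument is most error-prone.
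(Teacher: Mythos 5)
Your overall strategy---two separate rule inductions, one on the derivation of $\typ_1 \leq \typ_2$ and one on the derivation of $\jsub \sksub {\typ_1} \vleq {\typ_2}$---is exactly what the paper does (its proof records nothing beyond that one sentence), and your reliance on symmetry of $\leqeq$ matches the lemma the paper proves immediately before this one. However, your elaboration of the \emph{if} direction has a concrete gap at the congruence rules. You treat every $\leqeq$ premise of the arrow and E-variable congruence rules as a \systemfs type equality, but the $\forall$-congruence rule has a $\vleq$ premise, so a $\leqeq$ derivation may embed a genuine use of the instantiation rule under one or more binders. Feeding such a derivation into the arrow congruence rule produces a conclusion in which the two types differ by an instantiation strictly below an arrow constructor, and no single top-level use of \SRelim (even modulo the type equalities) relates such a pair; the induction hypothesis as you state it cannot discharge this case. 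Your fallback for the $\forall$-congruence case---commute the outer quantifier inward and apply \SRelim once---also silently assumes that the instantiating type does not mention the outer bound variable; if the premise instantiates $\tvar$ with a type containing the $b$ bound in the conclusion, capture-avoiding substitution blocks the commutation and the target \systemfs judgement is not obviously derivable at all. You need either to strengthen the induction hypothesis so that it classifies $\leqeq$ derivations (pure equality versus instantiation-under-binders) and show the bad configurations never reach a congruence rule, or to restrict attention to the subtyping derivations actually produced by the typing rules.

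The \emph{only if} direction leans on an auxiliary claim---every \systemfs type equality is realised by $\leqeq$---that does not hold for the rules as given: none of them derives $\tvar \leqeq \tvar$ for a bare type variable, so $\leqeq$ is not reflexive, and the congruence rules cannot rewrite an equality occurring under an arrow while leaving the other component untouched. Concretely, deriving $\fun{\tvar}{\uquant{b_1}{\uquant{b_2}{\typ}}} \leqeq \fun{\tvar}{\uquant{b_2}{\uquant{b_1}{\typ}}}$ requires the premise $\tvar \leqeq \tvar$, which is underivable. You would have to first establish (or add) reflexivity of $\leqeq$, and likewise justify the chaining of a $\leq$ step between two $\leqeq$ steps, since $\vleq$ has no transitivity rule. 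These are precisely the obligations you flag at the end as ``error-prone,'' but as the plan stands they are unmet rather than merely delicate, so the argument is not yet a proof.
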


\begin{proof}
  By induction on the derivation of $\typ_1 \leq \typ_2$ and by induction on
  the derivation of $\jsub \sksub {\typ_1} \vleq {\typ_2}$

\end{proof}

\begin{lemma}
  \label{l:corr-type-systems}
  We have $\jneq {\skel} \term \envvar \typ$ iff $\jtyp {\skel'} \term
  {\envvar}{\typ} \cons$ where $\cons$ is solved.
\end{lemma}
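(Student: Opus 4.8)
The plan is to prove Lemma \ref{l:corr-type-systems}, which states the equivalence between the two type systems \systemfs and \systemfsneq. The statement is an ``iff'', so I will prove both directions, each by induction on the relevant typing derivation. The systems differ in exactly two ways: (1) \systemfsneq does not carry constraints, instead requiring subtyping uses to witness \emph{solved} (specifically, reflexive-up-to-System-F-subtyping) instances via explicit subtyping skeletons $\sksub$; and (2) \systemfsneq has an explicit environment subtyping rule \TRenvsub and treats the type equalities (quantifier reordering, dummy quantifier introduction/elimination) as explicit equality-subtyping steps rather than as definitional equalities. I would first record the preparatory lemma already stated in the excerpt, that $\typ_1 \leq \typ_2$ in \systemfs iff there is some $\sksub$ with $\jsub \sksub {\typ_1} \vleq {\typ_2}$ in \systemfsneq, since this is what lets me convert between an atomic constraint $\typ_1 \consleq \typ_2$ being solved and a subtyping skeleton existing.

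For the forward direction, I would proceed by induction on the derivation of $\jneq \skel \term \envvar \typ$. The rules \TRvar, \TRabs, \TRapp, \TRforall, and \TRsintro each map to their \systemfs counterpart; in each case the induction hypothesis supplies a \systemfs derivation with \emph{some} constraint, and I must check that the accumulated constraint is solved. For the leaf and structural rules this is immediate (constraints are $\consomega$, conjunctions, or quantified/E-variable-wrapped versions of solved constraints, all of which are solved by the definition of $\rawsolved$). The interesting case is \TRsolvedsub: here the \systemfsneq premise carries $\jsub \sksub {\typ_1} \vleq {\typ_2}$, which by the preparatory lemma gives $\typ_1 \leq \typ_2$ in \systemfs, so the \systemfs constraint $\typ_1 \consleq \typ_2$ is solved by the rule $\inferrule{\typ_1 \leq \typ_2}{\solved{\typ_1 \consleq \typ_2}\leq}$. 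The genuinely new case is \TRenvsub, which has no direct analogue in \systemfs; I would handle it by showing that an equality-subtyping step on the environment can be absorbed, since the \systemfs type equalities identify $\envvar$ with its equality-subtype, so the \systemfs judgement with the original environment already suffices (possibly after invoking the symmetry lemma for $\leqeq$).

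For the backward direction, I would induct on the derivation of $\jtyp {\skel'} \term \envvar \typ \cons$ under the assumption that $\cons$ is solved. The main point is that solvedness of $\cons$ propagates to solvedness of the sub-constraints appearing in the premises (this follows by inspecting each typing rule against the $\rawsolved$ rules: the constraint of each premise is a structural piece of the conclusion's constraint, so if the whole is solved, each piece is). This lets the induction hypothesis apply to each premise. The critical case is \TRsub, producing constraint $\cons'' \consinter (\typ_1 \consleq \typ_2)$: solvedness gives both $\solved{\cons''}\leq$ and $\typ_1 \leq \typ_2$, and the latter yields an explicit $\sksub$ with $\jsub \sksub {\typ_1} \leq {\typ_2}$ via the preparatory lemma, so rule \TRsolvedsub applies to build the \systemfsneq derivation $\skneqtsub \skel \sksub$.

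The main obstacle I anticipate is the treatment of the definitional type equalities of \systemfs versus the explicit equality-subtyping rule of \systemfsneq. In \systemfs the equalities (reordering adjacent quantifiers, adding/removing dummy quantifiers) are \emph{silent}: types are literally identified, so a \systemfs derivation never records these steps. In \systemfsneq they become visible \TRenvsub and equality-subtyping ($\leqeq$) steps with their own subtyping skeletons. Bridging this requires showing that whenever two types are equal in \systemfs, there is an equality-subtyping derivation $\jsub \sksub {\typ_1} \leqeq {\typ_2}$ between them in \systemfsneq (congruence of $\leqeq$ over the type constructors, matching the equations), and conversely that the equality-subtyping steps, when erased together with the E-variable and constraint annotations, collapse to the \systemfs equalities. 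I expect this to require an auxiliary lemma characterising $\leqeq$ as exactly the \systemfs type-equality relation, established by induction on both the equality axioms and the $\leqeq$ rules; the symmetry lemma for $\leqeq$ already stated will be needed here. Once that correspondence is in hand, the two inductions go through by matching rules case by case.
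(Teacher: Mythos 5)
Your proposal follows essentially the same route as the paper, whose proof of this lemma is just the one-line statement that it proceeds by induction on each of the two derivations in turn; your elaboration --- converting between solved atomic constraints and subtyping skeletons via the preceding lemma, and reconciling the silent type equalities of \systemfs with the explicit $\leqeq$ and \TRenvsub steps of \systemfsneq --- fills in exactly the case analysis that proof sketch leaves implicit. The auxiliary characterisation of $\leqeq$ as the \systemfs type-equality relation that you flag as the main obstacle is indeed the substantive content the paper elides, and your plan for it is sound.
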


\begin{proof}
  By induction on the derivation of $\jneq \skel \term \envvar \typ$, and by
  induction on the derivation of $\jtyp {\skel'} \term \envvar \typ \cons$.

\end{proof}

We now prove subject reduction for \systemfsneq.

\subsection{Skeleton transformation}

In Figure \ref{f:def-T}, we define a transformation $\trsymb$ on skeletons which
preserves typings while removing the unnecessary uses of the result subtyping
rule \TRsolvedsub. We write $\envvar_1 \leqeq \envvar_2$ iff for all $\expvar$,
we have $\apply{\envvar_1}{\expvar} \leqeq^+ \apply{\envvar_2}{\expvar}$. For
all $\skel$, $\envvar_1$, $\envvar_2$ such that $\envvar_2 = \tenv{\skel}$ and
$\envvar_1 \leqeq \envvar_2$, there exists a skeleton, written $\skneqenvsub
\skel {\envvar_1 \leqeq \envvar_2}$, obtained from $\skel$ by repeated use of
the environment subtyping rule and such that $\tenv{\skneqenvsub \skel
  {\envvar_1 \leqeq \envvar_2}}=\envvar_1$. We informally use this notation in
proof when it is more convenient than the original one.

\begin{figure}[t]
\textbf{Constructor rules}
\begin{mathpar}
  \inferrule
  {}
  {\tr{\lamb \expvar \skel}=\lamb \expvar \skel}
  \and
  \inferrule
  {}
  {\tr{\skelquant \tvar \skel}=\skelquant \tvar {\tr \skel}}
  \and
  \inferrule
  {\tr{\skel}=\skelabs \expvar {\skel'} }
  {\tr{\skneqenvsub \skel {\eassign y \sksub}}=\skelabs \expvar
    {(\skneqenvsub {\skel'} {\eassign y \sksub} )}}
  \and
  \inferrule
  {\tr\skel = \skels \subv \settv {\skel'}}
  {\tr {\skneqenvsub \skel {\eassign y \sksub}}=\skels \subv \settv {(\skneqenvsub
      {\skel'}{\eassign y \sksub})}}
  \and
  \inferrule
  {\tr\skel = \skelquant \tvar {\skel'}}
  {\tr {\skneqenvsub \skel {\eassign y \sksub}}=\skelquant \tvar {(\skneqenvsub
      {\skel'}{\eassign y \sksub})}}
  \and
  \tr {\skels \subv \settv \skel}= \skels \subv \settv \skel
\end{mathpar}
\textbf{Original subtyping rules}
\begin{mathpar}
  \inferrule
  {\tr \skel = \skelquant \tvar {\skel'}
  }
  {\tr{\skneqtsub \skel {\sksinst {\uquant \tvar {\typ_1}}{\typ_2}} }= \tr
    {
      {\subst {\assign \tvar {\typ_2}}{\skel'}}}} 
\end{mathpar}
\textbf{Equality subtyping rules}
\begin{mathpar}
  \inferrule
  {\tr {\skel} = \skels \subv \settv {\skel'}}
  {\tr {\skneqtsub \skel {\skssubv \subv \settv \sksub}}=\skels \subv \settv {(\skneqtsub
      {(\skel')}\sksub)}}
  \and
  \inferrule
  {\tr \skel = \skelquant {\tvar_1}{\skel_1} \\ \tr{\skel_1} =
    \skelquant {\tvar_2}{\skel_2}} 
  {\tr {\skneqtsub \skel {\sksquantcomm{\typ}}}= 
    \skelquant {\tvar_2}{\skelquant{\tvar_1}{\skel_2}} }
  \and
  \inferrule
  {\tr{\skel}=\skelabs \expvar \skel}
  {\tr{\skneqtsub \skel {\sksfun{\sksub_1}{\sksub_2}}}=\skelabs \expvar
    {(\skneqenvsub {(\skneqtsub \skel {\sksub_2})}{\eassign \expvar
        {\sksub_1}})} }
  \and
  \inferrule
  {\tr \skel = \skelquant \tvar {\skel_1} \\
    \tr{\skneqtsub {\skel_1}{\sksub}}=\skel'_1
  }
  {\tr{\skneqtsub \skel {\sksquant \tvar {\sksub}} }= \skelquant \tvar {\skel'_1}} 
  \and
  \inferrule
  {\tvar \notin \ftv \typ}
  {\tr {\skneqtsub \skel {\sksdummyin \sksub}} = \skelquant \tvar \skel}
  \and
  \inferrule
  {\tr \skel = \skelquant \tvar {\skel'}}
  {\tr {\skneqtsub \skel {\sksdummyelim{\typ}}}= \tr{\skel'}}
\end{mathpar}
\caption{Definition of $\trsymb$}
\label{f:def-T}
\end{figure}

\subsection{Induction principle on skeletons}

\begin{figure}[t]
\begin{mathpar}
  \sz{\skelabs \expvar \skel} = 1 
  \and
  \sz{\skelquant \tvar \skel}=1+ \sz\skel
  \and
  \sz{\skneqenvsub \skel {\eassign y \sksub}}=1+ \sz\skel
  \and
  \sz{\skels \subv \settv \skel} = 1+\sz \skel
  \and
  \sz{\skneqtsub \skel {\sksinst{\uquant \tvar {\typ_1}}{\typ_2}}
  }=1+\sz{\skel}
  \and
  \sz{\skneqtsub \skel {\sksquant{\tvar} {\sksub}}
  }=1+\sz{\skneqtsub \skel {\sksub}}
  \and
  \sz{\skneqtsub \skel {\sksquantcomm{\typ}}}=1+\sz\skel
  \and
  \sz{\skneqtsub \skel {\sksdummyin{\typ}}}=2+\sz\skel
  \and
  \sz{\skneqtsub \skel {\sksdummyelim{\typ}}}=1+\sz\skel
  \and
  \sz{\skneqtsub \skel {\skssubv \subv \settv \sksub}}=2+\sz{\skneqtsub \skel \sksub}
  \and
  \sz{\skneqtsub \skel {\sksfun{\sksub_1}{\sksub_2}} }=1+\sz\skel
\end{mathpar}
\caption{Definition of \textsf{sz}}
\label{f:def-sz}
\end{figure}

Proofs on skeletons are by induction on the size $\sz \skel$, defined in Figure
\ref{f:def-sz}. We prove that transformation $\trsymb$ makes the size
decrease. We need first some preliminary results.

\begin{lemma}
\label{l:sz-prop}

  \begin{itemize}
  \item We have $\sz\skel \geq 1$.
  \item $\sz{\skel}=1$ iff $\skel = \skelabs \expvar {\skel'}$.
  \item For all $\skel$, we have $\sz{\subst{\assign \tvar {\typ} \sepsubs
        \idsubs}{\skel}}=\sz{\skel}$.
  \item For all $\sksub$, we have $\sz{\skneqtsub \skel \sksub} >\sz\skel$.
  \item For all $\sksub$, if $\sz{\skel_1} \leq \sz{\skel_2}$ then
    $\sz{\skneqtsub {\skel_1} \sksub} \leq \sz{\skneqtsub {\skel_2} \sksub}$ 
  \end{itemize}
\end{lemma}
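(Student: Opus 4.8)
The plan is to prove all five items by structural induction, reading the clauses of the size definition in Figure~\ref{f:def-sz} directly. Items~1 and~2 go together by a single case analysis on the outermost constructor of $\skel$. Every clause of Figure~\ref{f:def-sz} either is the base clause $\sz{\skelabs \expvar \skel} = 1$ or adds a strictly positive constant to the size of a strictly smaller skeleton (a proper subterm, or $\skneqtsub \skel {\sksub'}$ for a proper subterm $\sksub'$), which by the induction hypothesis is already $\geq 1$; hence $\sz\skel \geq 1$ in all cases, giving item~1. The backward direction of item~2 is immediate from the base clause, and the forward direction follows because, using item~1, the only clause that can produce exactly $1$ is the abstraction clause (all others evaluate to at least $2$).

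Items~4 and~5 are each proved by induction on the subtyping skeleton $\sksub$, for a fixed $\skel$ (resp.\ fixed $\skel_1,\skel_2$). For item~4, each defining clause of $\sz{\skneqtsub \skel \sksub}$ adds at least $1$ either to $\sz\skel$ or to $\sz{\skneqtsub \skel {\sksub'}}$ for a proper subterm $\sksub'$; in the two recursive cases ($\sksquant \tvar {\sksub'}$ and $\skssubv \subv \settv {\sksub'}$) the induction hypothesis gives $\sz{\skneqtsub \skel {\sksub'}} > \sz\skel$, so adding a positive constant preserves strictness. Item~5 is the same induction: each clause has the form $\sz{\skneqtsub \skel \sksub} = c + \sz\skel$ or $\sz{\skneqtsub \skel \sksub} = c + \sz{\skneqtsub \skel {\sksub'}}$, with $c$ depending only on the top constructor of $\sksub$, so monotonicity of the size in its skeleton argument is preserved, invoking the induction hypothesis in the recursive cases.

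The main work is item~3. The key observation is that applying $\subst{\assign \tvar \typ \sepsubs \idsubs}{\cdot}$ never changes the outermost constructor of a skeleton (nor of a subtyping skeleton, once $\subst{\svar}{\cdot}$ is extended to subtyping skeletons by pushing through each constructor to the stored types): it only rewrites the types and variable sets recorded inside. For the E-variable case $\skels \subv \settv \skel$ this uses that $\subst{\assign \tvar \typ \sepsubs \idsubs}{\subv} = \iapp \subv \emptyset \idi$, so the image is again an E-variable skeleton with the same constructor, an updated set, and a recursively substituted body. Moreover, because in \systemfsneq the type equalities are realized as explicit subtyping skeletons rather than silent identifications, substitution cannot collapse or create $\forall$-nodes, so the constructor structure of the skeleton is rigid. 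I would therefore induct on $\skel$, checking in each case that the defining clause of $\sz$ matches before and after substitution and handling subterms by the induction hypothesis; the subtyping-skeleton cases $\skneqtsub {\skel} {\sksub}$ require a nested induction on $\sksub$, again matching clauses constructor-by-constructor and using freshness of bound variables (restored by $\alpha$-conversion) to keep the side conditions $\tvar \notin \ftv \typ$ of the dummy-quantifier rules $\sksdummyin \typ$ and $\sksdummyelim \typ$ intact.

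The step I expect to be the main obstacle is item~3, specifically the bookkeeping in the subtyping-skeleton cases: one must confirm that the extension of substitution to subtyping skeletons commutes with the constructor-driven size clauses, and that the nested induction on $\sksub$ (needed because $\sz{\skneqtsub {\skel} {\sksub}}$ recurses through $\sksquant \tvar {\sksub'}$ and $\skssubv \subv \settv {\sksub'}$) is well founded alongside the outer induction on $\skel$. Once substitution is seen to preserve every constructor, each remaining verification reduces to a direct clause-by-clause match, and the other four items follow by the routine inductions sketched above.
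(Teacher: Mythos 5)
Your proposal is correct and takes essentially the same route as the paper: the paper dismisses the first three items as easy and proves the fourth and fifth by induction on $\sksub$, which are exactly the inductions you describe. Your elaboration of item three (the substitution $\subst{\assign \tvar \typ \sepsubs \idsubs}{\cdot}$ preserves the outermost constructor of every skeleton, including the E-variable case via $\subst \idsubs \subv = \iapp \subv \emptyset \idi$, so each size clause matches before and after) correctly fills in the detail the paper leaves implicit.
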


\begin{proof}
  The first three items are easy. The fourth item is by induction on $\sksub$. 

  Suppose $\sksub = \sksquant \tvar {\sksub'}$. By induction we have $\sz
  {\skneqtsub {\skel}{\sksub'}} > \sz \skel$, consequently we have $\sz{\skneqtsub \skel
    \sksub} > \sz \skel+1 > \sz\skel$. The remaining cases are easy.\\

  The last item is by induction on $\sksub$.
  
  Suppose $\sksub = \sksquant \tvar {\sksub'}$. Let $\skel_1, \skel_2$ such that
  $\sz{\skel_1} \leq \sz{\skel_2}$. By induction we have $\sz {\skneqtsub
    {\skel_1}{\sksub'}} \leq \sz {\skneqtsub {\skel_2}{\sksub'}}$.  Consequently
  we have $\sz {\skneqtsub {\skel_1}{\sksub'}}+1 \leq \sz {\skneqtsub
    {\skel_2}{\sksub'}}+1$, i.e. $\sz {\skneqtsub {\skel_1} \sksub} \leq \sz
  {\skneqtsub {\skel_2} \sksub}$, as wished. In the remaining cases, the size of
  $\sz{\skneqtsub {\skel_1} \sksub}$ differs from $\sz{\skel_1}$ by a positive
  integer $C(\sksub)$, i.e. we have $\sz{\skneqtsub {\skel_1} \sksub}=\sz
  {\skel_1}+C(\sksub) \leq \sz {\skel_2}+C(\sksub) = \sz{\skneqtsub {\skel_2}
    \sksub}$, hence the result holds.
  
\end{proof}

\begin{lemma}
  \label{l:sz-decr}
  We have $\sz{\tr\skel} \leq \sz{\skel}$.
\end{lemma}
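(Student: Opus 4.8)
The plan is to prove the inequality by strong induction on $\sz\skel$, following the case analysis used to define $\trsymb$ in Figure~\ref{f:def-T}. The three quantitative items of Lemma~\ref{l:sz-prop} are exactly the tools this induction needs: substitution preserves size ($\sz{\subst{\assign\tvar\typ\sepsubs\idsubs}{\skel}}=\sz\skel$), prefixing a result-subtyping strictly increases size ($\sz{\skneqtsub\skel\sksub}>\sz\skel$), and $\skneqtsub{\cdot}\sksub$ is monotone in its skeleton argument. I would use the first two to check that every recursive call to $\trsymb$ in the definition lands on a skeleton of strictly smaller size, so that the induction hypothesis applies, and the last to compare the sizes of such calls. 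The induction hypothesis is always available on the immediate subskeleton $\skel$ of a subtyping node, since the fourth item gives $\sz\skel<\sz{\skneqtsub\skel\sksub}$.

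The constructor cases are immediate. For $\skel=\skelabs\expvar{\skel'}$ and $\skel=\skels\subv\settv{\skel'}$ the map $\trsymb$ is the identity and the bound is an equality; for $\skel=\skelquant\tvar{\skel'}$ we get $\sz{\tr\skel}=1+\sz{\tr{\skel'}}\leq 1+\sz{\skel'}=\sz\skel$ by the induction hypothesis. The environment-subtyping cases split on the shape of $\tr\skel$: when $\tr\skel=\skelabs\expvar{\skel'}$ the output collapses to an abstraction of size $1$, and otherwise the output has the form $\skels\subv\settv{(\cdots)}$ or $\skelquant\tvar{(\cdots)}$ of size $2+\sz{\skel'}$, where reading the size of $\tr\skel$ off the induction hypothesis (via $\sz{\tr\skel}=1+\sz{\skel'}\leq\sz\skel$) yields $\sz{\skel'}\leq\sz\skel-1$ and hence the bound $2+\sz{\skel'}\leq 1+\sz\skel$.

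The subtyping cases carry the content. In each one I first invoke the induction hypothesis on $\skel$ to learn both $\sz{\tr\skel}\leq\sz\skel$ and the shape of $\tr\skel$ (e.g.\ $\tr\skel=\skelquant\tvar{\skel'}$ or $\skels\subv\settv{\skel'}$), and reading off $\sz{\tr\skel}$ bounds the size of the extracted subskeleton. The delicate cases are those in which $\trsymb$ is reapplied to a skeleton that is \emph{not} a syntactic subterm of the input. The hard part will therefore be the bookkeeping around well-foundedness, since the termination of $\trsymb$ and this size bound are intertwined: for the $\sksinst$ rule the third item gives $\sz{\subst{\assign\tvar{\typ_2}}{\skel'}}=\sz{\skel'}=\sz{\tr\skel}-1\leq\sz\skel-1<1+\sz\skel=\sz{\skneqtsub\skel{\sksinst{\uquant\tvar{\typ_1}}{\typ_2}}}$, so the recursion on $\subst{\assign\tvar{\typ_2}}{\skel'}$ is legitimate and its transform has size at most $\sz{\skel'}\leq\sz\skel-1$; for the $\sksquant$ rule the monotonicity item together with $\sz{\skel_1}\leq\sz\skel$ gives $\sz{\skneqtsub{\skel_1}\sksub}\leq\sz{\skneqtsub\skel\sksub}<\sz{\skneqtsub\skel{\sksquant\tvar\sksub}}$, validating the recursion and bounding $\sz{\tr{\skneqtsub{\skel_1}\sksub}}$; and for $\sksquantcomm$ I apply the induction hypothesis a second time, to $\skel_1$, to bound $\sz{\skel_2}$ through $\sz{\tr{\skel_1}}=1+\sz{\skel_2}\leq\sz{\skel_1}\leq\sz\skel-1$. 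Chaining these inequalities with the size definitions of Figure~\ref{f:def-sz} closes each remaining case ($\skssubv$ via monotonicity, $\sksdummyin$ and $\sksdummyelim$ by inspection).
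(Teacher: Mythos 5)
Your proposal is correct and follows essentially the same route as the paper's proof: strong induction on $\sz\skel$ with a case analysis on the defining clauses of $\trsymb$, using exactly the items of Lemma~\ref{l:sz-prop} (size preservation under substitution for the $\sksinst$ case, monotonicity of $\skneqtsub{\cdot}{\sksub}$ for the $\sksquant$ case, and a double application of the induction hypothesis for $\sksquantcomm$). The bookkeeping you identify as the delicate point --- checking that each non-structural recursive call of $\trsymb$ lands on a skeleton of strictly smaller size --- is precisely how the paper's argument is organized.
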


\begin{proof}
  We proceed by induction on $\sz{\skel}$. If $\sz{\skel}=1$, then $\skel =
  \lamb \expvar \skel$, so $\tr\skel = \skel$, hence the result holds. Suppose
  that the result holds for $\sz\skel
  \leq n$; we prove it for $\sz\skel = n+1$ by case analysis on $\skel$.\\

  If $\skel=\skelquant \tvar {\skel'}$, then $\tr\skel = \skelquant \tvar
  \tr{\skel'}$. We have $\sz{\skel'} < \sz{\skel}$, so by induction we have
  $\sz{\tr{\skel'}} \leq \sz{\skel'}$. Consequently we have
  $\sz{\tr\skel}=\sz{\tr{\skel'}}+1 \leq \sz{\skel'}+1 = \sz{\skel}$, hence the
  result holds.\\

  If $\skel=\skels \subv \settv {\skel'}$, then $\tr \skel = \skel$, hence the result
  holds.\\

  If $\skel=\skneqenvsub {\skel'}{\eassign y \sksub}$, then we distinguish
  several cases. If $\tr{\skel'}=\lamb \expvar {\skel''}$ for some $\skel''$,
  then we have $\tr{\skel}=\lamb \expvar {(\skneqenvsub {\skel''}{\eassign y
      \sksub})}$; we have $\sz{\tr\skel}=1 \leq \sz{\skel}$ as required. If
  $\tr{\skel'}=\skels \subv \settv {\skel''}$, then $\tr\skel = \skels \subv
  \settv {(\skneqenvsub {\skel''}{\eassign y \sksub})}$. By induction we have
  $\sz{\tr{\skel'}} \leq \sz{\skel'}$, i.e. $\sz{\skel''}+1 \leq
  \sz{\skel'}$. Therefore we have $\sz{\tr \skel}= 2+\sz{\skel''} \leq
  \sz{\skel'}+1 \leq \sz{\skel}$, as required. If $\tr{\skel'}=\skelquant \tvar
  {\skel''}$, then the proof is similar to
  the previous case. \\
 
  If $\skel = \skneqtsub {\skel'}{\sksinst {\uquant \tvar
      {\typ_1}}{\typ_2}}$, then $\tr{\skel}= \tr {\subst {\assign \tvar
      {\typ_2}}{\skel''}}$ with $\tr {\skel'} = \skelquant \tvar
  {\skel''}$. We have $\sz{\skel'}<\sz\skel$, so by induction we have
  $\sz{\tr{\skel'}} \leq \sz{\skel'}$, i.e. $\sz{\skel''}+1 \leq
  \sz{\skel'}$. By Lemma \ref{l:sz-prop}, we have $\sz{\subst{\assign \tvar
      {\typ_2}}{\skel''}}=\sz{\skel''}$, therefore we have $\sz{\subst {\assign
      \tvar {\typ_2}}{\skel''}}=\sz{\skel''} \leq \sz{\skel'} < \sz{\skel}$, so
  by induction we have $\sz{\tr{\subst {\assign \tvar {\typ_2}}{\skel''}}} \leq
  \sz{\subst {\assign \tvar {\typ_2}}{\skel''}} \leq \sz\skel$, i.e. $\sz{\tr
    \skel}\leq \sz\skel$, as required.\\

  If $\skel = \skneqtsub{\skel'}{\sksquant \tvar {\sksub}}$, then
  $\tr{\skel}= \skelquant \tvar {\skel'_1}$ with $\tr{\skel'} = \skelquant
  \tvar {\skel_1}$, and $\tr{\skneqtsub {\skel_1}{\sksub}}=\skel'_1$. We
  have $\sz{\skel'} <\sz{\skel}$, so by induction, we have $\sz{\tr{\skel'}}
  \leq \sz{\skel'}$, i.e.  $\sz{\skel_1}+1 \leq \sz{\skel'}$. Consequently we
  have $\sz{\skel_1}\leq \sz{\skel'}$, so by Lemma \ref{l:sz-prop} we have
  $\sz{\skneqtsub {\skel_1}{\sksub}}\leq \sz{\skneqtsub{\skel'}{\sksub}}$. By
  the definition of $\sz{}$, we have then $\sz{\skneqtsub {\skel_1}{\sksub_1}} <
  \sz{\skel}$, so by induction we have $\sz{\tr{\skneqtsub {\skel_1}{\sksub}}}
  \leq \sz{\skneqtsub {\skel_1}{\sksub}}$. Finally we have $\sz{\tr
    \skel}=\sz{\skel'_1}+1 = \sz{\tr{\skneqtsub{\skel_1}{\sksub}}}+1 \leq
  \sz{\skneqtsub {\skel_1}{\sksub}}+1 \leq \sz{\skneqtsub {\skel'}{\sksub}}+1
  \leq \sz{\skel}$, hence the result holds.\\

  If $\skel = \skneqtsub {{\skel'}}{\skssubv \subv \settv \sksub}$, then $\tr\skel =
  \skels \subv \settv {(\skneqtsub {{\skel'}}{\sksub})}$. We have $\sz{\tr \skel}=
  \sz{\skneqtsub {{\skel'}}{\sksub}}+1 \leq \sz \skel$, hence the result holds.\\

  If $\skel = \skneqtsub {\skel'}{\sksquantcomm{\typ}}$, then $\tr \skel =
  \skelquant {\tvar_2}{\skelquant {\tvar_1}{\skel_2}}$ with $\tr {\skel'} =
  \skelquant {\tvar_1}{\skel_1}$ and $\tr{\skel_1} = \skelquant
  {\tvar_2}{\skel_2}$. By induction we have $\sz {\tr {\skel'}} \leq \sz
  {\skel'}$, i.e. $\sz {\skel_1}+1 \leq \sz{\skel'}$. Applying the induction
  hypothesis on $\skel_1$, we have $\sz{\tr{\skel_1}} \leq \sz{\skel_1}$,
  i.e. $\sz{\skel_2}+1 \leq \sz{\skel_1}$. Consequently we have $\sz {\tr
    \skel}= \sz{\skel_2}+2 \leq
  \sz{\skel_1}+1 \leq \sz{\skel'} \leq \sz{\skel}$, hence the result holds.\\

  If $\skel = \skneqtsub {\skel'}{\sksdummyin{\typ}}$, then $\tr \skel =
  \skelquant \tvar {\skel'}$, where $\tvar \notin \ftv{\skel'}$. We have $\sz {\tr
    \skel}= \sz{\skel'}+1 \leq \sz{\skel'}+2 = \sz{\skel}$, hence the result
  holds.\\

  If $\skel = \skneqtsub {\skel'}{\sksdummyelim{\typ}}$, then $\tr \skel = \tr
  {\skel''}$ with $\tr{\skel'} = \skelquant \tvar {\skel''}$. By induction we
  have $\sz {\tr {\skel'}} \leq \sz {\skel'}$, i.e. $\sz {\skel''}+1 \leq
  \sz{\skel'}$. Therefore we have $\sz {\skel''} < \sz \skel$, so by induction
  we get $\sz{\tr{\skel''}} \leq \sz{\skel''}$. We have $\sz {\tr \skel}=
  \sz{\tr {\skel''}} \leq \sz{\skel''} \leq \sz{\skel'} \leq \sz \skel$, hence
  the result holds.\\

  If $\skel = \skneqtsub {\skel'}{\sksfun{\sksub_1}{\sksub_2}}$, then there
  exists $\skel''$ such that $\tr\skel = \lamb \expvar {\skel''}$. We have
  $\sz{\tr\skel}=1 \leq \sz\skel$ as required.

\end{proof}

\begin{lemma}
  \label{l:tr-abs}
  For all $\skel$, 
  \begin{itemize}
  \item If $\jneq \skel {\lamb \expvar \term} \envvar {\fun {\typ_1}{\typ_2}}$
    then there exists $\skel'$ such that $\tr\skel=\lamb \expvar {\skel'}$ and $\jneq
    {\tr\skel}{\lamb \expvar \term} \envvar {\fun {\typ_1}{\typ_2}}$.
  \item If $\jneq \skel {\lamb \expvar \term} \envvar {\uquant \tvar \typ}$ then
    there exists $\skel'$ such that $\tr\skel= \skelquant \tvar {\skel'}$ and
    $\jneq {\tr\skel}{\lamb \expvar \term} \envvar {\uquant \tvar \typ}$.
  \item If $\jneq \skel {\lamb \expvar \term} \envvar {\styp \subv \settv \typ}$ then
    there exists $\skel'$ such that $\tr\skel= \skels \subv \settv {\skel'}$ and $\jneq
    {\tr\skel}{\lamb \expvar \term} \envvar {\styp \subv \settv \typ}$.
  \end{itemize}
\end{lemma}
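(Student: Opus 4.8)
The plan is to prove the three statements simultaneously by induction on $\sz\skel$, following the induction principle of Figure~\ref{f:def-sz}. Given $\jneq\skel{\lamb\expvar\term}\envvar\typ$, I case-split on the root rule of the derivation; since the subject is an abstraction it must be \TRabs, \TRforall, \TRsintro, \TRsolvedsub, or \TRenvsub. The first three are immediate: if $\skel=\skelabs\expvar{\skel_0}$ then the clause $\tr{\skelabs\expvar\skel}=\skelabs\expvar\skel$ gives $\tr\skel=\skel$, settling statement~1; if $\skel=\skelquant\tvar{\skel_0}$ then $\tr\skel=\skelquant\tvar{\tr{\skel_0}}$ already has the head demanded by statement~2; and if $\skel=\skels\subv\settv{\skel_0}$ then $\tr\skel=\skel$ settles statement~3. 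In each of these, typing is preserved because $\trsymb$ either leaves $\skel$ untouched or only recurses beneath a $\forall$-introduction, which is rebuilt by \TRforall.

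The work is in the two subtyping cases. For $\skel=\skneqenvsub{\skel_0}{\eassign y \sksub}$ the result type is unchanged, so $\skel_0$ types $\lamb\expvar\term$ with the same $\typ$; as $\sz{\skel_0}<\sz\skel$, the induction hypothesis for the statement matching the head of $\typ$ fixes the head of $\tr{\skel_0}$, and exactly one of the three $\skneqenvsub$ clauses of Figure~\ref{f:def-T} applies, commuting the environment subtyping past that head while preserving shape and typing (the commutation of \TRenvsub with the head rule is precisely what those clauses encode). For $\skel=\skneqtsub{\skel_0}{\sksub}$ I case-split on $\sksub$ and match its right-hand side to the head of $\typ$. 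The function step $\sksfun{\sksub_1}{\sksub_2}$ and the E-variable congruence $\skssubv\subv\settv\sksub$ reach an arrow-headed (resp.\ E-variable-headed) skeleton after one use of the induction hypothesis on $\skel_0$, whose result type has a smaller, fully determined head; the quantifier-congruence $\sksquant\tvar\sksub$ and reordering $\sksquantcomm\typ$ likewise land on a $\forall$-introduction head; in each the matching $\trsymb$ clause rebuilds the target shape.

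The delicate cases are the instantiation step $\sksinst{\uquant\tvar{\typ_1}}{\typ_2}$ and the dummy steps $\sksdummyin\typ$, $\sksdummyelim\typ$, the only clauses where $\trsymb$ re-transforms a freshly built skeleton rather than recursing into a strict sub-skeleton of $\skel$. For $\sksinst{\uquant\tvar{\typ_1}}{\typ_2}$, the inner $\skel_0$ types $\lamb\expvar\term$ with the quantifier $\uquant\tvar{\typ_1}$, so statement~2 of the induction hypothesis gives $\tr{\skel_0}=\skelquant\tvar{\skel'}$ and hence $\tr\skel=\tr{\subst{\assign\tvar{\typ_2}}{\skel'}}$. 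To reapply the induction hypothesis I bound the size: by Lemma~\ref{l:sz-decr} and the size-invariance of type substitution (the third item of Lemma~\ref{l:sz-prop}), $\sz{\subst{\assign\tvar{\typ_2}}{\skel'}}=\sz{\skel'}=\sz{\tr{\skel_0}}-1\le\sz{\skel_0}-1<\sz\skel$. Since $\subst{\assign\tvar{\typ_2}}{\skel'}$ types $\lamb\expvar\term$ with $\subst{\assign\tvar{\typ_2}}{\typ_1}=\typ$, the induction hypothesis for the statement matching the head of $\typ$ yields the head of $\tr\skel$. The two dummy steps are handled analogously, using the type equality $\uquant\tvar\typ=\typ$ to align the head of the result type with the quantifier that $\trsymb$ manipulates.

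The main obstacle is the instantiation case, together with the need to keep the three statements coordinated against the type equalities. The instantiation clause is the one point where the recursion of $\trsymb$ leaves the subterm order, so the whole argument must run on $\sz\skel$ rather than on skeleton structure, and its soundness rests on the size-invariance of type substitution (to license the inductive call) and on a type-substitution preservation property for \systemfsneq (the analogue of Theorem~\ref{t:subs-admissible}), needed to certify that $\subst{\assign\tvar{\typ_2}}{\skel'}$ still types $\lamb\expvar\term$ with the substituted result type; this substitution is legal because the \TRforall at the head of $\tr{\skel_0}$ forces $\tvar\notin\ftv\envvar$. The residual care is to ensure that, after each subtyping step, the head of the (equality class of the) result type matches an available induction hypothesis, the dummy-quantifier and reordering conventions being reconciled by the equality-subtyping clauses of $\trsymb$.
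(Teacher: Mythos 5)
Your proposal is correct and follows essentially the same route as the paper's proof: induction on $\sz\skel$ with a case analysis on the skeleton's head, using the size-invariance of type substitution (Lemma \ref{l:sz-prop}) and the size decrease of $\trsymb$ (Lemma \ref{l:sz-decr}) to license the second inductive call in the instantiation case, and a substitution-preservation property to retype $\subst{\assign\tvar{\typ_2}}{\skel'}$. The only minor divergence is that you treat $\sksdummyin\typ$ as a re-transformation case, whereas its $\trsymb$ clause builds $\skelquant\tvar\skel$ directly with no recursive call; this only makes your argument slightly more cautious than necessary.
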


\begin{proof}
  We proceed by induction on $\sz{\skel}$. If $\sz{\skel}=1$, then $\skel =
  \lamb \expvar \skel$, so $\tr\skel =
  \skel$, and the first item of the lemma hold. Suppose that the
  result holds for $\sz\skel \leq n$; we prove it for $\sz\skel = n+1$ by case
  analysis on $\skel$.\\
  
  If $\skel=\skelquant \tvar {\skel'}$, then by the type system we have
  $\jneq \skel {\lamb \expvar \term} {\envvar}{\uquant \tvar {\typ'}}$
  with $\jneq {\skel'}{\lamb \expvar \term} {\envvar}{\typ'}$. We have
  $\sz{\skel'} < \sz{\skel}$, so by induction we have $\jneq {\tr{\skel'}}{\lamb
    \expvar \term} {\envvar}{\typ'}$. By rule \TRforall we have $\jneq
  {\skelquant \tvar \tr{\skel'}}{\lamb \expvar \term} {\envvar}{\uquant
    \tvar {\typ'}}$, i.e. $\jneq {\tr{\skel}}{\lamb \expvar
    \term} {\envvar}{\uquant \tvar {\typ'}}$, as required.\\

  If $\skel=\skneqenvsub {\skel'}{\envvar \leqeq \envvar'}$, then we distinguish
  several cases. If $\jneq \skel {\lamb \expvar \term} \envvar {\fun
    {\typ_1}{\typ_2}}$, then by the type system we have $\jneq {\skel'}{\lamb
    \expvar \term}{\envvar'}{\fun {\typ_1}{\typ_2}}$. We have $\sz{\skel'} <
  \sz\skel$, so by induction there exists $\skel''$ such that $\tr{\skel'}=\lamb
  \expvar {\skel''}$ and $\jneq {\lamb \expvar {\skel''}}{\lamb \expvar
    \term}{\envvar'}{\fun {\typ_1}{\typ_2}}$. By rule \TRabs, we have $\jneq
  {\skel''}{\term}{\eextend {\envvar'}{\eassign \expvar {\typ_1}}}{\typ_2}$, so
  we have $\jneq {\skneqenvsub \skel {\envvar \leqeq \envvar'}}{\term}{\eextend
    \envvar {\eassign \expvar {\typ_1}}}{\typ_2}$ by environment subtyping,
  hence we have $\jneq {\lamb \expvar {\skneqenvsub {\skel'} {\envvar \leqeq
        \envvar'}}}{\lamb \expvar \term}{\envvar}{\fun {\typ_1}{\typ_2}}$. By
  definition of $\trsymb$, we have $\tr\skel = \lamb \expvar {\skneqenvsub \skel
    {\envvar \leqeq \envvar'}}$, hence the result holds.

  If $\jneq \skel {\lamb \expvar \term} \envvar {\styp \subv \settv {\typ'}}$,
  then by environment subtyping we have $\jneq {\skel'}{\lamb \expvar
    \term}{\envvar'}{\styp \subv \settv {\typ'}}$. By induction we have
  $\tr{\skel'}=\skels \subv \settv {\skel''}$ and $\jneq {\tr{\skel'}}{\lamb
    \expvar \term}{\envvar'}{\styp \subv \settv {\typ'}}$. By rule \TRsintro, we
  have $\jneq {\skel''}{\lamb \expvar \term}{\envvar'}{\typ'}$; therefore we
  have $\jneq {\skels \subv \settv {(\skneqenvsub {\skel''}{\envvar \leqeq
        \envvar'})}}{\lamb \expvar \term}{\envvar}{\styp \subv \settv {\typ'}}$,
  i.e. $\jneq {\tr\skel}{\lamb \expvar \term} \envvar {\styp \subv \settv
    {\typ'}}$, as required. The proof is similar in the case $\typ =
  \uquant \tvar {\typ'}$.\\

  If $\skel = \skels \subv \settv {\skel'}$, then we have $\jneq \skel {\lamb
    \expvar \term} \envvar {\styp \subv \settv {\typ'}}$, and since $\tr\skel =
  \skel$, the result holds.\\

  If $\skel = \skneqtsub {\skel'}{\sksinst {\uquant \tvar {\typ_1}}{\typ_2}}$,
  then by rule \TRsolvedsub we have $\jneq{\skel}{\lamb \expvar
    \term}{\envvar}{\subst{\assign \tvar {\typ_2}}{\typ_1}}$ and
  $\jneq{\skel'}{\lamb \expvar \term}{\envvar}{\uquant \tvar {\typ_1}}$. We have
  $\sz{\skel'} < \sz{\skel}$, so by induction there exists $\skel''$ such that
  $\tr{\skel'}=\skelquant \tvar {\skel''}$ and $\jneq {\tr{\skel'}}{\lamb
    \expvar \term}{\envvar}{\uquant \tvar {\typ_1}}$. By rule \TRforall, we have
  $\jneq {\skel''}{\lamb \expvar \term}{\envvar}{\typ_1}$ and $\tvar \notin \ftv
  \envvar$, so by Lemma \ref{l:exp-preserves-typings}, we have $\jneq
  {\subst{\assign \tvar {\typ_2}}{\skel''}}{\lamb \expvar
    \term}{\envvar}{\subst{\assign \tvar {\typ_2}}{\typ_1}}$. By Lemma
  \ref{l:sz-prop}, we have $\sz{\subst{\assign \tvar
      {\typ_2}}{\skel''}}=\sz{\skel''} \leq \sz{\skel'} < \sz\skel$, so by
  induction we have $\jneq {\tr{\subst{\assign \tvar {\typ_2}}{\skel''}}}{\lamb
    \expvar \term}{\envvar}{\subst{\assign \tvar {\typ_2}}{\typ_1}}$, and the
  shape of $\tr{\subst{\assign \tvar {\typ_2}}{\skel''}}$ follows the shape of
  $\subst{\assign \tvar {\typ_2}}{\typ_1}$. By definition of $\trsymb$, we have
  $\tr\skel=\tr{\subst{\assign \tvar {\typ_2}}{\skel''}}$, hence the result holds.\\
  
  If $\skel = \skneqtsub{\skel'}{\sksquant \tvar {\sksub}}$, then there exists
  $\typ'$ and $\typ$ such that $\jsub {\sksub}{\typ'} \vleq {\typ}$, $\jneq
  {\skel}{\lamb \expvar \term}{\envvar}{\uquant \tvar \typ}$, and $\jneq
  {\skel'}{\lamb \expvar \term}{\envvar}{\uquant \tvar {\typ'}}$. We have
  $\sz{\skel'} <\sz{\skel}$, so by induction there exists $\skel''$ such that
  $\tr{\skel'}=\skelquant \tvar {\skel''}$ and $\jneq {\tr{\skel'}}{\lamb
    \expvar \term}{\envvar}{\uquant \tvar {\typ'}}$. By rule \TRforall, we have
  $\jneq {\skel''}{\lamb \expvar \term}{\envvar}{\typ'}$. Therefore we have
  $\jneq {\skneqtsub {\skel''}{\sksub}}{\lamb \expvar \term}{\envvar}{\typ}$.

  By Lemma \ref{l:sz-decr}, we have $\sz{\tr{\skel'}} \leq \sz{\skel'}$, i.e.
  $\sz{\skel''}+1 \leq \sz{\skel'}$. By Lemma \ref{l:sz-prop}, we have
  $\sz{\skneqtsub {\skel''}{\sksub}} \leq \sz{\skneqtsub{\skel'}{\sksub}}$,
  hence we have $\sz{\skneqtsub{\skel''}{\sksub}} \leq
  \sz{\skneqtsub{\skel'}{\sksub}} < \sz\skel$. Consequently, by applying the
  induction hypothesis to $\skneqtsub{\skel''}{\sksub}$, we have $\jneq
  {\tr{\skneqtsub {\skel''}{\sksub}}}{\lamb \expvar \term}{\envvar}{\typ}$. By
  rule \TRforall we have $\jneq {\skelquant \tvar
    {\tr{\skneqtsub{\skel''}{\sksub}}}}{\lamb \expvar \term}{\envvar}{\uquant
    \tvar \typ}$. By definition of $\trsymb$ we have $\tr\skel=\skelquant \tvar
  {\tr{\skneqtsub{\skel''}{\sksub}}}$, hence the result holds.\\
  
  If $\skel = \skneqtsub {\skel'}{\sksquantcomm{\typ}}$, then we have $\jneq
  {\skel}{\lamb \expvar \term}{\envvar}{\uquant {\tvar_2}{\uquant
      {\tvar_1}\typ}}$ and $\jneq {\skel'}{\lamb \expvar \term}{\envvar}{\uquant
    {\tvar_1}{\uquant {\tvar_2}\typ}}$. By induction we have $\tr
  {\skel'}=\skelquant {\tvar_1}{\skel_1}$ with $\jneq {\tr{\skel'}}{\lamb
    \expvar \term}{\envvar}{\uquant {\tvar_1}{\uquant {\tvar_2}\typ}}$.
  Therefore we have $\jneq {\skel_1}{\lamb \expvar \term}{\envvar}{\uquant
    {\tvar_2}\typ}$, so by induction we have $\tr{\skel_1}=\skelquant
  {\tvar_2}{\skel_2}$ with $\jneq {\tr{\skel_1}}{\lamb \expvar
    \term}{\envvar}{\uquant {\tvar_2}\typ}$. Consequently we have $\jneq
  {\skel_2}{\lamb \expvar \term}{\envvar}{\typ}$, so by rule \TRforall, we have
  $\jneq {\skelquant {\tvar_2}{\skelquant {\tvar_1}{\skel_2}}}{\lamb \expvar
    \term}{\envvar}{\uquant {\tvar_2}{\uquant {\tvar_1} \typ}}$, and we
  have $\tr \skel = \skelquant {\tvar_2}{\skelquant {\tvar_1}{\skel_2}}$, as
  required.\\

  If $\skel = \skneqtsub {\skel'}{\sksdummyin{\typ}}$, then we have $\jneq
  {\skel}{\lamb \expvar \term}{\envvar}{\uquant \tvar \typ}$ with $\jneq
  {\skel'}{\lamb \expvar \term}{\envvar} \typ$ and $\tvar \notin \ftv
  {\skel'}$. Using rule \TRforall, we obtain $\jneq {\uquant \tvar
    {\skel'}}{\lamb \expvar \term}{\envvar}{\uquant \tvar \typ}$, i.e., $\jneq
  {\tr\skel}{\lamb \expvar \term}{\envvar}{\uquant \tvar \typ}$, as required.\\

  If $\skel = \skneqtsub {\skel'}{\sksdummyelim{\typ}}$, then we have $\jneq
  {\skel}{\lamb \expvar \term}{\envvar} \typ$ with $\jneq {\skel'}{\lamb \expvar
    \term}{\envvar}{\uquant \tvar \typ}$ and $\tvar \notin \ftv \envvar$. By
  induction, there exists $\skel_1$ such that $\tr{\skel'} = \skelquant \tvar
  {\skel_1}$ and $\jneq {\uquant \tvar {\skel_1}}{\lamb \expvar
    \term}{\envvar}{\uquant \tvar \typ}$. Consequently we have $\jneq
  {\skel_1}{\lamb \expvar \term}{\envvar}{ \typ}$, and because $\sz{\skel_1} <
  \sz {\skel}$, we obtain $\jneq {\tr{\skel_1}}{\lamb \expvar
    \term}{\envvar}{\typ}$ by induction (and the shape of $\tr{\skel_1}$ matches
  the one of $\typ$). Because $\tr \skel = \tr{\skel_1}$, we have the required
  result.\\

  If $\skel = \skneqtsub {\skel'}{\sksfun{\sksub_1}{\sksub_2}}$, then there
  exists $\typ'_1$, $\typ'_2$, $\typ_1$, and $\typ_2$ such that $\jsub
  {\sksub_1}{\typ_1} \leqeq {\typ'_1}$, $\jsub{\sksub_2}{\typ'_2} \leqeq {\typ_2}$, and $\jneq
  \skel {\lamb \expvar \term} \envvar {\fun{\typ_1}{\typ_2}}$. By the
  type system, we have $\jneq {\skel'}{\lamb \expvar \term} \envvar
  {\fun{\typ'_1}{\typ'_2}}$. We have
  $\sz{\skel'}<\sz{\skel}$, so by induction there exists $\skel''$ such that
  $\tr{\skel'}=\lamb \expvar {\skel''}$ and $\jneq {\lamb \expvar {\skel''}}{\lamb
    \expvar \term} \envvar {\fun{\typ'_1}{\typ'_2}}$. By the type
  system we have $\jneq {\skel''}{\term}{\eextend \envvar {\eassign \expvar 
    {\typ'_1}}}{\typ'_2}$, so by rules \TRsolvedsub and \TRenvsub we
  have $\jneq {\skneqenvsub {(\skneqtsub \skel {\sksub_2})}{\eassign \expvar {\sksub_1}
  }}{\term}{\eextend \envvar {\eassign \expvar {\typ_1}}}{\typ_2}$. Consequently we have $\jneq
  {\lamb \expvar {(\skneqenvsub {(\skneqtsub \skel {\sksub_2})}{\eassign \expvar
        {\typ_1}})}}{\lamb
    \expvar \term}{\envvar}{\fun{\typ_1}{\typ_2}}$, and $\tr\skel = \lamb \expvar
  {{\skneqenvsub {(\skneqtsub \skel {\sksub_2})}{\eassign \expvar {\typ_1}}}}$, hence the result holds.\\

  If $\skel=\skneqtsub \skel {\skssubv \subv \settv \sksub}$, then we have
  $\jneq {\skel}{\lamb \expvar \term}{\envvar}{\styp \subv \settv {\typ_2}}$ and
  $\jneq {\skel'}{\lamb \expvar \term}{\envvar}{\styp \subv \settv {\typ_1}}$
  with $\jsub {\sksub}{\typ_1} \leqeq {\typ_2}$. By induction we have $\tr
  {\skel} = \skels \subv \settv {\skel''}$ and $\jneq {\tr{\skel'}}{\lamb
    \expvar \term}{\envvar}{\styp \subv \settv {\typ_1}}$. Therefore we have
  $\jneq {\skel''}{\lamb \expvar \term}{\envvar}{\typ_1}$, so by rule
  \TRsolvedsub we have $\jneq {\skneqtsub{{\skel''}}\sksub}{\lamb \expvar
    \term}{\envvar}{\typ_2}$, and by rule \TRsintro $\jneq {\skels \subv \settv
    {(\skneqtsub {(\skel'')}\sksub)}}{\lamb \expvar \term}{\envvar}{\styp \subv
    \settv {\typ_2}}$.  Since we have $\tr \skel = \skels \subv \settv
  {(\skneqtsub {(\skel'')}\sksub)}$, the result holds.
  
\end{proof}

\subsection{Subject reduction}

\begin{lemma}
  \label{l:redex}
  If $\jneq {\skel_1}{\term}{\eextend \envvar {\eassign \expvar
      {\typ_1}}}{\typ_2}$ and $\jneq {\skel_2}{\vvar}{\envvar}{\typ_1}$, then
  there exists $\skel'$ such that $\jneq {\skel'}{\subst {\assign \expvar
      {\vvar}}{\term}}{\envvar}{\typ_2}$.
\end{lemma}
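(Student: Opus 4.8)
The plan is to prove Lemma~\ref{l:redex} by induction on the skeleton $\skel_1$, with the induction hypothesis universally quantified over $\vvar$, $\envvar$, $\typ_1$, $\typ_2$, and the value derivation $\skel_2$. The underlying intuition is the standard one: $\skel_1$ is rebuilt constructor by constructor, and each leaf $\skelvar \expvar {\dots}$ that types the variable $\expvar$ being substituted is replaced by a (possibly weakened) copy of $\skel_2$, while the subtyping premises and side conditions of the surrounding rules are reused with little change.

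First I would treat the base case and the non-binding context cases. If $\skel_1 = \skelvar y {\eextend \envvar {\eassign \expvar {\typ_1}}}$ then $\typ_2 = \apply{(\eextend \envvar {\eassign \expvar {\typ_1}})} y$: when $y = \expvar$ we have $\typ_2 = \typ_1$ and $\subst {\assign \expvar \vvar} \expvar = \vvar$, so I take $\skel' = \skel_2$, which already derives $\jneq {\skel_2} \vvar \envvar {\typ_1}$; when $y \neq \expvar$ we have $\subst {\assign \expvar \vvar} y = y$ and $\typ_2 = \apply \envvar y$, so I take $\skel' = \skelvar y \envvar$ and conclude by \TRvar. For the context cases I reapply the same rule to the output of the induction hypothesis. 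Rule \TRapp uses $\subst {\assign \expvar \vvar}{(\term_a \app \term_b)} = (\subst {\assign \expvar \vvar}{\term_a}) \app (\subst {\assign \expvar \vvar}{\term_b})$; rule \TRsolvedsub leaves its subtyping premise $\jsub \sksub {\typ_1'} \vleq {\typ_2}$ untouched; and rules \TRforall and \TRsintro go through because $\ftv \envvar \subseteq \ftv{(\eextend \envvar {\eassign \expvar {\typ_1}})}$, so the side conditions $\tvar \notin \ftv \envvar$ and $\ftv \envvar \subseteq \settv$ still hold (for \TRforall I first $\alpha$-rename $\tvar$ away from $\fv \vvar$).

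The delicate cases are \TRabs and \TRenvsub. For $\skel_1 = \skelabs z {\skel_1'}$ I $\alpha$-rename $z$ so that $z \neq \expvar$ and $z \notin \fv \vvar \cup \support \envvar$; the premise types the body in $\eextend {(\eextend \envvar {\eassign z {\typ_a}})}{\eassign \expvar {\typ_1}}$ (after a harmless reordering, since the rules only inspect $\apply \envvar {\cdot}$). To invoke the induction hypothesis I must re-type $\vvar$ in the enlarged environment $\eextend \envvar {\eassign z {\typ_a}}$, that is, I need a weakening result for \systemfsneq. This is the point I expect to be the main obstacle, and I would discharge it as a preliminary sublemma: because a variable skeleton records its whole type environment, weakening is a routine structural pass that enlarges each recorded environment, the freshness of $z$ keeping every environment well-formed. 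With $\skel_2$ so weakened, the induction hypothesis yields a derivation of $\subst {\assign \expvar \vvar}{\term'}$, and \TRabs rebuilds the abstraction. For $\skel_1 = \skneqenvsub {\skel_1'}{\eassign z \sksub}$, the premise changes the type of $z$ from $\typ'$ (its type in the conclusion environment) to $\typ''$, with $\jsub \sksub {\typ'} \leqeq {\typ''}$. If $z = \expvar$ then $\typ' = \typ_1$, so I first compose $\skel_2$ with $\sksub$ by \TRsolvedsub to get $\jneq {\skneqtsub {\skel_2} \sksub} \vvar \envvar {\typ''}$ and then apply the induction hypothesis to $\skel_1'$ (the \TRenvsub node simply vanishes, since $\expvar$ is substituted away). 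If $z \neq \expvar$ the subtyping concerns a variable of $\envvar$, which I adjust on $\skel_2$ by \TRenvsub, using the symmetry of $\leqeq$ proved earlier to reverse the subtyping direction, before applying the induction hypothesis and reapplying \TRenvsub. Beyond the weakening sublemma, the only care required is to keep the environment orderings and the freshness conditions consistent throughout.
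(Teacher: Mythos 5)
Your proposal follows essentially the same route as the paper's proof: induction on $\skel_1$ (generalizing over the other parameters), taking $\skel' = \skel_2$ at the leaf for $\expvar$, rebuilding each constructor from the induction hypothesis, and in the \TRenvsub case on $\expvar$ composing $\skel_2$ with the subtyping skeleton via \TRsolvedsub before recursing. You are in fact slightly more careful than the paper, which silently assumes the weakening of $\skel_2$ to the enlarged environment in the \TRabs case and the environment adjustment of $\skel_2$ in the \TRenvsub case with $y \neq \expvar$; your explicit weakening sublemma and use of the symmetry of $\leqeq$ discharge exactly those implicit steps.
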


\begin{proof}
  We proceed by induction on $\skel_1$.

  Suppose $\skel_1 = \skelvar \expvar {\eextend \envvar {\eassign \expvar
      {\typ_1}}}$; we have $\jneq
  {\skel_1}{\expvar}{\envvar \sepenv \eassign \expvar {\typ_1}}{\apply \envvar
    \expvar}$ with $\typ_1 = \typ_2 = \apply \envvar \expvar$. We have $\jneq
  {\skel_2}{\subst {\assign \expvar {\vvar}}{\expvar}}{\envvar}{\typ_1}$, hence
  the result holds.\\
  
  Suppose $\skel_1 = \skelvar y {\eextend \envvar {\eassign \expvar
      {\typ_1}}}$ with $y \neq \expvar$; we have $\jneq
  {\skel_1}{y}{\envvar \sepenv \eassign \expvar {\typ_1}}{\apply \envvar y}$. We
  have $\subst {\assign \expvar \vvar} y = y$ and $\jneq
  {\skelvar y \envvar}{y}{\envvar}{\apply \envvar y}$, therefore we have the required result.\\

  Suppose $\skel_1 = \skelabs y {\skel'_1}$; we have $\jneq
  {\skel'_1}{\term'}{\eextend {\eextend {\envvar}{\eassign \expvar
        {\typ_1}}}{\eassign y {\typ_2^1}}}{\typ_2^2}$ with $\term = \lamb y
  {\term'}$ and $\typ_2 = \fun {\typ_2^1}{\typ_2^2}$. By induction, we have
  $\jneq {\skel'}{\subst {\assign \expvar {\vvar}}{\term'}}{\envvar
      \sepenv \eassign y {\typ_2^1}}{\typ_2^2}$.  By rule \TRabs, we have
  $\jneq {\skelabs y {\skel''}}{\lamb y {(\subst {\assign \expvar
        {\vvar}}{\term'})}}{\envvar}{\fun {\typ_2^1}{\typ_2^2}}$, as required.\\

  Suppose $\skel_1 = \skel_1^1 \skelapp \skel_1^2$; we have $\term= \term_1 \app
  \term_2$ with $\jneq {\skel_1^1}{\term_1}{\eextend {\envvar}{\eassign \expvar
      {\typ_1}}}{\fun{\typ_3}{\typ_2}}$, and $\jneq
  {\skel_1^2}{\term_2}{\eextend {\envvar}{\eassign \expvar
      {\typ_1}}}{\typ_3}$. By induction we have $\jneq {\skel'_1}{\subst
    {\assign \expvar {\vvar}}{\term_1}}{\envvar}{\fun {\typ_3}{\typ_2}}$ and
  $\jneq {\skel'_2}{\subst {\assign \expvar
      {\vvar}}{\term_2}}{\envvar}{\typ_3}$, so by rule \TRapp we have $\jneq
  {\skel'_1 \skelapp \skel'_2}{\subst {\assign \expvar {\vvar}}{\term_1} \app
    \subst {\assign \expvar {\vvar}}{\term_2}}{\envvar}{\typ_2}$, i.e.  $\jneq
  {\skel'_1 \skelapp \skel'_2}{\subst {\assign \expvar {\vvar}}{(\term_1 \app
      \term_2)}}{\envvar}{\typ_2}$, as required.\\

  Suppose $\skel_1 = \skelquant \tvar {\skel'_1}$; we have $\jneq
  {\skel'_1}{\term}{\eextend {\envvar}{\eassign \expvar {\typ_1}}}{\typ'_2}$
  with $\typ_2 = \uquant \tvar {\typ'_2}$. By induction we have $\jneq
  {\skel''_1}{\subst {\assign \expvar {\vvar}}{\term}}{\envvar}{\typ'_2}$, so by
  rule \TRforall we have $\jneq {\skelquant \tvar {\skel''_1}}{\subst {\assign
      \expvar {\vvar}}{\term} }{\envvar}{\uquant \tvar {\typ'_2}}$.\\

  Suppose $\skel_1 = \skneqtsub {{\skel'}_1}{\sksub}$; we have $\jneq
  {\skel'_1}{\term}{\eextend {\envvar}{\eassign \expvar {\typ_1}}}{\typ'_2}$
  with $\jsub {\sksub}{\typ'_2} \vleq {\typ_2}$. By induction there exists
  $\skel'$ such that $\jneq {\skel'}{\subst {\assign \expvar
      {\vvar}}{\term}}{\envvar}{\typ'_2}$, so by rule \TRsolvedsub we have
  $\jneq {\skneqtsub {{\skel'}}{\sksub}}{\subst {\assign \expvar
      {\vvar}}{\term}}{\envvar}{\typ_2}$.\\
  
  Suppose $\skel_1 = \skneqenvsub {{\skel'}_1} {\eassign y \sksub}$ with $y \neq
  \expvar$; we have $\jneq {\skel'_1}{\term}{\eextend {\eextend
      {\envvar'}{\eassign \expvar {\typ_1}}}{\eassign y {\typ'_3}}}{\typ_2}$
  with $\envvar=\eextend {\envvar'}{\eassign y {\typ_3}}$ and $\jsub \sksub
  {\typ_3} \leqeq {\typ'_3}$. By induction there exists $\skel'$ such that
  $\jneq {\skel'}{\subst{\assign \expvar \vvar}\term}{\eextend
    {\envvar'}{\eassign y {\typ'_3}}}{\typ_2}$. By rule \TRenvsub, we have
  $\jneq {\skneqenvsub {{\skel'}}{\eassign y {\sksub'}}}{\subst{\assign \expvar
      \vvar}\term}{\eextend {\envvar'}{\eassign y {\typ_3}}}{\typ_2}$,
  i.e. $\jneq {\skneqenvsub {{\skel'}}{\eassign y {\sksub'}}}{\subst{\assign
      \expvar
      \vvar}\term}{\envvar}{\typ_2}$, as required.\\

  Suppose $\skel_1 = \skneqenvsub {{\skel'}_1}{\eassign \expvar {\sksub}}$; we
  have $\jneq {\skel'_1}{\term}{\eextend {\envvar}{\eassign \expvar
      {\typ'_1}}}{\typ_2}$ with $\jsub \sksub {\typ_1} \leqeq {\typ'_1}$. By
  rule \TRsolvedsub, we have $\jneq {\skneqtsub
    {\skel_2}{\sksub}}{\vvar}{\envvar}{\typ'_1}$, so by induction there exists
  $\skel'$ such that $\jneq {\skel'}{\subst{\assign \expvar
      \vvar}\term}{\envvar}{\typ_2}$, hence the result holds.\\

  Suppose $\skel_1 = \skels \subv \settv {{\skel'}_1}$; we have $\jneq
  {\skel'_1}{\term}{\eextend {\envvar}{\eassign \expvar {\typ_1}}}{\typ'_2}$
  with $\typ_2 = \styp \subv \settv {\typ'_2}$ and $\ftv{\envvar} \cup
  \ftv{\typ_1} \subseteq \settv$. By induction there exists $\skel'$ such that
  $\jneq {\skel'}{\subst {\assign \expvar
      {\vvar}}{\term}}{\envvar}{\typ'_2}$. We have $\ftv{\envvar} \subseteq
  (\ftv{\envvar} \cup \ftv{\typ_1}) \subseteq \settv$, so by rule \TRsintro, we
  have $\jneq {\skels \subv \settv {{\skel'}}}{\subst {\assign \expvar
      {\vvar}}{\term}}{\envvar}{\typ_2}$, as required.

\end{proof}

\begin{lemma}
  \label{l:subj-reduction-neq}
  If $\jneq \skel \term \envvar \typ$ and $\term \cbv \term'$ then there
  exists $\skel'$ such that $\jneq {\skel'}{\term'} \envvar \typ$. 
\end{lemma}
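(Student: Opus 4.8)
The plan is to argue by induction on the skeleton $\skel$, doing a case analysis on its outermost constructor. Since all of $\term$, $\envvar$, and $\typ$ are determined by $\skel$, this simultaneously pins down the shape of $\term$ and hence which evaluation rule can fire in $\term \cbv \term'$. The two leaf cases are immediate: if $\skel = \skelvar \expvar \envvar$ then $\term = \expvar$ is a variable, and if $\skel = \skelabs \expvar {\skel''}$ then $\term$ is an abstraction, so in both cases $\term$ is irreducible and there is nothing to prove.

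Next I would dispatch the four non-syntactic rule cases uniformly. If $\skel$ is $\skelquant \tvar {\skel''}$, $\skels \subv \settv {\skel''}$, $\skneqtsub {\skel''}{\sksub}$, or $\skneqenvsub {\skel''}{\eassign y \sksub}$, then inverting the corresponding typing rule produces a premise $\jneq {\skel''}{\term}{\envvar''}{\typ''}$ \emph{on the same term} $\term$, where $\envvar''$ and $\typ''$ differ from $\envvar$, $\typ$ only by the wrapping operation. As $\term \cbv \term'$ and $\skel''$ is a strict subskeleton, the induction hypothesis yields $\skel'''$ with $\jneq {\skel'''}{\term'}{\envvar''}{\typ''}$, and re-applying the very same rule (with the same $\tvar$, the same $\subv$ and $\settv$, or the same $\sksub$) rebuilds a skeleton typing $\term'$ at the original environment and type. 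The side conditions, such as $\tvar \notin \ftv \envvar$ for \TRforall, are inherited unchanged since the environment is untouched.

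The interesting case is $\skel = \skel_1 \skelapp \skel_2$, where $\term = \term_1 \app \term_2$, $\jneq {\skel_1}{\term_1}{\envvar}{\fun{\typ_1}{\typ}}$, and $\jneq {\skel_2}{\term_2}{\envvar}{\typ_1}$. I would split on the evaluation rule. For the two congruence rules (either $\term_1 \cbv \term_1'$, or $\term_1$ a value and $\term_2 \cbv \term_2'$), the induction hypothesis applied to the strictly smaller $\skel_1$ (respectively $\skel_2$) gives a replacement subskeleton of the same type, and \TRapp reassembles the conclusion. For the $\upbeta$ rule, $\term_1 = \lamb \expvar \term_0$ and $\term_2 = \vvar$ is a value with $\term' = \subst{\assign \expvar \vvar}{\term_0}$. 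Here $\skel_1$ types the abstraction $\lamb \expvar \term_0$ at the arrow type $\fun{\typ_1}{\typ}$, so Lemma~\ref{l:tr-abs} normalises it to $\tr{\skel_1} = \skelabs \expvar {\skel_1'}$ with the same typing; inverting \TRabs then extracts $\jneq {\skel_1'}{\term_0}{\eextend \envvar {\eassign \expvar {\typ_1}}}{\typ}$. Feeding this together with $\jneq {\skel_2}{\vvar}{\envvar}{\typ_1}$ into the substitution Lemma~\ref{l:redex} produces the desired $\skel'$ with $\jneq {\skel'}{\subst{\assign \expvar \vvar}{\term_0}}{\envvar}{\typ}$.

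I expect the $\upbeta$ case to carry all the difficulty, but essentially all of it has already been discharged by the two auxiliary results. Lemma~\ref{l:tr-abs}, built on the transformation $\trsymb$, is what absorbs every possible interleaving of $\forall$-introductions, E-variable introductions, and subtyping uses that may sit above the function-position node, turning $\skel_1$ into a genuine abstraction skeleton without changing its type; and Lemma~\ref{l:redex} is what actually simulates $\upbeta$ by substituting $\skel_2$ for the variable skeletons of $\expvar$ inside the abstraction body. Given those two black boxes, the remainder of this lemma is the routine structural bookkeeping described above, so no genuinely new obstacle arises at this level.
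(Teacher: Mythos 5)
Your proposal is correct and follows essentially the same route as the paper's proof: induction on the skeleton, with the non-syntactic constructor cases handled by inversion and re-application of the rule, the congruence cases by the induction hypothesis on the subskeletons, and the $\upbeta$ case discharged by combining Lemma~\ref{l:tr-abs} (to normalise the function-position skeleton into an abstraction skeleton) with the substitution Lemma~\ref{l:redex}. The only cosmetic difference is that the paper phrases the application case as an inner induction on the derivation of $\term \cbv \term'$ rather than a case split feeding the outer skeleton induction, which amounts to the same argument.
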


\begin{proof}
  The proof is by induction on $\skel$. In the application case, we have $\term
  = \term_1 \app \term_2$, $\skel = \skel_1 \app \skel_2$, $\jneq
  {\skel_1}{\term_1}{\envvar}{\fun {\typ'}{\typ}}$ and $\jneq
  {\skel_2}{\term_2}{\envvar}{\typ'}$. We proceed by induction on
  $\term \cbv \term'$.

  If the $\beta$-rule is applied, then $\term_1 = \lamb \expvar {\term_3}$,
  $\term_2$ is a value $\vvar$, and $\term'=\subst {\assign \expvar
    \vvar}{\term_3}$. By Lemma \ref{l:tr-abs}, there exists $\skel'_1$ such that
  $\jneq {\skelabs \expvar {\skel'_1}}{\lamb \expvar {\term_3}}{\envvar}{\fun
    {\typ'}{\typ}}$.  By rule \TRabs, we have $\jneq
  {\skel'_1}{\term_3}{\eextend {\envvar}{\eassign \expvar {\typ'}}}{\typ}$. We
  have then the required result by Lemma \ref{l:redex}.

  If the first congruence rule is applied, we have $\term_1 \cbv \term'_1$ and $\term'
  = \term'_1 \app \term_2$. By the induction hypothesis on the reduction, there
  exists $\skel'_1$ such that $\jneq {\skel'_1}{\term'_1}{\envvar}{\fun{\typ'} \typ}$. By
  rule \TRapp we have $\jneq {\skel'_1 \skelapp
    \skel_2}{\term'}{\envvar} \typ$, hence the result holds.

  If the second congruence rule is applied, we have $\term_2 \cbv \term'_2$,
  $\term_1$ is a value $\vvar$ and $\term' = \vvar \app \term'_2$. By the
  induction hypothesis on the reduction, there exists $\skel'_2$ such that
  $\jneq {\skel'_2}{\term'_2}{\envvar}{\typ'}$. By rule \TRapp we have $\jneq
  {\skel_1 \skelapp \skel'_2}{\term'}{\envvar} \typ$, hence the result holds.\\

  The other cases (type constructor introductions, subtypings) are
  straightforward by induction.
  
\end{proof}

\begin{theorem}
  If $\jtyp \skel \term \envvar \typ \cons$ where $\cons$ is solved, and $\term
  \cbv \term'$, then there exists $\skel'$ and a solved $\cons'$ such that
  $\jtyp {\skel'}{\term'} \envvar \typ {\cons'}$.

\end{theorem}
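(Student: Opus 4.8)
The plan is to reduce the statement to the subject reduction result already established for the auxiliary system \systemfsneq, exploiting the correspondence between the two type systems. The key observation is that the theorem only asserts the \emph{existence} of a skeleton $\skel'$ and a solved constraint $\cons'$; the skeletons need not be related syntactically to the original $\skel$, so I am free to translate back and forth between \systemfs and \systemfsneq, picking up fresh (but correct) skeletons on each crossing.

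Concretely, given $\jtyp \skel \term \envvar \typ \cons$ with $\cons$ solved, I would first apply the forward direction of Lemma \ref{l:corr-type-systems} to obtain some skeleton $\skel_1$ with $\jneq {\skel_1} \term \envvar \typ$ in \systemfsneq. Since $\term \cbv \term'$, Lemma \ref{l:subj-reduction-neq} then yields a skeleton $\skel'_1$ with $\jneq {\skel'_1}{\term'} \envvar \typ$, preserving exactly the environment $\envvar$ and the result type $\typ$. Finally, applying the reverse direction of Lemma \ref{l:corr-type-systems} to this judgement produces a skeleton $\skel'$ together with a solved constraint $\cons'$ such that $\jtyp {\skel'}{\term'} \envvar \typ {\cons'}$, which is precisely the conclusion. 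So at this level the theorem is a two-line corollary of the correspondence lemma and subject reduction in \systemfsneq.

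The real substance lies in the lemmas leading up to this corollary rather than in the corollary itself, and I expect the genuine difficulty to be located inside Lemma \ref{l:subj-reduction-neq}, specifically its $\beta$-reduction case. When a redex $(\lamb \expvar {\term_3}) \app \vvar$ is contracted, the skeleton in function position need not be syntactically an abstraction skeleton, because subtyping steps and $\forall$-introductions may be interleaved above it. Here I would rely on the normalization transformation $\trsymb$ and on Lemma \ref{l:tr-abs}, which together rewrite that skeleton into abstraction shape while preserving its \systemfsneq typing, after which Lemma \ref{l:redex} simulates the substitution of the value for the bound variable. The size measure $\mathsf{sz}$ and the decrease property of Lemma \ref{l:sz-decr} are what guarantee this normalization terminates and justify the induction on $\mathsf{sz}$ rather than on raw skeleton structure. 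Thus the main obstacle is entirely upstream: setting up $\trsymb$ so that it is simultaneously typing-preserving and size-decreasing across all the subtyping and equality-subtyping cases; once those facts are in hand, the final subject reduction theorem for \systemfs follows immediately by transport across Lemma \ref{l:corr-type-systems}.
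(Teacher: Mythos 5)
Your proposal is correct and follows exactly the paper's own proof: the theorem is derived by applying Lemma \ref{l:corr-type-systems} to move into \systemfsneq, invoking Lemma \ref{l:subj-reduction-neq} for the reduction step, and applying Lemma \ref{l:corr-type-systems} again to return to \systemfs with a solved constraint. Your remarks about the real work residing in the $\upbeta$-case via $\trsymb$, Lemma \ref{l:tr-abs}, and Lemma \ref{l:redex} also accurately reflect how the paper structures the upstream development.
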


\begin{proof}
  By Lemma \ref{l:corr-type-systems} we have $\jneq {\skel'} \term \envvar
  \typ$.  By Lemma \ref{l:subj-reduction-neq}, we have $\jneq {\skel'}{\term'}
  \envvar \typ$, so by Lemma \ref{l:corr-type-systems}, we have $\jtyp
  {\skel''}{\term'}{\envvar}{\typ}{\cons'}$, hence the result holds.

\end{proof}


\section{Initial skeletons}

We now prove that we can generate \systemfs skeletons from an initial
skeleton. 

\begin{lemma}
  \label{l:generate-fs-from-init}
  Let $\skel$ such that $\jinit \xtoa \term \skel$, and let $\skel^1$ such that
  $\jtyp {\skel^1} \term \envvar \typ \cons$ and $\support \envvar =
  \support \xtoa$. There exists $\svar$ such that $\support \svar = \allvar
  \skel$ and $\jtyp{ \subst{\svar} \skel} \term \envvar \typ {(\cons \consinter
    \cons')}$ where $\cons'$ is reflexive.

\end{lemma}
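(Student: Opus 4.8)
The plan is to proceed by induction on the derivation of $\jinit \xtoa \term \skel$, which is driven by the structure of $\term$. The structural fact I will exploit about the other skeleton is that, since \systemfs (without weakening, which is not yet present at this point) threads the environment unchanged through every rule except \TRabs, any derivation of $\jtyp {\skel^1} \term \envvar \typ \cons$ splits as a stack $\mathcal E$ of non-syntactic steps (uses of \TRforall, \TRsintro, \TRsub, in some order) sitting on top of a syntactic core whose outermost shape is dictated by $\term$ (a variable leaf $\skelvar \expvar \envvar$, an abstraction $\skelabs \expvar {\Ivar'}$, or an application $\skel_1 \skelapp \skel_2$), all typed under the same $\envvar$. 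Dually, each initial skeleton carries exactly one E-variable $\subv$ on top of its core, i.e.\ has the form $\skels \subv \settv {\skel_{\mathsf c}}$, and I will realise the whole prefix $\mathcal E$ by substituting this single $\subv$ with an expansion term $\Ivar_{\mathcal E}$ reconstructing $\mathcal E$.

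Concretely, $\svar$ is built from two parts. The first is a type-variable part sending each type variable $\apply \xtoa y$ recorded in $\xtoa$ to $\apply \envvar y$, so that, by the substitution rule for $\styp \subv \settv \typ$, the environment $\xtoa$ carried at every leaf of $\skel$ becomes $\envvar$ and the recorded set $\ftv \xtoa$ becomes $\ftv \envvar$. The second part assigns the top E-variable: the prefix $\mathcal E$ is translated into $\Ivar_{\mathcal E}$ by mapping a \TRforall over $\tvar$ to $\iquant \tvar {\Ivar'}$, a \TRsintro introducing $\subv'$ to $\iapp {\subv'}{\settv''}{\Ivar'}$, and a \TRsub with target $\typ'$ to $\isub {\Ivar'}{\typ'}$, where $\Ivar'$ is the translation of the remaining prefix and $\idi$ translates the empty prefix. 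The set arguments $\settv''$ are chosen so that the reconstructed type matches $\typ$; verifying this uses the expansion-application rules of Figure \ref{f:exp} on E-variables together with the fact that the ambient environment stays $\envvar$ throughout the prefix and every E-variable of $\mathcal E$ records a superset of $\ftv \envvar$. In the base case $\term = \expvar$, where $\skel = \skels \subv {\ftv \xtoa}{\skelvar \expvar \xtoa}$, taking $\svar$ to be the type-variable part extended by $\assign \subv {\Ivar_{\mathcal E}}$ and invoking Lemma \ref{l:iinst-preserves-typings} gives $\jtyp {\subst \svar \skel} \expvar \envvar \typ \cons$, so here $\cons' = \consomega$.

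For $\term = \lamb \expvar {\term_0}$ and $\term = \term_1 \app \term_2$ I apply the induction hypothesis to the immediate sub-skeletons of $\skel$ against the corresponding sub-skeletons of the core of $\skel^1$; the environments of the recursive calls have matching supports, since $\support \xtoa = \support \envvar$ (extended on both sides by the bound variable in the abstraction case), and each call returns a substitution supported exactly on the variables of that sub-skeleton. In the application case the two returned substitutions must be fused into one; the side conditions $(\allvar{\skel_1} \cap \allvar{\skel_2}) \setminus \ftv \xtoa = \emptyset$ and $\{\tvar, \subv\} \cap (\allvar{\skel_1} \cup \allvar{\skel_2}) = \emptyset$ guarantee the fusion is well defined, because the two substitutions can only overlap on $\ftv \xtoa$, where they must agree (the conclusion of each call forces $\subst {\svar_i} \xtoa = \envvar$). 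I then extend the fused map with $\assign \tvar {\typ_2'}$, where $\fun{\typ_1'}{\typ_2'}$ is the arrow type the core of $\skel^1$ assigns to $\term_1$, so that the forced coercion $\rtype{\skel_1} \consleq \fun{\rtype{\skel_2}}{\tvar}$ becomes, after substitution, the reflexive constraint $\fun{\typ_1'}{\typ_2'} \consleq \fun{\typ_1'}{\typ_2'}$, and with $\assign \subv {\Ivar_{\mathcal E}}$ for the top E-variable. Soundness of $\subst \svar \skel$ then follows from Theorem \ref{l:exp-preserves-typings} and Lemma \ref{l:iinst-preserves-typings}; the reflexive constraints accumulated at the application nodes form $\cons'$, while the rest reproduces $\cons$.

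The main obstacle is the application case together with the set bookkeeping. I must check that fusing $\svar_1$ and $\svar_2$ is consistent with the first-match reading of substitutions (which holds precisely because they agree on their overlap $\ftv \xtoa$), that $\support \svar = \allvar \skel$ throughout (using the support guarantees from the induction hypotheses and the freshness of $\tvar$ and $\subv$), and---most delicately---that the chosen set arguments in $\Ivar_{\mathcal E}$ make the reconstructed result type equal $\typ$, including the $\settv$ annotations decorating the E-variables of $\typ$; this is exactly the point where the expansion-application rules of Figure \ref{f:exp} on E-variables must be tracked carefully. Separating the genuine constraints of $\cons$ from the reflexive remainder $\cons'$ (the coercions forced by $\sktsub{\Ivar'}{\fun{\rtype{\skel_2}}{\tvar}}$ at each application node, which survive substitution but are rendered reflexive by $\assign \tvar {\typ_2'}$) is then routine.
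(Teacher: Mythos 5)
Your proof is correct and rests on the same two key ideas as the paper's: map each $\apply \xtoa y$ to $\apply \envvar y$, assign the fresh $\tvar$ at each application node the codomain type so that the forced coercion becomes reflexive, and realise every non-syntactic rule of $\skel^1$ by enriching the expansion term assigned to the corresponding E-variable of $\skel$ ($\iquant \tvar {}$ for \TRforall, $\iapp {\subv'}{\settv''}{}$ for \TRsintro, $\isub {}{\typ'}$ for \TRsub, $\idi$ for the empty prefix). The organisation differs, though: the paper inducts on $\skel^1$ itself, so each non-syntactic rule is its own induction case that merely rewraps the current $\Ivar = \subst \svar \subv$ with one constructor, and the verification that the reconstructed type, constraint, and $\settv$ annotations match is a one-step unfolding of Fig.~\ref{f:exp} per case. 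You instead induct on the $\jinit$ derivation (i.e., on $\term$), first decomposing $\skel^1$ into a non-syntactic prefix $\mathcal E$ over a syntactic core --- a correct observation, since \TRforall, \TRsintro, \TRsub preserve both term and environment --- and then translating $\mathcal E$ wholesale into $\Ivar_{\mathcal E}$. This buys a cleaner induction skeleton (three cases, matching the three $\jinit$ rules) and makes you more explicit than the paper about why fusing $\svar_1$ and $\svar_2$ at application nodes is consistent (both are forced to send $\xtoa$ to $\envvar$ on their overlap). The cost is that the claim $\Inst {\Ivar_{\mathcal E}} \settv {\typ_{\mathrm{core}}} = \typ$ and $\Instc {\Ivar_{\mathcal E}} \settv {\typ_{\mathrm{core}}}{\cons_{\mathrm{core}}} = \cons$ now needs its own inner induction on the length of $\mathcal E$, with careful threading of the parameter set through nested $\iapp {}{}{}$ and $\iquant {}{}$ applications; you flag this as the delicate point but do not discharge it, whereas the paper amortises exactly this verification over its induction steps. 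I see no gap that would make the argument fail --- only that this inner verification should be written out to match the paper's level of rigour.
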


\begin{proof}
  By induction on $\skel^1$.

  If $\skel^1 = \skelvar \expvar \envvar$, then we have $\jtyp {\skelvar \expvar
    \envvar} \expvar \envvar {\apply \envvar \expvar} \consomega$ and $\skel =
  \skels \subv {\{ \tvar \}}{\skelvar \expvar \tvar}$ for some $\subv$ and
  $\tvar$. Let $\svar$ be the substitution which substitutes $\apply \envvar
  {\expvar_i}$ for $\apply \xtoa {\expvar_i}$ for all $i$, and such that $\subst
  \svar \subv = \idi$. We have $\subst \svar
  \skel = \skelvar \expvar \envvar$, hence the result holds.\\

  \begin{sloppypar}
    If $\skel^1 = \skelabs \expvar {{\skel^1}'}$, then we have $\jtyp {\skelabs
      \expvar {{\skel^1}'}}{\lamb \expvar \term} \envvar {\fun{\typ_1}{\typ_2}}
    \cons$ with $\jtyp {{\skel^1}'} \term {\eextend \envvar {\eassign \expvar
        {\typ_1}}}{\typ_2} \cons$. We also have $\skel = \skels {\subv}{\ftv
      {\tenv {\skelabs \expvar {\skel'}}}}{\skelabs \expvar {\skel'}}$ with
    $\jinit {\xtoa \sepenv \eassign \expvar \tvar} \term {\skel'}$ and $\subv
    \notin \allvar {\skel'}$. By induction there exists $\svar$ such that
    $\jtyp{\subst \svar {\skel'}} \term {\envvar \sepenv \eassign \expvar
      {\typ_1}}{\typ_2}{(\cons \consinter \cons')}$ with $\cons'$ reflexive. Let
    $\svar' = (\svar \sepsubs \assign {\subv} \idi \sepsubs \idsubs)$. We have
    $\jtyp{\subst{\svar'}{\skel'}} \term {\envvar \sepenv \eassign \expvar
      {\typ_1}}{\typ_2}{(\cons \consinter \cons')}$, so by rule \TRabs, we have
    $\jtyp{\skelabs \expvar {\subst{\svar'}{\skel'}}} \term
    {\envvar}{\fun{\typ_1}{\typ_2}} {(\cons \consinter \cons')}$. We have
    $\subst {\svar'}{\skel}=
    \skelabs \expvar {\subst{\svar'}{\skel'}}$, hence the result holds.\\
  \end{sloppypar}

  If $\skel^1 = {\skel^1}_1 \skelapp {\skel^2}_2$, then we have $\jtyp
  {{\skel^1}_1 \skelapp {\skel^1}_2}{\term_1 \app \term_2}{\envvar}{\typ_2}
  \cons$ with $\jtyp {{\skel^1}_1}{\term_1}{\envvar}{\fun{\typ_1}{\typ_2}}
  {\cons_1}$ and $\jtyp {{\skel^1}_2}{\term_2}{\envvar}{\typ_1}{\cons_2}$. We
  also have $\skel = \skels {\subv}{\ftv {\tenv {\skel'}}}{\skel'}$ with $\skel'
  = \sktsub{\skel_1}{\fun{\rtype{\skel_2}}{\tvar}} \skelapp \skel_2$, $\skel_1$
  initial skeleton for $\term_1$ and $\skel_2$ initial skeleton for $\term_2$.
  By induction there exists $\svar_1$, $\svar_2$ such that
  $\jtyp{\subst{\svar_1} {\skel_1}}{\term_1}{\envvar}{\fun{\typ_1}{\typ_2}}
  {(\cons_1 \consinter \cons'_1)}$ and $\jtyp{\subst{\svar_2}
    {\skel_2}}{\term_2}{\envvar}{\typ_1} {(\cons_2 \consinter \cons'_2)}$ with
  $\cons'_1$, $\cons'_2$ reflexive. Let $\svar = (\svar_1 \sepsubs \svar_2
  \sepsubs \assign{\subv} \idi \sepsubs \assign \tvar {\typ_2} \sepsubs
  \idsubs)$.  We have $\jtyp{\subst \svar
    {\skel_1}}{\term_1}{\envvar}{\fun{\typ_1}{\typ_2}}{(\cons_1 \consinter
    \cons'_1)}$, $\jtyp{\subst \svar
    {\skel_2}}{\term_2}{\envvar}{\typ_1}{(\cons_2 \consinter \cons'_2)}$, and
  $\subst \svar
  {(\fun{\rtype{\skel_1}}{\tvar})}=\fun{\typ_1}{\typ_2}$. Consequently we have
  $\jtyp{\sktsub{\subst \svar {\skel_1}}{\subst \svar
      {(\fun{\rtype{\skel_1}}{\tvar})}} \skelapp \subst \svar {\skel_2}}{\term_1
    \app \term_2}{\envvar}{\typ_2}{\cons''}$ with $\cons'' = \cons_1 \consinter
  \cons_2 \consinter \cons'_1 \consinter \cons'_2 \consinter
  (\fun{\typ_1}{\typ_2} \consleq \fun{\typ_1}{\typ_2})$, and $\subst \svar \skel
  = \sktsub{\subst \svar {\skel_1}}{\subst \svar
    {(\fun{\rtype{\skel_1}}{\tv{j''+1}})}} \skelapp \subst \svar {\skel_2}$,
  hence the result holds.\\

  If $\skel^1 = \skelquant \tvar {{\skel^1}'}$, then we have $\jtyp {\skelquant
    \tvar {{\skel^1}'}} \term {\envvar}{\uquant \tvar \typ}{\consquant \tvar
    \cons}$ with $\jtyp {{\skel^1}'} \term \envvar \typ \cons$ and $\tvar \notin
  \ftv \envvar$. We have $\skel = \skels \subv \settv {\skel'}$ for some
  $\subv$, $\settv$, and $\skel'$. By induction there exists $\svar$ such that
  $\jtyp{ \subst{\svar} \skel} \term \envvar {\typ}{(\cons \consinter \cons')}$
  with $\cons'$ reflexive. Let $\Ivar = \subst \svar {\subv}$. Let $\svar'$ be
  the substitution equal to $\svar$ except on $\subv$, where $\subst
  {\svar'}{\subv} = \iquant \tvar \Ivar$. We have $\jtyp{\subst {\svar'} \skel}
  \term \envvar {\uquant \tvar \typ}{(\cons \consinter \consquant \tvar
    {\cons'})}$, hence the result holds.\\

  If $\skel^1 = \sktsub {({\skel^1}')}{\typ_2}$, then we have $\jtyp {\sktsub
    {({\skel^1}')} {\typ_2}} \term {\envvar}{\typ_2}{(\cons^1 \consinter \typ_1
    \consleq \typ_2)}$ with $\jtyp {{\skel^1}'} \term \envvar
  {\typ_1}{\cons^1}$. We have $\skel = \skels \subv \settv {\skel'}$ for some
  $\subv$, $\settv$, and $\skel'$. By induction there exists $\svar$ such that
  $\jtyp{\subst{\svar} \skel} \term \envvar {\typ_1}{(\cons^1 \consinter
    \cons'^1)}$ with $\cons'^1$ reflexive. Let $\Ivar = \subst \svar
  {\subv}$. Let $\svar'$ be the substitution equal to $\svar$ except on $\subv$,
  where $\subst {\svar'}{\subv} = \isub \Ivar {\typ_2}$. We have
  $\jtyp{\subst{\svar'} \skel} \term \envvar {\typ_2}{(\cons^1 \consinter
    (\typ_1 \consleq \typ_2)
    \consinter \cons'^1)}$, hence the result holds.\\

  If $\skel^1 = \skels \subv \settv {{\skel^1}'}$, then we have $\jtyp {\skels
    \subv \settv {{\skel^1}'}} \term {\envvar}{\styp \subv \settv \typ}
  {\conss \subv \settv \typ {\cons^1}}$ with $\jtyp {{\skel^1}'}
  \term \envvar {\typ}{\cons^1}$. We have $\skel = \skels
  {\subv'}{\settv'}{\skel'}$ for some $\skel'$, $\subv'$, and $\settv'$. By
  induction there exists $\svar$ such that $\jtyp{\subst{\svar} \skel} \term
  \envvar {\typ_1}{(\cons^1 \consinter \cons'^1)}$ with $\cons'^1$
  reflexive. Let $\Ivar = \subst \svar {\subv'}$. Let $\svar'$ be the
  substitution equal to $\svar$ except on $\subv'$, where $\subst
  {\svar'}{\subv'} = \iapp \subv {\settv \setminus \ftv{\subst
      \svar {\settv'}}} \Ivar$. We have $\jtyp{ \subst{\svar'} \skel} \term \envvar
  {\styp \subv \settv \typ}{(\conss \subv \settv \typ {\cons^1} \consinter 
    \conss \subv \settv \typ {\cons'^1})}$, hence the result holds.

\end{proof}

\begin{theorem}
  \label{t:generate-fs-from-init}
  Let $\skel$ such that $\jinitsk \term \skel$, and let $\skel^1$ be a relevant
  skeleton such that $\jtyp {\skel^1} \term \envvar \typ \cons$. There
  exists $\svar$ such that and $\jtyp{ \subst{\svar} \skel} \term \envvar \typ
  {(\cons \consinter \cons')}$ with $\cons'$ reflexive.

\end{theorem}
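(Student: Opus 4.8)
The plan is to obtain this theorem as an immediate corollary of Lemma \ref{l:generate-fs-from-init}, whose inductive proof already carries all the actual work. The statements differ only in that the lemma is phrased in terms of the auxiliary judgement $\jinit \xtoa \term \skel$ (which mentions a variable environment $\xtoa$), whereas the theorem uses the top-level judgement $\jinitsk \term \skel$. So the only thing to do is bridge these two formulations by matching the support of the environments.

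First I would unfold the definition of $\jinitsk \term \skel$. By the last inference rule of Fig.\ \ref{f:init-skel}, this judgement holds precisely when there is a variable environment $\xtoa$ with $\jinit \xtoa \term \skel$ and $\support \xtoa = \fv \term$; I would fix such an $\xtoa$. Next I would exploit the relevance hypothesis on $\skel^1$: by definition, $\jtyp {\skel^1} \term \envvar \typ \cons$ being relevant means $\support \envvar = \fv \term$. Chaining the two equalities gives $\support \envvar = \fv \term = \support \xtoa$, which is exactly the side condition $\support \envvar = \support \xtoa$ demanded by Lemma \ref{l:generate-fs-from-init}.

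Finally I would apply Lemma \ref{l:generate-fs-from-init} to the pair $\jinit \xtoa \term \skel$ and $\jtyp {\skel^1} \term \envvar \typ \cons$. The lemma yields a substitution $\svar$ (in fact with $\support \svar = \allvar \skel$, a refinement we do not need here) such that $\jtyp{\subst \svar \skel} \term \envvar \typ {(\cons \consinter \cons')}$ with $\cons'$ reflexive, which is precisely the conclusion sought.

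Because the genuine difficulty—constructing $\svar$ by induction on the structure of $\skel^1$, and in particular choosing the expansions to place at the E-variable, $\forall$-introduction, and subtyping positions of $\skel$ so that the only extra atomic constraints accumulated in $\cons'$ are reflexive ones of the form $\typ \consleq \typ$—has already been dispatched inside the lemma, the theorem itself poses no real obstacle. The single point meriting care is the observation that relevance is exactly the condition needed to align $\support \envvar$ with $\support \xtoa$; without it (as in the more general Theorem \ref{t:init-skel}) one cannot apply the lemma directly and must instead compose with a weakening $\skelweak{(\subst \svar \skel)}{\envvar'}$ to account for the surplus term variables appearing in $\envvar$.
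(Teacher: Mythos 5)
Your proposal is correct and matches the paper's own proof essentially verbatim: both unfold $\jinitsk \term \skel$ to obtain $\xtoa$ with $\support \xtoa = \fv \term$, use relevance of $\skel^1$ to get $\support \envvar = \fv \term = \support \xtoa$, and then conclude by Lemma \ref{l:generate-fs-from-init}. Nothing further is needed.
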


\begin{proof}
  If $\jinitsk \term \skel$, then there exists $\xtoa$ such that $\jinit \xtoa
  \term \skel$ and $\support \xtoa = \fv \term$. The skeleton $\skel^1$ is
  relevant so we have $\support \envvar = \fv \term$. Therefore we have
  $\support \xtoa = \support \envvar$, and we have the required result by Lemma
  \ref{l:generate-fs-from-init}.
\end{proof}


\end{document}